\newtheorem{remark}{Remark}
\newtheorem{lemma}{Lemma}
\newtheorem{proof}{Proof}
\newtheorem{definition}{Definition}
\begin{document}

\title{Matrix-Monotonic Optimization $-$ Part I: Single-Variable Optimization}

\author{Chengwen Xing, \textsl{Member}, \textsl{IEEE}, Shuai Wang, \textsl{Member}, \textsl{IEEE},
 Sheng Chen, \textsl{Fellow}, \textsl{IEEE}, Shaodan Ma, \textsl{Member}, \textsl{IEEE}, H. Vincent Poor, \textsl{Fellow}, \textsl{IEEE}, and  Lajos Hanzo, \textsl{Fellow}, \textsl{IEEE}
 \thanks{This work was supported in part by the U.S. National Science Foundation under Grants CCF-0939370 and CCF-1513915.}
 \thanks{L. Hanzo would like to thank the ERC for his Advanced Fellow Award and the EPSRC for their financial support.}
\thanks{C.~Xing and S.~Wang are with School of Information and Electronics, Beijing
 Institute of Technology, Beijing 100081, China (E-mails: xingchengwen@gmail.com
 swang@bit.edu.cn)} %
\thanks{S.~Chen and L.~Hanzo are with School of Electronics and Computer Science,
 University of Southampton, U.K. (E-mails: sqc@ecs.soton.ac.uk, lh@ecs.soton.ac.uk).
 S. Chen is also with King Abdulaziz University, Jeddah, Saudi Arabia.} %
\thanks{S.~Ma is with Department of Electrical and Computer Engineering, University
 of Macau, Macao (E-mail: shaodanma@umac.mo).} %
\thanks{H.~V.~Poor is with Department of Electrical Engineering, Princeton University,
 Princeton, NJ 08544 USA (E-mail: poor@princeton.edu).} %
\vspace*{-5mm}
}

\maketitle

\begin{abstract}
 Matrix-monotonic optimization exploits the monotonic nature of positive semi-definite matrices to derive optimal diagonalizable structures for the matrix variables of matrix-variable optimization problems. Based on the optimal structures derived, the associated optimization  problems can be substantially simplified and underlying physical insights can also be revealed. In our work, a comprehensive framework of the applications of matrix-monotonic optimization to multiple-input multiple-output (MIMO) transceiver design is provided for a series of specific performance metrics under various linear constraints. This framework consists of two parts, i.e., Part-I for single-variable optimization and Part-II for multi-variable optimization. In this paper, single-variable matrix-monotonic optimization is investigated under various power constraints and various types of channel state information (CSI) condition. Specifically, three cases are investigated: 1)~both the transmitter and receiver have imperfect CSI; 2)~ perfect
 CSI is available at the receiver but the transmitter has no CSI; 3)~perfect CSI is
 available at the receiver but the channel estimation error at the transmitter is norm-bounded.
 In all three cases, the matrix-monotonic optimization framework can be used for
 deriving the optimal structures of the optimal matrix variables.
\end{abstract}

\begin{IEEEkeywords}
 Matrix-monotonic optimization, majorization theory, optimal structures, transceiver
 optimization
\end{IEEEkeywords}

\section{Motivations}\label{sect:intro}

Antenna arrays are widely employed for improving the
 bandwidth- and/or the power-efficiency, resulting in the concept of
 multiple-input multiple-output (MIMO) system.
 \cite{JYang1994,Alamouti1998,Sampth01,Sampth03,
 Telatar1999,Scaglione2002,Palomar03,Feiten2007,Yadav2014}.
 Transceiver optimization is of critical importance for fulfilling the potential of
 MIMO communication systems \cite{Palomar03,Feiten2007,Yadav2014,Majorization}. MIMO transceiver
 optimization hinges on numerous factors, including their
 implementation issues, the availability of
 channel state information (CSI) and their system architectures. More specifically, MIMO transceivers can be classified into
 linear transceivers \cite{Yadav2014,Palomar03} and nonlinear transceivers
 \cite{Jiang2005,Weng2010TSP1,Liu2013TSP}. According to the different levels of CSI knowledge,
 MIMO transceiver designs can be classified into designs relying on perfect CSI
 \cite{Sampth01,Sampth03,Scaglione2002} and designs having partial
 CSI \cite{Zhang2008,JHWang2013,Ding09,Jafar2004,Jafar2005}. Finally, according to
 the system architecture, transceiver optimization can be used for point-to-point systems \cite{Majorization,Pastore2012}, for multi-user (MU) MIMO systems \cite{WYu2007},
 for distributed MIMO systems \cite{JFang2013,XingIET}, and
 for cooperative MIMO systems \cite{XingTSP2013,XingJSAC2012}.

 In all the above-mentioned multiple antenna aided systems, the corresponding optimization variables become matrix variables
 \cite{XingTSP201501}. As a result, optimization relying on matrix variables plays an
 important role in MIMO systems \cite{Jorswieck07}.
 Optimization relying on matrix variables is generally very challenging and such problems are much
 more difficult to solve than their counterparts with vector variables or scalar
 variables, because matrix variable based optimization usually involves complex matrix
 operations, such as the calculation of the determinants, inverses, matrix decompositions and so on.
 Furthermore, because of spatial multiplexing gains, MIMO systems are capable of supporting
 multiple data streams. This fact makes transceiver optimization problems inherently
 multi-objective optimization problems. For example, given  a limited transmit power,
 any specific transceiver optimization is a tradeoff between the performance of
 different data streams. This is the reason why there exists a rich body of work addressing
 various different MIMO transceiver designs \cite{Palomar03,Majorization}.

 Any transceiver optimization problem hinges on the  fundamental elements of the objective
 function and the specific optimization tools used for finding the extremities of the objective function. The more components the objective function has, the larger the search space becomes, which often makes a full hard utilization. A third related component is constituted by the constraints. The most widely used objective functions or performance metrics
  of MIMO transceiver optimization include the classic mean square error (MSE) minimization,
 signal to interference plus noise ratio (SINR) maximization or mutual information maximization,
 bit error rate (BER) minimization, etc, \cite{Palomar03}. Different performance metrics
 reflect different design preferences and different tradeoffs among the transmitted
 data streams \cite{Majorization}. Transceiver optimization problems using different performance metrics imposes different degrees of difficulty to solve. Furthermore, different objective functions also correspond
 to different implementation strategies resulting in, for example, linear transceivers, nonlinear transceivers using Tomlinson-Harashima precoding (THP) or decision feedback equalizer (DFE) etc.
 \cite{XingTSP201501,Weng2010TSP1,Weng2010TSP2,Liu2013TSP}. Suffice to say that the specific choice of the objective function has a more substantial impact on the overall MIMO design than that of the tools used for optimizing it.

 On the other hand, there are many different types of power constraints, such as the sum power
 constraint \cite{XingTSP201501}, per-antenna power constraint
 \cite{Mai2011,Christopoulos,Toli2008,Wu2016,Luo2008,Dong2013,XingTSP201601}, shaping constraint \cite{XingTSP201502,Palomar2004}, joint power constraints \cite{Dai2012}, cognitive
 constraint \cite{XingTSP201601}, etc. The most
 widely used power constraint is the sum power constraint requiring the sum
 of the powers at all the transmit antennas to be lower than a threshold. In communication
 systems, usually each antenna has its own amplifier \cite{WYu2007}. Therefore, the
 per-antenna power constraint is more practical than the sum power constraint. However,
 the per-antenna power constraint is more challenging to consider than the sum power
 constraint \cite{XingTSP201601,Christopoulos,Toli2008,Dong2013,WYu2007}. The existing literature has revealed that if different
 transmit antennas have the same statistics, the performance gain of considering the more
 challenging per-antenna power constraint based design over using the simpler sum power
 constraint design is negligible \cite{Luo2008}. Thus, under the scenario of similar statistics for different
 transmit antennas, the sum power constraint is an effective modeling technique. It is worth noting
 however that in some cases, as in distributed antenna systems or heterogeneous networks,
 different antennas have significantly different statistics, and thus the per-antenna
 power constraint cannot be replaced by the sum power constraint without a significant
 performance loss \cite{Luo2008,XingTSP201601}. Moreover, considering other practical constraints, such as signal
 variances or the peak-to-average-ratio, joint power constraints or other types of constraints
 have to be taken into account \cite{Dai2012}.

 It can be readily seen from the existing literature \cite{Majorization,XingTSP201501,XingTSP201601} that the underlying design principles
 for various transceiver optimization problems are almost the same. Generally, the main idea is
  taking advantage of the specific structure of the underlying optimization problem
 to simplify the transceiver optimization. Optimization theory plays an important role in
 MIMO transceiver optimization, and in the past decade many elegant results have been
 derived based on convex optimization theory \cite{XingIET,Jorswieck07}. Deriving
 optimal structures is critical in transceiver optimization
 \cite{Scaglione2002,Palomar03,JYang1994}. Clearly, a general-purpose
 optimal structure that can cover every MIMO transceiver optimization problems does not exist, and most the research has been focused on finding an optimal diagonalizable structure for MIMO transceiver
 optimization. This is because based on the optimal diagonalizable structures of the MIMO
 transceivers, the corresponding optimization problems can be substantially simplified and deeply
 underlying physical insights can also be revealed \cite{Scaglione2002,Palomar03,JYang1994}.

 Again, optimization variables of MIMO transceiver designs are generally matrix variables.
 Matrix-monotonic optimization exploits the monotonic nature of positive semi-definite matrices
 to derive optimal structures of the matrix variables in the underlying optimization
 problems \cite{XingTSP201501,XingTSP201502,XingTSP201601}. Based on matrix-monotonic
 optimization, the matrix variables can be substantially simplified into vector variables. The optimal
 structures delivered by matrix-monotonic optimization, therefore, greatly simplify
 complicated MIMO transceiver designs and make the underlying physical interpretation more
 transparent. From a matrix-monotonic optimization perspective, MIMO transceiver
 optimization problems relying on different objective functions and power constraints can be unified
 and, therefore, their associated optimal structures can be derived using the same
 matrix-monotonic optimization tool. Exploiting matrix-monotonic optimization
 is a powerful mathematical tool conceived for solving challenging matrix-variable
 transceiver optimization problems.

 This paper offers a comprehensive and novel work for matrix-monotonic optimization for a series of specific performance metrics under linear constraints in the context of MIMO
 transceiver optimization. Matrix monotonic optimization problem with various levels of CSI
 is investigated in depth.  Our main contributions are listed as follows.
\begin{itemize}
\item In contrast to~\cite{XingTSP201501} with only simple sum power constraint,
  the framework of matrix-monotonic
    optimization investigated in this treatise is subjected to diverse
    power constraints, including the sum power constraint, multiple
    weighted power constraints, joint power constraints and shaping
    constraints. In other words, the framework investigated in this
    paper subsumes the solutions in~\cite{XingTSP201501} and several
    other MIMO transceiver optimization solutions as its special cases.

 \item In contrast to~\cite{Palomar03} and \cite{Jiang2005}, where the
   linear and nonlinear transceiver designs are investigated
   separately under only the sum power constraint, the framework
   proposed in this paper unifies the families of linear and nonlinear
   MIMO transceiver optimization under the sum power constraint,
   shaping constraint, joint power constraints and multiple weighted
   power constraints.

\item Moreover, robust MIMO transceiver optimization relying on
  partial CSI under various power constraints is investigated based on
  the matrix-monotonic optimization framework. Specifically, the
  following three cases are investigated:
 \begin{enumerate}
 \item[1)] Both the transmitter and receiver have only imperfect CSI,
 \item[2)] The receiver has perfect CSI but the transmitter has only channel statistics,
 \item[3)] The receiver has perfect CSI but the channel estimate available at the transmitter is subject to a certain uncertainty norm-bounded error.
 \end{enumerate}
 Although having imperfect CSI makes the MIMO transceiver optimization more complex and
 challenging, the proposed matrix-monotonic optimization framework is still capable of
 deriving the underlying optimal structures.
\end{itemize}

 The remainder of this paper is organized as follows. In Section~\ref{S2}, we present the
 fundamentals of the matrix-monotonic optimization framework. Then Section~\ref{S3}
 investigates classic Bayesian robust matrix-monotonic optimization for robust transceiver
 design when the channel estimation errors are Gaussian distributed. In Section~\ref{S4},
 stochastic robust matrix-monotonic optimization is investigated for MIMO transceiver
 optimization where the receiver has perfect CSI but the transmitter knows only the channel
 statistics. Section~\ref{S5} is devoted to  worst case matrix-monotonic optimization,
 which focuses on transceiver optimization in the face of  norm-bounded channel estimation errors.

\noindent $\textbf{Notation:}$ The following notational conventions are adopted throughout our discussions. The normal-faced
 letters denote scalars, while bold-faced lower-case and upper-case letters denote vectors
 and matrices, respectively. $\bm{Z}^{\rm H}$, ${\rm Tr}(\bm{Z})$ and $|\bm{Z}|$ denote
 the Hermitian transpose, trace and determinant of complex matrix $\bm{Z}$, respectively.
 Statistical expectation is denoted by $\mathbb{E}\{\cdot \}$, and $a^+=
 \max\{0,a\}$, while $(\cdot )^{\rm T}$ denotes the vector/matrix transpose operator.
 $\bm{Z}^{\frac{1}{2}}$ is the Hermitian square root of $\bm{Z}$ which is positive
 semi-definite. The $i$th largest eigenvalue of $\bm{Z}$ is denoted by $\lambda_i(\bm{Z})$,
 and the $i$th-row and $j$th-column element of $\bm{Z}$ is denoted by $[\bm{Z}]_{i,j}$,
 while $\bm{d}[\bm{Z}]$ denotes the vector consisting of the diagonal elements of $\bm{Z}$
 and ${\rm diag}\big\{\{\bm{A}_k\}_{k=1}^K\big\}$ denotes the block diagonal matrix
 whose diagonal sub-matrices are $\bm{A}_1,\cdots ,\bm{A}_K$. The symbol $\bm{d}^2[\bm{Z}]$ denotes the vector consisting of the squared moduli of  the diagonal elements of $\bm{Z}$.
 Additionally, the $i$th element of a vector $\bm{z}$ is denoted by $[\bm{z}]_i$. The
 identity matrix of appropriate dimension is denoted by $\bm{I}$, and $\otimes$ is the
 Kronecker product. In this paper, $\bm{\Lambda}$ always denotes a diagonal matrix, and the
 expressions $\bm{\Lambda}\searrow$ and $\bm{\Lambda}\nearrow $ represent a rectangular or
 square diagonal matrix with the diagonal elements in descending order and ascending order,
 respectively.

\section{Fundamentals of Matrix-Monotonic Optimization}\label{S2}

An optimization problem with a real-valued objective function $f_0( \cdot )$ that depends on a complex matrix variable $\bm{X}$ is generally
 formulated as
\begin{align}\label{general_optimization} 
 \begin{array}{cl}
 \min\limits_{\bm{X}\in {\mathcal C}} &  f_0(\bm{X}) , \\
 \rm{s.t.} & \psi_i(\bm{X})\le 0 , 1\le i\le I ,
 \end{array}
\end{align}
 where $\psi_i(\cdot )$,
 $1\le i\le I$, are the constraint functions and ${\mathcal{C}}$ denotes the complex matrix set. A wide range of optimization problems
 can be cast in this optimization framework, including the classic MIMO transceiver
 optimization \cite{Majorization}, training designs \cite{XingTSP201501}, MIMO radar waveform optimization \cite{XingTSP201501}, etc. In order
 to analyze the properties of this generic optimization problem, we first discuss
 two of its basic components, namely, the objective function and the constraints,
 separately.

 \renewcommand{\arraystretch}{1.5}
 \begin{table}[tp!]
\vspace*{-2mm}
\caption {The objective functions}
\vspace*{-5mm}
\begin{center}
\scalebox{1.2}{
\begin{tabular}{||l|l||}
\hline
 Index & Objective function  \\ \hline
\hspace*{-2mm} \textbf{Obj.\,1} & \hspace*{-3mm} $-\log\big|{\bm{X}}^{\rm H}\bm{\Pi}{\bm{X}}
  +\bm{\Phi}\big|$  \\ \hline
\hspace*{-2mm} \textbf{Obj.\,2} &\hspace*{-3mm} ${\rm Tr}\left(\left({\bm{X}}^{\rm H}\bm{\Pi}{\bm{X}}+\bm{\Phi}\right)^{-1}\right)$
  \\ \hline
\hspace*{-2mm} \textbf{Obj.\,3} & \hspace*{-3mm} ${\rm Tr}\left(\bm{A}^{\rm H}\left({\bm{X}}^{\rm H}\bm{\Pi}{\bm{X}}+\alpha\bm{I}\right)^{-1}\bm{A}\right)$  \\ \hline
\hspace*{-2mm} \textbf{Obj.\,4} &\hspace*{-3mm} $\log\big|\bm{A}^{\rm H}\left({\bm{X}}^{\rm H}\bm{\Pi}
  {\bm{X}}+\alpha\bm{I}\right)^{-1}\bm{A}+\bm{\Phi}\big|$  \\ \hline
\hspace*{-2mm} \textbf{Obj.\,5.1} & \hspace*{-3mm} $f_{\text{A-Schur}}^{\text{Convex}} \left(\bm{d}\big[\left({\bm{X}}^{\rm H}
  \bm{\Pi}{\bm{X}}+\alpha\bm{I}\right)^{-1}\big]\right)$  \\ \hline
\hspace*{-2mm} \textbf{Obj.\,5.2} & \hspace*{-3mm} $f_{\text{A-Schur}}^{\text{Concave}}\left(\bm{d}\big[\left({\bm{X}}^{\rm H}
  \bm{\Pi}{\bm{X}}+\alpha\bm{I}\right)^{-1}\big]\right)$ \\ \hline
\hspace*{-2mm} \textbf{Obj.\,6.1} & \hspace*{-3mm} $f_{\text{M-Schur}}^{\text{Convex}} \left(\bm{d}^2[\bm{L}]\right)\hspace*{-1mm},
  \left({\bm{X}}^{\rm H}\bm{\Pi}{\bm{X}}+\alpha\bm{I}\right)^{-1}\hspace*{-1mm}
  =\hspace*{-1mm}\bm{L}\bm{L}^{\rm H}$ \\ \hline
\hspace*{-2mm} \textbf{Obj.\,6.2} & \hspace*{-3mm} $f_{\text{M-Schur}}^{\text{Concave}}\left(\bm{d}^2[\bm{L}]\right)\hspace*{-1mm},
  \left({\bm{X}}^{\rm H}\bm{\Pi}{\bm{X}}+\alpha\bm{I}\right)^{-1}\hspace*{-1mm}
  =\hspace*{-1mm}\bm{L}\bm{L}^{\rm H}$ \\ \hline
\hspace*{-2mm} \textbf{Obj.\,7} & \hspace*{-3mm} $-\log\big|\bm{A}^{\rm H}{\bm{X}}^{\rm H}\bm{\Pi}
  {\bm{X}}\bm{A}+\bm{\Phi}\big|$ \\ \hline
\hspace*{-2mm} \textbf{Obj.\,8} &\hspace*{-3mm} ${\rm Tr}\left(\left(\bm{A}^{\rm H}{\bm{X}}^{\rm H}\bm{\Pi}
  {\bm{X}}\bm{A}+\alpha\bm{I}\right)^{-1}\right)$  \\ \hline
\hspace*{-2mm} \textbf{Obj.\,9} & \hspace*{-3mm} ${\rm Tr}\left(\bm{A}^{\rm H}\left({\bm{X}}^{\rm H}\bm{\Pi}
  {\bm{X}}+\bm{\Phi}\right)^{-1}\bm{A}\right)$  \\ \hline
\hspace*{-2mm} \textbf{Obj.\,10} & \hspace*{-3mm} $-\log\big|\bm{\Phi}\otimes \bm{\Sigma}_1+\big({\bm{X}}^{\rm H}
  \bm{\Pi}{\bm{X}}\big)\otimes \bm{\Sigma}_3\big|$ \\ \hline
\hspace*{-2mm} \textbf{Obj.\,11} & \hspace*{-3mm} $-\log\big| \bm{\Sigma}_1\otimes \bm{\Phi}+\bm{\Sigma}_2\otimes \big(
  {\bm{X}}^{\rm H}\bm{\Pi}{\bm{X}}\big) \big|$  \\ \hline
\hspace*{-2mm} \textbf{Obj.\,12} & \hspace*{-3mm} ${\rm Tr}\left(\left(\bm{\Phi}\otimes \bm{\Sigma}_1+\big({\bm{X}}^{\rm H}\bm{\Pi}{\bm{X}}\big)\otimes \bm{\Sigma}_2\right)^{-1} \right)$ \\ \hline
\hspace*{-2mm} \textbf{Obj.\,13} & \hspace*{-3mm} ${\rm Tr}\left(\left(\bm{\Sigma}_1\otimes \bm{\Phi}+\bm{\Sigma}_2\otimes
  \big({\bm{X}}^{\rm H}\bm{\Pi}{\bm{X}}\big) \right)^{-1} \right)$ \\ \hline
\hspace*{-2mm} \textbf{Obj.\,14} &  \hspace*{-3mm} ${\rm Tr}\left(\hspace*{-1mm}\big(\bm{A}\bm{A}^{\rm H}\big)\hspace*{-1mm}\otimes\hspace*{-1mm} \bm{\Sigma}_1\left(\bm{I}\hspace*{-1mm}
  +\hspace*{-1mm}\big({\bm{X}}^{\rm H}\bm{\Pi}{\bm{X}}\big)\otimes
  \bm{\Sigma}_2 \right)^{-1}\right)$  \\ \hline
\hspace*{-2mm} \textbf{Obj.\,15} & \hspace*{-3mm} ${\rm Tr}\left(\hspace*{-1mm}\bm{\Sigma}_1\hspace*{-1mm}\otimes \hspace*{-1mm} \big(\bm{A}\bm{A}^{\rm H}\big)\left(\bm{I}\hspace*{-1mm}
  +\hspace*{-1mm}\bm{\Sigma}_2\otimes \big({\bm{X}}^{\rm H}\bm{\Pi}{\bm{X}}\big)
  \right)^{-1} \right)$ \\ \hline
 \end{tabular}
}
\end{center}
\label{tab0}
\vspace*{-7mm}
\end{table}

\renewcommand{\arraystretch}{1.5}
\begin{table*}[tp!]
\vspace*{-2mm}
\caption {The objective functions and the optimal unitary matrices $\bm{Q}_{\bm{X}}$}
\vspace*{-5mm}
\begin{center}
\scalebox{1.2}{
\begin{tabular}{||l|l|l||}
\hline
 Index & Objective function & Optimum $\bm{Q}_{\bm{X}}$ \\ \hline\hline
 \textbf{Obj.\,1} & $-\log\big|\bm{Q}_{\bm{X}}^{\rm H}\bm{F}^{\rm H}\bm{\Pi}\bm{F}\bm{Q}_{\bm{X}}
  +\bm{\Phi}\big|$ & $\bm{U}_{\bm{F}\bm{\Pi}\bm{F}} \bar{\bm{U}}_{\bm{\Phi}}^{\rm H}$ \\ \hline
 \textbf{Obj.\,2} & ${\rm Tr}\left(\left(\bm{Q}_{\bm{X}}^{\rm H}\bm{F}^{\rm H}\bm{\Pi}\bm{F}
  \bm{Q}_{\bm{X}}+\bm{\Phi}\right)^{-1}\right)$ & $\bm{U}_{\bm{F}\bm{\Pi}\bm{F}}\bar{\bm{U}}_{\bm{\Phi}}^{\rm H}$
  \\ \hline
 \textbf{Obj.\,3} & ${\rm Tr}\left(\bm{A}^{\rm H}\left(\bm{Q}_{\bm{X}}^{\rm H}\bm{F}^{\rm H}\bm{\Pi}\bm{F}
  \bm{Q}_{\bm{X}}+\alpha\bm{I}\right)^{-1}\bm{A}\right)$ & $\bm{U}_{\bm{F}\bm{\Pi}\bm{F}}
  \bm{U}_{\bm{A}}^{\rm H}$ \\ \hline
 \textbf{Obj.\,4} & $\log\big|\bm{A}^{\rm H}\left(\bm{Q}_{\bm{X}}^{\rm H}\bm{F}^{\rm H}\bm{\Pi}\bm{F}
  \bm{Q}_{\bm{X}}+\alpha\bm{I}\right)^{-1}\bm{A}+\bm{\Phi}\big|$ & $\bm{U}_{\bm{F}\bm{\Pi}\bm{F}}
  \bm{U}_{\bm{A}\bm{\Phi}\bm{A}}^{\rm H}$ \\ \hline
 \textbf{Obj.\,5.1} & $f_{\text{A-Schur}}^{\text{Convex}} \left(\bm{d}\big[\left(\bm{Q}_{\bm{X}}^{\rm H}
  \bm{F}^{\rm H}\bm{\Pi}\bm{F}\bm{Q}_{\bm{X}}+\alpha\bm{I}\right)^{-1}\big]\right)$ &
  $\bm{U}_{\bm{F}\bm{\Pi}\bm{F}}\bm{U}_{\text{DFT}}^{\rm H}$ \\ \hline
 \textbf{Obj.\,5.2} & $f_{\text{A-Schur}}^{\text{Concave}}\left(\bm{d}\big[\left(\bm{Q}_{\bm{X}}^{\rm H}
  \bm{F}^{\rm H}\bm{\Pi}\bm{F}\bm{Q}_{\bm{X}}+\alpha\bm{I}\right)^{-1}\big]\right)$ &
  $\bm{U}_{\bm{F}\bm{\Pi}\bm{F}}$ \\ \hline
 \textbf{Obj.\,6.1} & $f_{\text{M-Schur}}^{\text{Convex}} \left(\bm{d}^2[\bm{L}]\right) \ {\rm with} \
  \left(\bm{Q}_{\bm{X}}^{\rm H}\bm{F}^{\rm H}\bm{\Pi}\bm{F}\bm{Q}_{\bm{X}}+\alpha\bm{I}\right)^{-1}
  =\bm{L}\bm{L}^{\rm H}$ & $\bm{U}_{\bm{F}\bm{\Pi}\bm{F}}\bm{U}_{\text{GMD}}^{\rm H}$ \\ \hline
 \textbf{Obj.\,6.2} & $f_{\text{M-Schur}}^{\text{Concave}}\left(\bm{d}^2[\bm{L}]\right) \ {\rm with} \
  \left(\bm{Q}_{\bm{X}}^{\rm H}\bm{F}^{\rm H}\bm{\Pi}\bm{F}\bm{Q}_{\bm{X}}+\alpha\bm{I}\right)^{-1}
  =\bm{L}\bm{L}^{\rm H}$ & $\bm{U}_{\bm{F}\bm{\Pi}\bm{F}}$ \\ \hline
 \textbf{Obj.\,7} & $-\log\big|\bm{A}^{\rm H}\bm{Q}_{\bm{X}}^{\rm H}\bm{F}^{\rm H}\bm{\Pi}\bm{F}
  \bm{Q}_{\bm{X}}\bm{A}+\bm{\Phi}\big|$ & $\bm{U}_{\bm{F}\bm{\Pi}\bm{F}}\bm{U}_{\bm{A}\bm{\Phi}\bm{A}}^{\rm H}$ \\ \hline
 \textbf{Obj.\,8} & ${\rm Tr}\left(\left(\bm{A}^{\rm H}\bm{Q}_{\bm{X}}^{\rm H}\bm{F}^{\rm H}\bm{\Pi}
  \bm{F}\bm{Q}_{\bm{X}}\bm{A}+\alpha\bm{I}\right)^{-1}\right)$ & $\bm{U}_{\bm{F}\bm{\Pi}\bm{F}}
  \bm{U}_{\bm{A}}^{\rm H} \ \text{(High SNR)}$ \\ \hline
 \textbf{Obj.\,9} & ${\rm Tr}\left(\bm{A}^{\rm H}\left(\bm{Q}_{\bm{X}}^{\rm H}\bm{F}^{\rm H}\bm{\Pi}
  \bm{F}\bm{Q}_{\bm{X}}+\bm{\Phi}\right)^{-1}\bm{A}\right)$ & $\bm{U}_{\bm{F}\bm{\Pi}\bm{F}}
  \bm{U}_{\bm{A}}^{\rm H} \ \text{(High SNR)}$ \\ \hline
 \textbf{Obj.\,10} & $-\log\big|\bm{\Phi}\otimes \bm{\Sigma}_1+\big(\bm{Q}_{\bm{X}}^{\rm H}
  \bm{F}^{\rm H}\bm{\Pi}\bm{F}\bm{Q}_{\bm{X}}\big)\otimes \bm{\Sigma}_2\big|$ &
  $\bm{U}_{\bm{F}\bm{\Pi}\bm{F}}\bar{\bm{U}}_{\bm{\Phi}}^{\rm H}$ \\ \hline
 \textbf{Obj.\,11} & $-\log\big| \bm{\Sigma}_1\otimes \bm{\Phi}+\bm{\Sigma}_2\otimes \big(
  \bm{Q}_{\bm{X}}^{\rm H}\bm{F}^{\rm H}\bm{\Pi}\bm{F}\bm{Q}_{\bm{X}}\big) \big|$ &
  $\bm{U}_{\bm{F}\bm{\Pi}\bm{F}}\bar{\bm{U}}_{\bm{\Phi}}^{\rm H}$ \\ \hline
 \textbf{Obj.\,12} & ${\rm Tr}\left(\left(\bm{\Phi}\otimes \bm{\Sigma}_1+\big(\bm{Q}_{\bm{X}}^{\rm H}
  \bm{F}^{\rm H}\bm{\Pi}\bm{F}\bm{Q}_{\bm{X}}\big)\otimes \bm{\Sigma}_2\right)^{-1} \right)$
  & $\bm{U}_{\bm{F}\bm{\Pi}\bm{F}}\bar{\bm{U}}_{\bm{\Phi}}^{\rm H}$ \\ \hline
 \textbf{Obj.\,13} & ${\rm Tr}\left(\left(\bm{\Sigma}_1\otimes \bm{\Phi}+\bm{\Sigma}_2\otimes
  \big(\bm{Q}_{\bm{X}}^{\rm H}\bm{F}^{\rm H}\bm{\Pi}\bm{F}\bm{Q}_{\bm{X}}\big) \right)^{-1} \right)$
  & $\bm{U}_{\bm{F}\bm{\Pi}\bm{F}}\bar{\bm{U}}_{\bm{\Phi}}^{\rm H}$ \\ \hline
 \textbf{Obj.\,14} & ${\rm Tr}\left(\big(\bm{A}\bm{A}^{\rm H}\big)\otimes \bm{\Sigma}_1\left(\bm{I}
  +\big(\bm{Q}_{\bm{X}}^{\rm H}\bm{F}^{\rm H}\bm{\Pi}\bm{F}\bm{Q}_{\bm{X}}\big)\otimes
  \bm{\Sigma}_2 \right)^{-1} \right)$ & $\bm{U}_{\bm{F}\bm{\Pi}\bm{F}}\bm{U}_{\bm{A}}^{\rm H}$ \\ \hline
 \textbf{Obj.\,15} & ${\rm Tr}\left(\bm{\Sigma}_1\otimes \big(\bm{A}\bm{A}^{\rm H}\big)\left(\bm{I}
  +\bm{\Sigma}_2\otimes \big(\bm{Q}_{\bm{X}}^{\rm H}\bm{F}^{\rm H}\bm{\Pi}\bm{F}\bm{Q}_{\bm{X}}\big)
  \right)^{-1} \right)$ & $\bm{U}_{\bm{F}\bm{\Pi}\bm{F}}\bm{U}_{\bm{A}}^{\rm H}$ \\ \hline
 \end{tabular}
}
\end{center}
\label{tab1}
\vspace*{-7mm}
\end{table*}

\subsection{Objective Functions}\label{S2.1}

 The objective function reflects the cost or utility of the optimization problem.
 In this paper, all the optimization problems discussed are formulated  with the
 objective of minimizing a cost function. Let us now discuss the commonly used
 objective functions, listed in  Table~\ref{tab0}. For transceiver optimization, the mutual information is one of the most important performance
 metrics. For training optimization, the mutual information is also an important
 performance metric as it reflects the correlation between the estimated parameters
 and the true parameters. In these cases, the objective function is given by
 \textbf{Obj.\,1} \cite{NewTSP}, where $\bm{\Pi}$ and $\bm{\Phi}$ are constant
 positive semi-definite matrices which have different physical meanings for different
 systems. The MSE is another important performance metric for transceiver or training
 optimization, which reflects how accurately a signal can be recovered rather than
 how much information can be transmitted. For the optimization problem of sum MSE
 minimization, the objective function is given in the form of \textbf{Obj.\,2} \cite{NewTSP}.

 Generally, the MSE formulation for linear transceiver optimization is determined
 by the specific signal model considered. For example, in a dual-hop AF
 MIMO relaying network, the MSE minimization has \textbf{Obj.\,3} \cite{XingCL}, where $\alpha$ is
 a positive scalar and $\bm{A}$ is a constant complex matrix. Similarly, the mutual information
 maximization for a dual-hop AF MIMO relaying network aims at minimizing the objective
 function \textbf{Obj.\,4} \cite{XingCL}\footnote{This conclusion is achieved is based on the fact that maximizing mutual information is equivalent to minimizing the determinant of the MSE matrix \cite{XingTSP201501}}. For linear transceiver optimization, to realize different
 levels of fairness between different transmitted data streams, a general objective
 function can be formulated as an additively Schur-convex function \cite{Palomar03} or additively
 Schur-concave function \cite{Palomar03} of the diagonal elements of the MSE matrix, which are given
 by \textbf{Obj.\,5.1} and \textbf{Obj.\,5.2} \cite{Majorization}, respectively.
 The additively
 Schur-convex function $f_{\text{A-Schur}}^{\rm convex}(\cdot )$ and the additively
 Schur-concave function $f_{\text{A-Schur}}^{\rm concave}(\cdot )$ represent different
 levels of fairness among the diagonal elements of the data MSE matrix. In addition,  $f_{\text{A-Schur}}^{\rm convex}(\cdot )$ and $f_{\text{A-Schur}}^{\rm concave}(\cdot )$ are both increasing functions with respect to the vector variables.

 When nonlinear transceivers are chosen for improving the BER performance at the cost
 of increased complexity, e.g., THP or DFE, the objective functions of the transceiver
 optimization can be formulated as a multiplicative Schur-convex function or a
 multiplicative Schur-concave function of the vector consisting of the squared
 diagonal elements of the Cholesky-decomposition triangular matrix of the MSE matrix,
 that is, \textbf{Obj.\,6.1} and \textbf{Obj.\,6.2} \cite{XingTSP201501}, respectively,  where $\bm{L}$ is
 a lower triangular matrix. The multiplicatively Schur-convex function
 $f_{\text{M-Schur}}^{\rm convex}(\cdot )$ and the multiplicatively Schur-concave function
 $f_{\text{M-Schur}}^{\rm concave}(\cdot )$ reflect the different levels of fairness
 among the different data streams, i.e., different tradeoffs among the performance of
 different data steams \cite{XingTSP201501}. In addition,  $f_{\text{M-Schur}}^{\rm convex}(\cdot )$ and $f_{\text{M-Schur}}^{\rm concave}(\cdot )$ are both increasing functions with respect to the vector variables.

 In wireless communication designs, even for the same system or the same optimization
 problem, the mathematical formulae are not unique. More specifically, for the mutual
 information maximization, we have the alternative objective function \textbf{Obj.\,7} \cite{XingTSP201501}.
 Similarly, the sum MSE minimization has the alternative objective function
 \textbf{Obj.\,8} \cite{XingTSP201501}. Moreover, the weighted MSE minimization can be considered as a
 general extension of the sum MSE minimization by introducing a weighting matrix, which
 has the objective function \textbf{Obj.\,9}.

 As discussed in the existing literature, some MIMO system optimization problems may
 involve Kronecker products due to ${\rm{vec}}(\cdot)$ operations \cite{XingTSP201501}. The optimization problems relying on
 Kronecker product usually look very complicated. In this paper, the pair of
 optimization problems relying on either the matrix determinant or on the the matrix trace are discussed that
 involve Kronecker products. Based on \textbf{Obj.\,1}, we have the extended Kronecker
 structured objective function \textbf{Obj.\,10}, which is equivalent to
 \textbf{Obj.\,11} \cite{XingTSP201501}. It can readily be seen that with the choice of $\bm{\Sigma}_1=\bm{\Sigma}_2$,
 \textbf{Obj.\,10} and \textbf{Obj.\,11} are equivalent to \textbf{Obj.\,1}. In this
 paper, we also consider a more general case in which $\bm{\Sigma}_1$ and
 $\bm{\Sigma}_2$ have the same eigenvalue decomposition (EVD) unitary matrix. Under
 this assumption and based on \textbf{Obj.\,2}, we have the extended Kronecker
 structured objective function \textbf{Obj.\,12}, which is equivalent to \textbf{Obj.\,13}.
 Similarly, based on \textbf{Obj.\,3}, we have the objective function \textbf{Obj.\,14},
 which is also equivalent to \textbf{Obj.\,15}. In our following discussions involving
 \textbf{Obj.\,10} to \textbf{Obj.\,15}, it is always assumed that $\bm{\Sigma}_1$ and
 $\bm{\Sigma}_2$ have the same EVD unitary matrix.

\subsection{Constraint Functions}\label{S2.2}

 In practical communication system designs, typically the associated optimization
 problems have constraints, and these constraints have different physical meanings
 for different communication systems.

 The most natural constraints are the power constraints, since practical amplifiers have
  certain maximum transmit power thresholds. The simplest power constraint, is the sum power
 constraint which can be expressed as
\begin{align}\label{eq2}
 \textbf{Constraint 1:} \ \  {\rm Tr}\big(\bm{X}\bm{X}^{\rm H}\big)\le P .
\end{align}
 With the sum power constraint, the optimization problems associated with training
 sequence designs or transceiver designs are subjected to the constraint of the power
 sum of all the transmit antennas. In practical systems, each antenna has its own
 power amplifier and, therefore, the per-antenna power constraints or individual power
 constraints provide a more reasonable power constraint model, which is expressed as
\begin{align}\label{eq3}
 \textbf{Constraint 2:} \ \  \big[\bm{X}\bm{X}^{\rm H}\big]_{n,n}\le P_n, \ \ n=1,\cdots ,N,
\end{align}
 where we have assumed that the number of transmit antennas is $N$ and the matrix variable
 $\bm{X}$ has $N$ rows. The per-antenna power constraint (\ref{eq3}) may be more practical
 but it does not include the sum power constraint (\ref{eq2}) as its special case.

 In sophisticated communication networks, the constraints are not limited to reflect the maximum
 power constraints at the transmit antennas for the desired signal but they also reflect many
 other constraints such as the interference constraints between adjacent links. A more
 general power constraint is the following one having multiple weighted components \cite{NewTSP}
\begin{align}\label{eq4}
 \textbf{Constraint 3:} \ \   {\rm Tr}\big(\bm{\Omega}_i\bm{X}\bm{X}^{\rm H}\big) \le P_i,
  \ \ i=1,\cdots ,I ,
\end{align}
 where $I$ is the number of weighted power constraints. \textbf{Constraint 3} is more
 general than \textbf{Constraint 1} and \textbf{Constraint 2}. The constraint model
 (\ref{eq4}) includes the sum power constraint (\ref{eq2}) and  per-antenna power constraint
 (\ref{eq3}) as its special cases. Specifically, by choosing $I=1$ and $\bm{\Omega}_1=\bm{I}$,
 this power constraint model becomes the sum power constraint (\ref{eq2}). Furthermore,
 when $I=N$ and $\bm{\Omega}_i$ is the matrix whose $i$th diagonal element is one and
 all the other elements are zeros, this model is exactly the per-antenna power constraint
 (\ref{eq3}).

 In order to avoid or control the interference, it is
  expected to be cast to the null space of the desired signals, hence
  the signal and interference become orthogonal to each other. In order
  to achieve this, constraints can be imposed on the covariance
  matrix of the transmitted signal, which are referred to as spectral
  mask constraints~\cite{Palomar2004}.
A classic example is the shaping constraint, which is formulated as the following matrix inequality \cite{Palomar2004,XingTSP201502}
\begin{align}\label{eq5}
 \textbf{Constraint 4:} \ \  \bm{X}\bm{X}^{\rm H} \preceq \bm{R}_{\rm s} .
\end{align}
 From matrix inequality theory, this constraint is equivalent to \cite[471]{Horn1990}
\begin{align}\label{eq6}
 {\rm Tr}\big(\bm{\Omega}_i\bm{X}\bm{X}^{\rm H}\big) \le {\rm Tr}\big(\bm{\Omega}_i\bm{R}_{\rm s}\big) ,
\end{align}
 for any positive semi-definite matrix $\bm{\Omega}_i$. Based on this fact, we can argue
 that the shaping constraint represents a special case of the multiple weighted power constraint.
 A simplified version of \textbf{Constraint 4} is the constraints imposed on the eigenvalues of
 the covariance matrix $\bm{X}\bm{X}^{\rm H}$ formulated  as
\begin{align}\label{eq7}
 \textbf{Constraint 5:} \ \  \lambda_i\big(\bm{X}\bm{X}^{\rm H}\big)\le \tau_i .
\end{align}
 A widely used eigenvalue constraint is the constraint on the maximum eigenvalue,
 $\lambda_1\big(\bm{X}\bm{X}^{\rm H}\big)\! \le\! \tau_1$, which is equivalent to \cite{XingTSP201502}
\begin{align}\label{eq8}
 \bm{X}\bm{X}^{\rm H} \preceq \tau_1 \bm{I} .
\end{align}
 This constraint can be used together with the sum power constraint to limit the
 transmitter's peak power. This is because most of the existing power constraints are
 based on statistical averages, while from a practical implementation perspective,
 the power constraint is an instantaneous constraint instead of being an average one \cite{Dai2012}. This kind of
 combined power constraint is termed as the joint power constraint, which is expressed as \cite{XingTSP201502}
\begin{align}\label{eq9}
 \textbf{Constraint 6:}  \  {\rm Tr}\big(\bm{X}\bm{X}^{\rm H}\big) \le P , \bm{X}\bm{X}^{\rm H} \preceq \tau_1 \bm{I} .
\end{align}

In cognitive radio communications, the interference imposed by the secondary user on the primary user must
 be smaller than a threshold and this constraint can be written in the following form
\begin{align}\label{eq10}
 \textbf{Constraint 7:} \ \   {\rm Tr}\big(\bm{H}_{\rm c}\bm{X}\bm{X}^{\rm H}
  \bm{H}_{\rm c}^{\rm H}\big) \le \tau_{\rm C} ,
\end{align}where $\bm{H}_{\rm c}$ is the channel matrix between the secondary user and primary user, while $\tau_{\rm C}$ is the interference threshold. This kind of constraint is also a
 special case of \textbf{Constraint 3}.

In summary, all the power constraint models discussed
  above represent the different physical constraints on the covariance
  matrix of the transmit signal, which equals
  $\bm{X}\bm{X}^{\rm{H}}$. These constraints shape the positive
  semidefinite covariance matrix.  For the simplest sum power
  constraint model, the sum of the eigenvalues of the covariance
  matrix has to be smaller than a threshold. For the multiple weighted
  power constraint model, the eigenvalues of the covariance matrices
  are constrained in the polyhedron region constructed by the multiple
  weighting matrices. In this case, except for the restrictions on the
  eigenvalues, the constraints also restrict the unitary matrix in the
  eigenvalue decomposition of the covariance matrix. Moreover, for the
  joint power constraint model the sum of the eigenvalues and the
  maximum eigenvalue are simultaneously smaller than the predefined
  thresholds. The upper-bound on the maximum eigenvalue significantly
  impacts the power allocations on the eigenchannels. For example,
  some subchannels that are allocated zero power for the sum
  power constraint will be assigned non-zero powers for the
  joint power constraints.

\begin{figure*}[bp!]\setcounter{equation}{18}
\vspace*{-5mm}
\hrule
\begin{align}
 \textbf{Matrix Inequality 1:} \ & \sum\nolimits_{i=1}^N \lambda_{i-1+N}(\bm{C}) \lambda_i(\bm{D})\le
  {\rm Tr}(\bm{C}\bm{D})\le \sum\nolimits_{i=1}^N \lambda_i(\bm{C}) \lambda_i(\bm{D}) , \label{eq19} \\
 \textbf{Matrix Inequality 2:} \ & \sum\nolimits_{i=1}^N \big(\lambda_{i-1+N}(\bm{C})+\lambda_i(\bm{D})\big)^{-1}
  \le {\rm Tr}\big( (\bm{C}+\bm{D})^{-1} \big) \le \sum\nolimits_{i=1}^N \big(\lambda_i(\bm{C})+
  \lambda_i(\bm{D})\big)^{-1} , \label{eq20} \\
 \textbf{Matrix Inequality 3:} \ & \prod\nolimits_{i=1}^N \big(\lambda_i(\bm{C}) + \lambda_i(\bm{D})\big)
  \le |\bm{C}+\bm{D}|\le \prod\nolimits_{i=1}^N \big(\lambda_{i-1+N}(\bm{C})+ \lambda_i(\bm{D})\big) ,
  \label{eq21} \\
 \textbf{Matrix Inequality 4:} \ & \prod\nolimits_{i=1}^N \big(\lambda_{i-1+N}(\bm{C})\lambda_i(\bm{D})+1\big)
  \le |\bm{C}\bm{D}+\bm{I}| \le \prod\nolimits_{i=1}^N \big(\lambda_i(\bm{C})\lambda_i(\bm{D})+1\big) .
  \label{eq22}
\end{align}
\vspace*{-1mm}
\end{figure*}

Before turning attention to discuss the optimization problem (\ref{general_optimization}), two
fundamental definitions are first introduced.\setcounter{equation}{10}
\begin{definition}\label{D1}
 A constraint $\psi (\bm{X})\le 0$ is a left unitary invariant constraint if we have
\begin{align}\label{eq11}
 \psi\big(\bm{Q}_{\rm L}\bm{X}\big) = \psi (\bm{X}) ,
\end{align}
 where $\bm{Q}_{\rm L}$ is an arbitrary unitary matrix.
\end{definition}

\begin{definition}\label{D2}
 A constraint $\psi (\bm{X})\le 0$ is a right unitarily-invariant constraint if we have
\begin{align}\label{eq12}
 \psi\big(\bm{X}\bm{Q}_{\rm R}\big) = \psi (\bm{X}) ,
\end{align}
 where $\bm{Q}_{\rm R}$ is an arbitrary unitary matrix.
\end{definition}

It is worth noting that all the constraints discussed above are right unitarily-invariant. Specifically, in \textbf{Constraints 1} to \textbf{7}, after replacing $\bm{X}$ by $\bm{X}\bm{Q}_{\rm R}$ it can be concluded that these constraints do not change.
Therefore, we can focus our attention on the family of right unitarily-invariant constraints only. In particular, we will focus our attention on the shaping constraint, joint power constraints and multiple weighted power constraints.

\subsection{Matrix-Monotonic Optimization}\label{S2.3}

 Based on the above discussions, with the objective functions in Table~\ref{tab0}, the generic optimization problem of MIMO systems
 can be formulated as
\begin{align}\label{eq13}
 \textbf{Opt.\,1.1:} & \min\limits_{\bm{X}}   f\big(\bm{X}^{\rm H}\bm{\Pi}\bm{X}\big) ,
   {\rm s.t.}  \psi_i(\bm{X})\le 0 , \ 1\le i\le I .
\end{align} The function $f(\cdot)$ is matrix monotone decreasing function \cite{Jorswieck07,BarrySimon,Xingzhi_Zhang}.
 Since the constraints are right unitarily-invariant, we introduce the auxiliary matrix
 variable $\bm{F}$ and express the original matrix variable $\bm{X}$ as
\begin{align}\label{eq14}
 \bm{X} = \bm{F} \bm{Q}_{\bm{X}} ,
\end{align}
 where $\bm{Q}_{\bm{X}}$ is an arbitrary unitary matrix. Based on (\ref{eq14}), the
 optimization problem (\ref{eq13}) can be reformulated as
\begin{align}\label{eq15}
\begin{array}{cl}
 \min\limits_{\bm{F},\bm{Q}_{\bm{X}}} &  f\big(\bm{Q}_{\bm{X}}^{\rm H}\bm{F}^{\rm H}\bm{\Pi}
  \bm{F}\bm{Q}_{\bm{X}}\big) , \\
 {\rm s.t.} & \psi_j\big(\bm{F}\bm{Q}_{\bm{X}}\big) = \psi_i(\bm{F}\big)\le 0 , \ 1\le i\le I,
\end{array}
\end{align}where the specific objective functions are given in the left column of Table~\ref{tab1}.
 Note that the constraints do not depend on $\bm{Q}_{\bm{X}}$. Therefore, the
 optimal $\bm{Q}_{\bm{X}}$ is independent of the constraints.

\subsubsection{Optimization of ${\bm{Q}}_{\bm{X}}$}

 Generally, there are two basic approaches to optimize $\bm{Q}_{\bm{X}}$. The first one
 is based on the basic matrix inequality and the other is based on majorization theory.

\underline{\textbf{Basic Matrix Inequalities}}
 Typically, the extreme values of basic matrix operations e.g., trace, determinant, etc.,
 are functions of the eigenvalues of the matrices involved. Given the positive semi-definite
 matrices $\bm{C}\in \mathbb{C}^{N\times N}$ and $\bm{D}\in \mathbb{C}^{N\times N}$, we
 consider the following EVDs
\begin{align}
 \bm{C} = \bm{U}_{\bm{C}}\bm{\Lambda}_{\bm{C}}\bm{U}_{\bm{C}}^{\rm H} \ \ \text{with} \ \
  \bm{\Lambda}_{\bm{C}} \searrow , \label{eq16} \\
 \bm{D} = \bm{U}_{\bm{D}}\bm{\Lambda}_{\bm{D}}\bm{U}_{\bm{D}}^{\rm H} \ \ \text{with} \ \
  \bm{\Lambda}_{\bm{D}} \searrow , \label{eq17} \\
 \bm{D} = \bar{\bm{U}}_{\bm{D}}\bar{\bm{\Lambda}}_{\bm{D}}\bar{\bm{U}}_{\bm{D}}^{\rm H} \ \
  \text{with} \ \ \bar{\bm{\Lambda}}_{\bm{D}} \nearrow , \label{eq18}
\end{align}
 where $\bm{\Lambda}_{\bm{D}}$ and $\bar{\bm{\Lambda}}_{\bm{D}}$ consist of the eigenvalues
 of $\bm{D}$ arranged in descending order and ascending order, while $\bm{U}_{\bm{D}}$ and
 $\bar{\bm{U}}_{\bm{D}}$ contain the corresponding eigenvectors of $\bm{D}$, respectively.
 Then we have the four basic matrix inequalities, ranging from (\ref{eq19}) to (\ref{eq22}), shown at the
 bottom of this page. Furthermore, in both \textbf{Matrix Inequality 1} \cite[P340, P341]{Marshall79} and
 \textbf{Matrix Inequality 2} \cite[Appendix A]{XingTSP201501}, the left equality holds when $\bm{U}_{\bm{C}}=
 \bar{\bm{U}}_{\bm{D}}$, and the right equality holds when $\bm{U}_{\bm{C}}=\bm{U}_{\bm{D}}$;
 while in both \textbf{Matrix Inequality 3} \cite[P333, P334]{Marshall79} and \textbf{Matrix Inequality 4}, the left
 equality holds when $\bm{U}_{\bm{C}}=\bm{U}_{\bm{D}}$, and the right equality holds when
 $\bm{U}_{\bm{C}}=\bar{\bm{U}}_{\bm{D}}$ \cite{XingTSP201501}.

 \underline{\textbf{Majorization Theory}}
 Majorization theory constitutes an important branch of matrix equality theory \cite{Jorswieck07,Marshall79}.
 We have the following two important definitions.

\begin{definition}[\!\!\cite{Marshall79}]\label{D3}
 For two vectors $\bm{x},\bm{y}\in \mathbb{R}^N$, $\bm{x}$ is said to be majorized by $\bm{y}$,
 denoted as $\bm{x}\prec \bm{y}$, when the following inequalities are satisfied:
 $\bigcirc_{i=1}^k [\bm{x}]_i\le \bigcirc_{i=1}^k [\bm{y}]_i$, for $1\le k\le N-1$, and
 $\bigcirc_{i=1}^N [\bm{x}]_i=\bigcirc_{i=1}^N [\bm{y}]_i$, where $\bigcirc$ denotes a
  mathematical operator.
\end{definition}

 In the following, we only consider the addition and product operators of
 $\bigcirc =\sum$ and  $\bigcirc =\prod$.

\begin{definition}[\!\!\cite{Marshall79}]\label{D4}
 A real-valued function $\phi :\mathbb{R}^N \to \mathbb{R}$ is additively or
 multiplicatively Schur-convex for any $\bm{x},\bm{y}$ in the feasible set, $\bm{x}
 \prec \bm{y} \rightarrow \phi (\bm{x})\le \phi (\bm{y})$. On the other hand, $\phi$ is
 additively or multiplicatively Schur-concave when $\bm{x} \prec \bm{y} \rightarrow
 \phi (\bm{x})\ge \phi (\bm{y})$.
\end{definition}

 \underline{\textbf{Optimal} $\bm{Q}_{\bm{X}}$}
 Based on the basic matrix inequalities and majorization theory together with the
 following EVDs (\ref{eq23}) to (\ref{eq26}) and the singular value decomposition (SVD)
 (\ref{eq27}) \setcounter{equation}{22}
\begin{align}
 \bm{F}^{\rm H}\bm{\Pi}\bm{F} =& \bm{U}_{\bm{F}\bm{\Pi}\bm{F}}\bm{\Lambda}_{\bm{F}\bm{\Pi}\bm{F}}
  \bm{U}_{\bm{F}\bm{\Pi}\bm{F}}^{\rm H} \ \text{with} \ \bm{\Lambda}_{\bm{F}\bm{\Pi}\bm{F}} \searrow ,
  \label{eq23} \\
 \bm{\Phi} =& \bm{U}_{\bm{\Phi}}\bm{\Lambda}_{\bm{\Phi}}\bm{U}_{\bm{\Phi}}^{\rm H} \ \text{with} \
  \bm{\Lambda}_{\bm{\Phi}} \searrow , \label{eq24} \\
 \bm{\Phi} =& \bar{\bm{U}}_{\bm{\Phi}}\bar{\bm{\Lambda}}_{\bm{\Phi}}\bar{\bm{U}}_{\bm{\Phi}}^{\rm H}
  \ \text{with} \ \bar{\bm{\Lambda}}_{\bm{\Phi}} \nearrow , \label{eq25} \\
 \bm{A}\bm{\Phi}^{-1}\bm{A}^{\rm H} =& \bm{U}_{\bm{A}\bm{\Phi}\bm{A}}\bm{\Lambda}_{\bm{A}\bm{\Phi}\bm{A}}
  \bm{U}_{\bm{A}\bm{\Phi}\bm{A}}^{\rm H} \ \text{with} \ \bm{\Lambda}_{\bm{A}\bm{\Phi}\bm{A}} \searrow ,
  \label{eq26} \\
 \bm{A} =& \bm{U}_{\bm{A}}\bm{\Lambda}_{\bm{A}}\bm{V}_{\bm{A}}^{\rm H} \ \text{with} \
  \bm{\Lambda}_{\bm{A}} \searrow , \label{eq27}
\end{align}
 the optimal unitary matrices $\bm{Q}_{\bm{X}}$ corresponding to the various objective
 functions can be derived and they are listed in the right column of Table~\ref{tab1}\footnote{Note that the solutions of \textbf{Obj.\,8} and \textbf{Obj.\,9} are derived based on high SNR approximation as the effects of $\alpha\bm{I}$ and $\bm{\Phi}$ are neglected.}. The detailed proofs are given in Appendix~\ref{appendix_table_II}.
 In the SVD (\ref{eq27}), $\bm{\Lambda}_{\bm{A}}$ contains the singular values of
 $\bm{A}$, while $\bm{U}_{\bm{A}}$ and $\bm{V}_{\bm{A}}$ are the corresponding left and
 right unitary matrices, respectively.

 In Table~\ref{tab1}, the unitary $\bm{U}_{\text{DFT}}$ for \textbf{Obj.\,5.1} is a
 discrete Fourier transform (DFT) matrix, and $\bm{U}_{\text{GMD}}$ for \textbf{Obj.\,6.1}
 is the unitary matrix that makes the diagonal elements of $\bm{L}$ identical, that is,
 $\bm{U}_{\text{GMD}}$ is the right unitary matrix of the geometric mean decomposition
 (GMD) of $\big(\bm{Q}_{\bm{X}}^{\rm H}\bm{F}^{\rm H}\bm{\Pi}\bm{F}\bm{Q}_{\bm{X}}+\alpha
 \bm{I}\big)^{-0.5}$. It is also worth highlighting that for \textbf{Obj.\,8} and
 \textbf{Obj.\,9}, in general, the closed-form optimal $\bm{Q}_{\bm{X}}$ cannot be derived,
 and only the approximated optimal solutions can be obtained at high signal-to-noise ratio
 (SNR) conditions.

\subsubsection{Optimization of $\bm{F}$}

 For \textbf{Opt.\,1.1}, given the optimal $\bm{Q}_{\bm{X}}$ in Table~\ref{tab1}, the objective functions in Table~\ref{tab1} are monotonically decreasing functions with respect to the eigenvalues of $\bm{F}^{\rm H}\bm{\Pi}\bm{F}$. Therefore, the optimal solutions of
 $\bm{F}$ fall in the Pareto optimal solution set of the following multi-objective
 optimization problem \cite{XingTSP201501}
\begin{align}\label{eq28}
 \textbf{Opt.\,1.2:} & \max\limits_{\bm{F}} \bm{\lambda}\big(\bm{F}^{\rm H}\bm{\Pi}\bm{F}\big) , {\rm s.t.} \psi_j(\bm{F})\le 0 , ~ 1\le j\le I,
\end{align}
 where $\bm{\lambda}(\bm{F}^{\rm H}\bm{\Pi}\bm{F})=[\lambda_1(\bm{F}^{\rm H}
 \bm{\Pi}\bm{F}\big)\cdots \lambda_N\big(\bm{F}^{\rm H}\bm{\Pi}\bm{F})]^{\rm T}$.
 Clearly, the optimal structure of $\bm{F}$ depends on both the objective function and  on the
 constraints. As discussed in \cite{XingTSP201501}, deriving the optimal structure of
 $\bm{F}$ for \textbf{Opt.\,1.2} corresponds to deriving the optimal structures of
 $\bm{F}$ for \textbf{Opt.\,1.1} for various objectives functions, including \textbf{Obj.\,1}
 to \textbf{Obj.\,15}.

 Since $\psi_j(\bm{F})$ is right unitarily-invariant, \textbf{Opt.\,1.2} is equivalent to the
 following matrix-monotonic optimization problem
\begin{align}\label{eq29}
 \textbf{Opt.\,1.3:} & \max\limits_{\bm{F}} & \bm{F}^{\rm H}\bm{\Pi}\bm{F}, {\rm s.t.} \psi_j(\bm{F})\le 0 , ~ 1\le j\le I.
\end{align}
 Generally, matrix-monotonic optimization maximizes a positive semi-definite matrix
 under certain power constraints. The fundamental idea of matrix-monotonic optimization is to extend the objective functions and solution sets to get in return more freedoms that can be exploited to simplify the analysis.
 The optimal solutions of \textbf{Opt.\,1.1} for
 the objective functions \textbf{Obj.\,1} to \textbf{Obj.\,15} are all in the Pareto
 optimal solution set of \textbf{Opt.\,1.3}. Since matrix-monotonic optimization
 derives the common structure of the Pareto optimal solution set of \textbf{Opt.\,1.3},
 the common optimal structures derived are exactly the structures of the optimal
 solutions of \textbf{Opt.\,1.1}. By taking advantage of these optimal structures,
 \textbf{Opt.\,1.1} can be substantially simplified.

 Interestingly, $\bm{F}^{\rm H}\bm{\Pi}\bm{F}$ can be interpreted as a matrix version SNR \cite{XingTSP201501}. Thus, based on \textbf{Opt.\,1.3} it can be concluded that various MIMO transceiver
 optimization problems maximize this matrix version SNR. When there are multiple data streams,
 maximizing the matrix version SNR inherently constitute a multi-objective optimization problem. In
 addition, each unitary matrix $\bm{Q}_{\bm{X}}$ corresponds to a specific implementation
 scheme. The focus of matrix-monotonic optimization is how to maximize the positive
 semi-definite matrix $\bm{F}^{\rm H}\bm{\Pi}\bm{F}$ under certain constraints. Different
 objective functions realize different tradeoffs among the multiple data streams, and
 matrix-monotonic optimization is a powerful tool that unifies the different constrained
 optimization problems with various objective functions.
 Specifically, based on matrix-monotonic optimization, the common properties of these
 objective functions are revealed, which are reflected on the optimal diagonalizable
 structures.

 These structures can transform complex optimization problems relying on matrix
 variables into much simpler ones with only vector variables. Thus case-by-case
 investigations for different objective functions are avoided. Since the optimal
 structure of $\bm{F}$ also depends on the specific form of the constraints, in the
 following, three right unitary invariant constraints are investigated, namely,
 shaping constraint \cite{XingTSP201502}, joint power constraint \cite{XingTSP201502} and multiple weighted power constraints \cite{XingTSP201601}.

 \underline{\textbf{Shaping Constraint}}
 For the shaping constraint, i.e., \textbf{Constraint 4}, \textbf{Opt.\,1.3} becomes
 the following optimization problem \cite{XingTSP201502}
\begin{align}\label{eq30}
 \textbf{Opt.\,1.4:} & {\max}_{\bm{F}} \bm{F}^{\rm H}\bm{\Pi}\bm{F}, {\rm s.t.} \bm{F}\bm{F}^{\rm H}\preceq \bm{R}_{\rm s}.
\end{align}
 The following lemma reveals the optimal structure of $\bm{F}$ for \textbf{Opt.\,1.4}
 with the shaping constraint.

\begin{lemma}\label{L1}
 When $\bm{R}_{\rm s}$ is attainable, i.e., the rank of $\bm{R}_{\rm s}$ is not higher than the number of columns and the number of rows in $\bm{F}$, the optimal solution $\bm{F}_{\rm opt}$ of \textbf{Opt.\,1.4} is a
 square root of $\bm{R}_{\rm s}$, i.e., $\bm{F}_{\rm opt}\bm{F}_{\rm opt}^{\rm H}=
 \bm{R}_{\rm s}$.
\end{lemma}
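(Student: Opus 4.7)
The plan is to prove Lemma~\ref{L1} by contradiction: assume some Pareto optimum $\bm{F}^*$ of \textbf{Opt.\,1.4} satisfies $\bm{F}^*\bm{F}^{*\rm H}=\bm{R}\preceq \bm{R}_{\rm s}$ with $\bm{R}\neq \bm{R}_{\rm s}$, and produce a feasible $\bm{F}'$ with $\bm{F}'\bm{F}'^{\rm H}=\bm{R}_{\rm s}$ such that $\bm{F}'^{\rm H}\bm{\Pi}\bm{F}'\succeq \bm{F}^{*\rm H}\bm{\Pi}\bm{F}^*$ and the inequality is strict in at least one eigenvalue. This will contradict the Pareto optimality of $\bm{F}^*$ in the matrix-monotonic sense of \textbf{Opt.\,1.3}. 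The attainability hypothesis is needed only to guarantee that the set $\{\bm{F}:\bm{F}\bm{F}^{\rm H}=\bm{R}_{\rm s}\}$ is nonempty: if $r$ denotes the rank of $\bm{R}_{\rm s}$, I factor $\bm{R}_{\rm s}=\bm{B}\bm{B}^{\rm H}$ with $\bm{B}$ of size $N\times r$ and pad with $M-r\ge 0$ zero columns to match the dimensions of $\bm{F}$.

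The first ingredient is a spectral bound. Since $\bm{F}^{\rm H}\bm{\Pi}\bm{F}$ and $\bm{\Pi}^{1/2}\bm{F}\bm{F}^{\rm H}\bm{\Pi}^{1/2}=\bm{\Pi}^{1/2}\bm{R}\bm{\Pi}^{1/2}$ share the same nonzero spectrum, and conjugation by $\bm{\Pi}^{1/2}$ preserves the positive-semidefinite order, the hypothesis $\bm{R}\preceq \bm{R}_{\rm s}$ gives $\bm{\Pi}^{1/2}\bm{R}\bm{\Pi}^{1/2}\preceq \bm{\Pi}^{1/2}\bm{R}_{\rm s}\bm{\Pi}^{1/2}$, and Weyl's monotonicity then implies $\lambda_i(\bm{F}^{*\rm H}\bm{\Pi}\bm{F}^*)\le \lambda_i(\bm{\Pi}^{1/2}\bm{R}_{\rm s}\bm{\Pi}^{1/2})$ for every $i$. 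Unless the deviation $\bm{R}_{\rm s}-\bm{R}$ is entirely contained in the null space of $\bm{\Pi}$ (a degenerate situation that can be handled by appending columns to $\bm{F}^*$ lying in that null space, augmenting $\bm{F}^*\bm{F}^{*\rm H}$ to $\bm{R}_{\rm s}$ without touching the objective), at least one of these eigenvalue inequalities is strict.

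To upgrade the spectral gap into a matrix-order gap, I exploit the right-unitary freedom that drives the whole framework. Diagonalize $\bm{F}^{*\rm H}\bm{\Pi}\bm{F}^*=\bm{W}\bm{D}\bm{W}^{\rm H}$ with $\bm{D}$ in decreasing order, pick any $\bm{F}_0$ with $\bm{F}_0\bm{F}_0^{\rm H}=\bm{R}_{\rm s}$, and diagonalize $\bm{F}_0^{\rm H}\bm{\Pi}\bm{F}_0=\bm{U}_0\bm{D}_0\bm{U}_0^{\rm H}$ also in decreasing order. Define $\bm{F}'=\bm{F}_0\bm{U}_0\bm{W}^{\rm H}$; because right multiplication by a unitary preserves $\bm{F}_0\bm{F}_0^{\rm H}$, feasibility $\bm{F}'\bm{F}'^{\rm H}=\bm{R}_{\rm s}$ is automatic, while a direct expansion yields
\begin{align*}
 \bm{F}'^{\rm H}\bm{\Pi}\bm{F}'-\bm{F}^{*\rm H}\bm{\Pi}\bm{F}^*=\bm{W}(\bm{D}_0-\bm{D})\bm{W}^{\rm H}\succeq 0,
\end{align*}
with at least one strictly positive diagonal entry in $\bm{D}_0-\bm{D}$ by the spectral bound above. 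This contradicts the Pareto optimality of $\bm{F}^*$, so the optimum must saturate the shaping constraint.

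The main obstacle is that the shaping constraint controls $\bm{F}\bm{F}^{\rm H}$, the ``wrong'' side of $\bm{F}$ relative to the objective $\bm{F}^{\rm H}\bm{\Pi}\bm{F}$, and the map $\bm{R}\mapsto \bm{R}^{1/2}\bm{\Pi}\bm{R}^{1/2}$ is not operator-monotone; consequently $\bm{R}\preceq \bm{R}_{\rm s}$ does not translate directly into a matrix inequality on the objective. The two-step workaround --- first converting the PSD inequality on $\bm{R}$ into a spectral inequality on $\bm{F}^{\rm H}\bm{\Pi}\bm{F}$ via the shared-nonzero-spectrum identity and Weyl, then realigning eigenvectors via the right-unitary freedom (the very same $\bm{Q}_{\bm{X}}$ degree of freedom exploited in Table~\ref{tab1}) to promote the spectral gap into a matrix-order gap --- is the core mechanism, and is the same pattern that underlies the entire matrix-monotonic framework of Section~\ref{S2.3}.
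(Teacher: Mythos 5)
Your proof is correct and rests on the same mechanism as the paper's: the objective reduces to maximizing $\bm{\lambda}\big(\bm{F}^{\rm H}\bm{\Pi}\bm{F}\big)=\bm{\lambda}\big(\bm{\Pi}^{1/2}\bm{F}\bm{F}^{\rm H}\bm{\Pi}^{1/2}\big)$, and $\bm{F}\bm{F}^{\rm H}\preceq\bm{R}_{\rm s}$ forces these eigenvalues to be dominated by those of $\bm{\Pi}^{1/2}\bm{R}_{\rm s}\bm{\Pi}^{1/2}$, with equality attained at any square root of $\bm{R}_{\rm s}$. The only differences are presentational: you phrase it as a contradiction and add the explicit right-unitary realignment $\bm{F}'=\bm{F}_0\bm{U}_0\bm{W}^{\rm H}$ to turn the eigenvalue gap into a matrix-order gap, plus the caveat about $\bm{R}_{\rm s}-\bm{R}$ lying in the null space of $\bm{\Pi}$, both of which the paper leaves implicit.
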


\begin{proof}
 Since the shaping constraint in \textbf{Opt.\,1.4} is right unitarily-invariant for $\bm{F}$, the objective is equivalent
 to maximizing $\bm{\lambda}\big(\bm{F}^{\rm H}\bm{\Pi}\bm{F}\big)$, which is in turn
 equivalent to maximizing $\bm{\lambda}\big(\bm{\Pi}^{1/2}\bm{F}\bm{F}^{\rm H}
 \bm{\Pi}^{1/2}\big)$. As $\bm{F}\bm{F}^{\rm H}\preceq \bm{R}_{\rm s}$, it can be concluded that
 $\bm{\lambda}\big(\bm{\Pi}^{1/2}\bm{F}\bm{F}^{\rm H}
 \bm{\Pi}^{1/2}\big) \preceq \bm{\lambda}\big(\bm{\Pi}^{1/2}\bm{R}_{\rm s}
 \bm{\Pi}^{1/2}\big)$, in which the equality holds when $\bm{F}\bm{F}^{\rm H}=\bm{R}_{\rm s}$.
 When the rank of $\bm{R}_{\rm s}$ is not higher
 than the number of columns and the number of rows in $\bm{F}$, the optimal solution
 $\bm{F}_{\rm opt}$ is a square root of $\bm{R}_{\rm s}$. It is worth noting that the square roots of $\bm{R}_{\rm s}$ are not unique. There are many square roots of $\bm{R}_{\rm s}$, however the different square roots have the same performance. We can choose an arbitrary square root of $\bm{R}_{\rm s}$ without performance loss.
\end{proof}

 \underline{\textbf{Joint Power Constraint}}
 Under the joint power constraint, \textbf{Constraint 6}, \textbf{Opt.\,1.3} can be
 rewritten as
\begin{align}\label{eq31}
 \textbf{Opt.\,1.5:} & {\max}_{\bm{F}}  \bm{F}^{\rm H}\bm{\Pi}\bm{F}, {\rm s.t.}  {\rm Tr}\big(\bm{F}\bm{F}^{\rm H}\big) \hspace{-1mm}\le\hspace{-1mm} P , \bm{F}\bm{F}^{\rm H}\preceq \tau \bm{I} .
\end{align}
 The optimal solution $\bm{F}_{\rm opt}$ for \textbf{Opt.\,1.5} is given in Lemma~\ref{L2}.

\begin{lemma}\label{L2}
 For \textbf{Opt.\,1.5} with the joint power constraint, the Pareto optimal solutions
 satisfy the following structure
\begin{align}\label{eq32}
 \bm{F}_{\rm opt} =& \bm{U}_{\bm{\Pi}}\bm{\Lambda}_{\bm{F}}\bm{U}_{\text{Arb}}^{\rm H} ,
\end{align}
 where the unitary matrix $\bm{U}_{\bm{\Pi}}$ is specified by the EVD
\begin{align}\label{eq33}
 \bm{\Pi} =& \bm{U}_{\bm{\Pi}}\bm{\Lambda}_{\bm{\Pi}}\bm{U}_{\bm{\Pi}}^{\rm H} \
  \text{with} \ \bm{\Lambda}_{\bm{\Pi}} \searrow ,
\end{align}
 every diagonal element of the rectangular diagonal matrix $\bm{\Lambda}_{\bm{F}}$ is
 smaller than $\sqrt{\tau}$, and $\bm{U}_{\text{Arb}}$ is an arbitrary unitary matrix
 having the appropriate dimension.
\end{lemma}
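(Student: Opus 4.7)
The plan is to exploit the right-unitary invariance of the joint power constraint, combined with the matrix-monotone character of \textbf{Opt.\,1.3}, to reduce the problem to the singular values and left singular basis of $\bm{F}$. I would begin from an SVD $\bm{F}=\bm{U}_{\bm{F}}\bm{\Lambda}_{\bm{F}}\bm{V}_{\bm{F}}^{\rm H}$ and note that both constraints depend on $\bm{F}$ only through $\bm{F}\bm{F}^{\rm H}=\bm{U}_{\bm{F}}\bm{\Lambda}_{\bm{F}}\bm{\Lambda}_{\bm{F}}^{\rm T}\bm{U}_{\bm{F}}^{\rm H}$, so the right unitary $\bm{V}_{\bm{F}}$ is entirely free and can be identified with the arbitrary factor $\bm{U}_{\mathrm{Arb}}^{\rm H}$ in the statement. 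Rewriting the constraints in terms of the singular values $\sigma_i=[\bm{\Lambda}_{\bm{F}}]_{i,i}$ then yields $\sum_i\sigma_i^2\le P$ together with the elementwise bound $\sigma_i\le\sqrt{\tau}$, which is exactly the per-entry bound on $\bm{\Lambda}_{\bm{F}}$ claimed in the lemma; neither constraint restricts $\bm{U}_{\bm{F}}$.

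With the singular values fixed, I would optimize $\bm{U}_{\bm{F}}$ using the observation that $\bm{F}^{\rm H}\bm{\Pi}\bm{F}$ and $\bm{\Pi}^{1/2}\bm{U}_{\bm{F}}\bm{\Lambda}_{\bm{F}}\bm{\Lambda}_{\bm{F}}^{\rm T}\bm{U}_{\bm{F}}^{\rm H}\bm{\Pi}^{1/2}$ share the same nonzero eigenvalues. Applying the right-hand sides of \textbf{Matrix Inequality 1} and \textbf{Matrix Inequality 3} with $\bm{C}=\bm{\Pi}$ and $\bm{D}=\bm{U}_{\bm{F}}\bm{\Lambda}_{\bm{F}}\bm{\Lambda}_{\bm{F}}^{\rm T}\bm{U}_{\bm{F}}^{\rm H}$, the equality conditions recorded after those inequalities show that the trace and the determinant of $\bm{F}^{\rm H}\bm{\Pi}\bm{F}$ are simultaneously maximized when $\bm{U}_{\bm{F}}=\bm{U}_{\bm{\Pi}}$ with the descending ordering of $\bm{\Lambda}_{\bm{\Pi}}$. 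Since this choice diagonalizes $\bm{F}^{\rm H}\bm{\Pi}\bm{F}$ with eigenvalues $\lambda_i(\bm{\Pi})\sigma_i^2$ attaining the log-majorization upper bound, it is Pareto optimal for every matrix-monotone objective entering \textbf{Opt.\,1.3}. Composing with the free $\bm{V}_{\bm{F}}^{\rm H}=\bm{U}_{\mathrm{Arb}}^{\rm H}$ produces the stated form $\bm{F}_{\mathrm{opt}}=\bm{U}_{\bm{\Pi}}\bm{\Lambda}_{\bm{F}}\bm{U}_{\mathrm{Arb}}^{\rm H}$.

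The main obstacle is the Pareto-optimality step: dominance of the full eigenvalue vector is a stronger requirement than dominance of any single scalar summary. I would dispose of it through the log-majorization relation $\bm{\lambda}(\bm{F}^{\rm H}\bm{\Pi}\bm{F})\prec_{\log}(\lambda_i(\bm{\Pi})\sigma_i^2)_{i=1}^N$, whose saturation characterizes the alignment $\bm{U}_{\bm{F}}=\bm{U}_{\bm{\Pi}}$; on the Pareto frontier of the matrix-monotone maximization this saturation is equivalent to structural optimality, since any misaligned $\bm{U}_{\bm{F}}$ can be strictly improved by rotating toward $\bm{U}_{\bm{\Pi}}$ without perturbing either constraint. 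The precise values of $\{\sigma_i\}$ still depend on the specific objective chosen from Table~\ref{tab1}, but the structural claim of Lemma~\ref{L2} concerns only the unitary factor and the per-entry bound $\sqrt{\tau}$, both of which are now established.
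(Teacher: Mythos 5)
Your reduction of the constraints to the singular values (both constraints depend on $\bm{F}$ only through $\bm{F}\bm{F}^{\rm H}$, hence bound $\sigma_i\le\sqrt{\tau}$ and $\sum_i\sigma_i^2\le P$ while leaving both unitary factors free) is fine and matches what the paper implicitly uses. The gap is in the Pareto-optimality step, and it is exactly the obstacle you flag: your proposed resolution does not work. Log-majorization dominance is not element-wise dominance, and the claim that ``any misaligned $\bm{U}_{\bm{F}}$ can be strictly improved by rotating toward $\bm{U}_{\bm{\Pi}}$ without perturbing either constraint'' is false. Take $\bm{\Pi}={\rm diag}(2,1)$ and $\bm{F}\bm{F}^{\rm H}=\bm{R}_{\theta}\,{\rm diag}(2,1)\,\bm{R}_{\theta}^{\rm T}$ with $\theta=45^{\circ}$: the eigenvalues of $\bm{F}^{\rm H}\bm{\Pi}\bm{F}$ are approximately $(3.28,\,1.22)$, whereas the aligned configuration with the same singular values gives $(4,\,1)$. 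The two vectors are log-majorization ordered but mutually incomparable in the Pareto (element-wise) sense, so the pure rotation produces no strict improvement. To dominate the misaligned point you must also \emph{reallocate power} (e.g.\ $(2p_1,p_2)=(3.3,1.25)$ with $p_1+p_2=2.9<3$), exploiting the budget that alignment frees up. Your argument, which holds the singular values fixed, cannot produce this, so it does not establish that every Pareto optimal solution is of the claimed form --- only that the aligned form maximizes the trace, the determinant, and the log-majorization bound, which is a weaker statement.

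The paper closes this gap with a different device: for each feasible $\bm{F}_{\rm in}$ it maximizes a scalar $\alpha$ subject to $\bm{F}^{\rm H}\bm{\Pi}\bm{F}=\alpha\,\bm{F}_{\rm in}^{\rm H}\bm{\Pi}\bm{F}_{\rm in}$ and the joint power constraint. Fixing the \emph{direction} of the whole matrix $\bm{F}^{\rm H}\bm{\Pi}\bm{F}$ and scaling it uniformly makes ``improvement'' one-dimensional, so maximizing $\alpha$ genuinely increases every eigenvalue simultaneously; the equality conditions of \textbf{Matrix Inequality 1} applied to the resulting Rayleigh-quotient expression for $\alpha$ then force the alignment $\bm{F}=\bm{U}_{\bm{\Pi}}\bm{\Lambda}_{\bm{F}}\bm{V}_{\rm in}^{\rm H}$, and the left/right unitary invariance of the joint power constraint lets the structure be imposed without violating feasibility. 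If you want to salvage your SVD-based route, you would need to replace the rotation step with an argument of this type: show that for any genuinely misaligned $\bm{F}$ there exists an aligned $\bm{F}'$, generally with \emph{different} singular values, satisfying $\bm{F}'^{\rm H}\bm{\Pi}\bm{F}'\succeq\bm{F}^{\rm H}\bm{\Pi}\bm{F}$ with strict inequality somewhere while remaining feasible.
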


\begin{proof}
The proof is given in Appendix~\ref{appendix_1}.
\end{proof}

\begin{remark}\label{R1}
 For the optimization problem only under the sum power constraint, the optimal structure
 for $\bm{F}_{\rm opt}$ is also specified by (\ref{eq32}), where the sum of the diagonal
 elements of $\bm{\Lambda}_{\bm{F}}$ is no larger than $P$.
\end{remark}

 \underline{\textbf{Multiple Weighted Power Constraints}}
 Under the multiple weighted power constraints, \textbf{Opt.\,1.3} becomes
\begin{align}\label{eq34}
 \textbf{Opt.\,1.6:} \max\limits_{\bm{F}}  \bm{F}^{\rm H}\bm{\Pi}\bm{F},{\rm s.t.}{\rm Tr}\big(\bm{\Omega}_i\bm{F}\bm{F}^{\rm H}\big)\hspace{-1mm} \le \hspace{-1mm}P_i , 1\hspace{-1mm}\le\hspace{-1mm} i\hspace{-1mm}\le \hspace{-1mm}I .
\end{align}
 Note that the weighted power constraints are convex \cite{XingIET} and the detailed proof is given in Appendix \ref{appendix_convexity}. The weighted power constraints include both the sum power constraint
 and per-antenna power constraints as its special cases.  The optimal solution $\bm{F}_{\rm {opt}}$ for \textbf{Opt.\,1.6} is given in Lemma~\ref{L3}.

\begin{lemma}\label{L3}
 The Pareto optimal solutions of \textbf{Opt.\,1.6} satisfy the following structure
\begin{align}\label{eq36}
 \bm{F}_{\rm opt} =& \bm{\Omega}^{-\frac{1}{2}}\bm{U}_{\widetilde{\bm{\Pi}}}
  \bm{\Lambda}_{\widetilde{\bm{F}}}\bm{U}_{\rm Arb}^{\rm H} ,
\end{align}
 where $\bm{U}_{\rm Arb}$ is an arbitrary unitary matrix of appropriate dimension,
 $\bm{\Omega}=\sum_{i=1}^I\alpha_i\bm{\Omega}_i$, the nonnegative scalars
 $\alpha_i$ are the weighting factors that ensure that the constraints ${\rm Tr}\big(\bm{\Omega}_i
 \bm{F}\bm{F}^{\rm H}\big)\le P_i$ hold and they can be computed by classic sub-gradient methods,
 while the unitary matrix $\bm{U}_{\widetilde{\bm{\Pi}}}$ is specified by the EVD
\begin{align}\label{eq37}
 \bm{\Omega}^{-\frac{1}{2}}\bm{\Pi}\bm{\Omega}^{-\frac{1}{2}} =& \bm{U}_{\widetilde{\bm{\Pi}}}
  \bm{\Lambda}_{\widetilde{\bm{\Pi}}}\bm{U}_{\widetilde{\bm{\Pi}}}^{\rm H} \ \text{with} \
  \bm{\Lambda}_{\widetilde{\bm{\Pi}}} \searrow .
\end{align}
\end{lemma}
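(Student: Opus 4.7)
The plan is to reduce \textbf{Opt.\,1.6} to a sum-power-constrained matrix-monotonic problem by combining the multiple weighted power constraints into a single composite one via Lagrange multipliers, then invoking a whitening change of variables so that Remark~\ref{R1} (the sum-power-constraint specialization of Lemma~\ref{L2}) can be applied directly.

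First, I would form the composite constraint $\mathrm{Tr}(\bm{\Omega}\bm{F}\bm{F}^{\rm H})\le \sum_{i=1}^I\alpha_i P_i$ with $\bm{\Omega}=\sum_{i=1}^I\alpha_i\bm{\Omega}_i$ and nonnegative scalars $\alpha_i\ge 0$. Because each weighted power constraint is convex in $\bm{F}$ (Appendix~\ref{appendix_convexity}) and the objective is monotone with respect to the positive semi-definite ordering, any Pareto-optimal $\bm{F}_{\rm opt}$ of \textbf{Opt.\,1.6} is also Pareto-optimal for this single composite constraint for a suitable choice of the $\alpha_i$'s. The $\alpha_i$'s then play the role of KKT multipliers, selected (typically via sub-gradient iterations) so that each original inequality $\mathrm{Tr}(\bm{\Omega}_i\bm{F}\bm{F}^{\rm H})\le P_i$ is satisfied with complementary slackness.

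Next, assuming $\bm{\Omega}\succ \bm{0}$, I would apply the whitening substitution $\widetilde{\bm{F}}=\bm{\Omega}^{1/2}\bm{F}$. Then
\[
\bm{F}^{\rm H}\bm{\Pi}\bm{F} = \widetilde{\bm{F}}^{\rm H}\bigl(\bm{\Omega}^{-1/2}\bm{\Pi}\bm{\Omega}^{-1/2}\bigr)\widetilde{\bm{F}} = \widetilde{\bm{F}}^{\rm H}\widetilde{\bm{\Pi}}\widetilde{\bm{F}},
\]
and the composite constraint becomes the standard sum-power constraint $\mathrm{Tr}(\widetilde{\bm{F}}\widetilde{\bm{F}}^{\rm H})\le \sum_i\alpha_iP_i$. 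By Remark~\ref{R1} applied to this transformed problem together with the EVD $\widetilde{\bm{\Pi}}=\bm{U}_{\widetilde{\bm{\Pi}}}\bm{\Lambda}_{\widetilde{\bm{\Pi}}}\bm{U}_{\widetilde{\bm{\Pi}}}^{\rm H}$ in (\ref{eq37}), the Pareto-optimal solutions admit the structure $\widetilde{\bm{F}}_{\rm opt}=\bm{U}_{\widetilde{\bm{\Pi}}}\bm{\Lambda}_{\widetilde{\bm{F}}}\bm{U}_{\rm Arb}^{\rm H}$. Undoing the substitution by left-multiplying with $\bm{\Omega}^{-1/2}$ yields the claimed expression (\ref{eq36}).

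The main obstacle I anticipate is rigorously justifying the reduction from the $I$ separate weighted power constraints to the single composite one, because the ``objective'' here is a matrix ordered by the positive semi-definite cone rather than a scalar, so one cannot apply standard KKT conditions directly. The natural workaround is to scalarize the problem by composing with any strictly matrix-monotone decreasing function from Table~\ref{tab1} (e.g., $-\log|\bm{F}^{\rm H}\bm{\Pi}\bm{F}+\bm{I}|$), invoke strong duality for the resulting convex program to obtain the existence of nonnegative multipliers $\alpha_i$, and then argue by the matrix-monotonicity of the objective class that the same composite constraint characterizes the entire Pareto-optimal set of \textbf{Opt.\,1.6}, not just a single scalarized instance. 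The remaining bookkeeping, namely verifying that any arbitrary right unitary factor $\bm{U}_{\rm Arb}$ is absorbed without loss (since $\mathrm{Tr}(\bm{\Omega}\bm{F}\bm{F}^{\rm H})$ is right-unitarily invariant in $\bm{F}$) and that the singular values of $\widetilde{\bm{F}}$ can be ordered to match $\bm{\Lambda}_{\widetilde{\bm{\Pi}}}$, follows the template already used in the proof of Lemma~\ref{L2}.
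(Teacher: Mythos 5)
Your overall skeleton — collapse the $I$ weighted constraints into one composite constraint ${\rm Tr}(\bm{\Omega}\bm{F}\bm{F}^{\rm H})\le P$ with $\bm{\Omega}=\sum_i\alpha_i\bm{\Omega}_i$, whiten via $\widetilde{\bm{F}}=\bm{\Omega}^{1/2}\bm{F}$, and invoke the sum-power structure result — is exactly the paper's route for the second half of the argument. The gap is in the first half, and you have correctly identified where it is but your proposed repair does not close it. Scalarizing the matrix objective with a matrix-monotone function such as $-\log|\bm{F}^{\rm H}\bm{\Pi}\bm{F}+\bm{I}|$ and "invoking strong duality" fails for two reasons. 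First, that scalarized problem is not convex in $\bm{F}$ (the objective is not a convex function of $\bm{F}$), so strong duality and the existence of KKT multipliers are not available; the standard fix of re-parametrizing in the covariance $\bm{Q}=\bm{F}\bm{F}^{\rm H}$ restores convexity only at the cost of a rank relaxation, which is precisely the pitfall the paper says its framework avoids. Second, even granting multipliers for one scalarization, you would only have characterized the maximizers of that one scalar objective; the Pareto set of \textbf{Opt.\,1.6} under the semidefinite ordering is larger, and "argue by matrix-monotonicity that the same composite constraint characterizes the entire Pareto set" is an assertion, not an argument.

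The paper's Appendix~\ref{Apa} sidesteps both problems by inverting the roles of objective and constraint. For a given Pareto point $\bm{F}_{\text{Pareto}}$ it considers the vector-valued \emph{minimization} of the powers $\big\{{\rm Tr}(\bm{\Omega}_i\bm{F}\bm{F}^{\rm H})\big\}_{i=1}^I$ subject to the equality $\bm{F}^{\rm H}\bm{\Pi}\bm{F}=\bm{F}_{\text{Pareto}}^{\rm H}\bm{\Pi}\bm{F}_{\text{Pareto}}$, which (via \cite[P406]{Horn1990}) is rewritten as the linear constraint $\bm{U}\bm{\Pi}^{1/2}\bm{F}=\bm{\Pi}^{1/2}\bm{F}_{\text{Pareto}}$. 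This is a convex multi-objective program with scalar objectives, so the classical weighted-sum scalarization legitimately produces the nonnegative weights $\alpha_i$; a pair of contradiction arguments then transfers Pareto optimality from \textbf{Opt.\,1.6} into this power-minimization problem and back out to the single-composite-constraint problem. If you adopt that device, the rest of your proof (whitening, applying the sum-power structure with the EVD of $\bm{\Omega}^{-1/2}\bm{\Pi}\bm{\Omega}^{-1/2}$, and absorbing the right unitary factor) goes through as you describe. You should also note that the paper's analysis keeps the achieved power level $P={\sum}_i\alpha_i{\rm Tr}(\bm{\Omega}_i\bm{F}\bm{F}^{\rm H})$ rather than the bound $\sum_i\alpha_iP_i$, though this does not affect the derived structure.
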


\begin{proof}
 See Appendix~\ref{Apa}.
\end{proof}
\begin{algorithm}[t]
	{\caption{The sub-gradient algorithm for finding the weighting factors $\alpha_i, \forall i$ }
		\label{algorithm_1}
		\begin{algorithmic}[1]
			\STATE Initialization: set iteration index as $I_{\rm{ite}}\!=\!0$;\\
			~~~~~~~~ set the maximum iteration number $I_{\max}$;
			\\ ~~~~~~~~~randomly set initial weighting parameters $\alpha_i^{(0)}$, \\ ~~~~~~~~~$\forall~i=1,\cdots,I$;
			\REPEAT
			\STATE  Solve the problem \eqref{eq34} to obtain ${\bm{F}}^{(I_{\rm{ite}})}$ given $\alpha_i^{(I_{\rm{ite}})}$;
			\STATE  Define the step size $t_{I_{\rm{ite}}}=\frac{c}{a+I_{\rm{ite}}\cdot b},~\{a,b,c\}>0$;
			\STATE  Update $\alpha_i^{(I_{\rm{ite}}\!+\!1)} \!\!=\![\alpha_i^{(I_{\rm{ite}})}\!+t_{I_{\rm{ite}}}(\text{Tr}(
			\bm{\Omega}_i{\bm{F}}^{(I_{\rm{ite}})}(\!{\bm{F}}^{(I_{\rm{ite}})})^H)$ $-\!P_i)]^{+},~ \forall i$;
			\STATE  Update $I_{\rm{ite}}=I_{\rm{ite}}+1$
			\UNTIL $\alpha_i^{(I_{\rm{ite}}-1)}\left(\text{Tr}(
					\bm{\Omega}_i
					{\bm{F}}^{(I_{\rm{ite}}\!-\!1)}({\bm{F
					}}^{(I_{\rm{ite}}\!-\!1)})^H)
					\!\!-\!\!P_i\right)\!\le \! \varepsilon_i, \forall i$ or $I_{ite} \le I_{\max}$, where $\varepsilon_i>0, \forall i $ is  sufficiently  small.
			\RETURN $\alpha_i,{\bm{F}}^{I_{ite}},\forall~i=1,\cdots,I.$
	\end{algorithmic}}
\end{algorithm}

\underline{\textbf{Specific Applications}} Three specific
  applications are given for each lemma. In wireline communications
  relying on the ubiquitous digital subscriber lines (DSL), the
  shaping constraint, i.e., spectral mask constraint, is the most
  important constraint used for limiting the crosstalk by forcing the
  users/services to have zero power outside their predefined spectral
  ranges~\cite{Palomar2004}.  In order to impose a maximum transmit
  power limit in the different transmit directions, the joint power
  constraint can be used~\cite{Dai2012}.  For per-antenna power
  constraints, the most representative application example is the
  beamforming design of C-RAN, where the signals are transmitted from
  distributed antennas~\cite{XingTSP201601}.

\subsection{Advantages of Matrix-Monotonic Optimization}\label{S2.5}

Matrix-monotonic optimization theory can simplify
 the optimization problem relying on matrix variables into a much simpler one manipulating only vector
 variables. Using matrix-monotonic optimization, for example, the optimal structure of
 the matrix variable $\bm{F}$ can be derived and the remaining optimization problem
 becomes a much simpler one that  optimizes the diagonal matrix $\bm{\Lambda}_{\bm{F}}$.
 For the various objective functions and constraints discussed previously, the optimal
 solutions of the diagonal elements of the diagonal matrix $\bm{\Lambda}_{\bm{F}}$ are
 in fact diverse variants of classic water-filling solutions \cite{General_Waterfilling}, which can be readily obtained
 straightforwardly based on the corresponding Karush-Kuhn-Tucker (KKT) conditions \cite[P244]{Boyd04}.

 In the existing literature, MIMO transceiver optimization problems are unified in the
 framework based on majorization theory  \cite{Majorization}. Our work is different
 from this existing framework in two perspectives. Firstly, in \cite{Majorization},
 linear and nonlinear transceiver optimization is
 considered separately. In our work, they are considered in the same framework.
 Additionally, in our work, more objective functions are considered. More importantly,
 the shaping constraint, joint power constraint and multiple weighted power
 constraints are considered in our work instead of merely the sum power constraint.

 For the multiple weighted power constraints, to the best of our knowledge, all the
 existing works are based on the KKT conditions. There are several limitations for
 these existing works. Firstly, this method is only applicable to mutual information maximization
 and MSE minimization. It cannot be used for more general objective functions. The
 method is not applicable for example to more complex systems, such as multi-hop
 AF MIMO relaying systems. Moreover, the KKT condition based methods also suffer
 from a serious weaknesses due to the fact that the KKT conditions are only
 necessary conditions for the optimal solutions. As discussed in \cite{XingTVT2016},
 the so-called turning-off effect and ambiguity effect usually perturb the KKT conditions based
 methods when deriving the optimal solutions. To overcome this problem, a widely used
 method is to consider the covariance matrix as a new variable in order to exploit its
 hidden convex nature. Unfortunately, the cost of adopting this approach is that
 the rank constraint has to be relaxed first. By contrast, our matrix-monotonic
 optimization framework does not suffer from these problems and has much wider
 applications.

\section{Bayes Robust Matrix-Monotonic Optimization}\label{S3}

 In wireless communication systems, the channel parameters have to be estimated. However, due to the uncertainty
 introduced both by noise and the time-varying nature of wireless channels, channel estimation
 errors inevitably exist \cite{Zhang2008}, the true channel matrix $\bm{H}$ can
 be expressed by the following Kronecker formula \cite{Pastore2012,Ding09}
\begin{align}\label{Channel_Error} 
 \bm{H} =& \widehat{\bm{H}} + \bm{H}_{\rm W}\bm{\Psi}^{\frac{1}{2}} ,
\end{align}
 where $\widehat{\bm{H}}$ is the estimated channel matrix and $\bm{H}_{\rm W}
 \bm{\Psi}^{\frac{1}{2}}$ is the channel estimation error, in which the elements of
 $\bm{H}_{\rm W}$ obey the independent and identical complex Gaussian distribution
 $\mathcal{CN}(0,1)$ and the covariance matrix $\bm{\Psi}$ of the channel estimate
 is a function of both the training sequence and of the channel estimator \cite{Ding09,Pastore2012}. It is worth noting that in
this section we focus our attention on the robust transceiver
design for the scenario, where both the source and destination have
imperfect CSI.
 Based on (\ref{Channel_Error}), for Bayes robust transceiver optimization, the matrix
 $\bm{\Pi}$ in the matrix-monotonic optimization can be expressed as \cite{XingTSP201501}
\begin{align}\label{eq43}
 \bm{\Pi} =& \widehat{\bm{H}}^{\rm H}\big(\sigma_n^2\bm{I} + {\rm Tr}\big(\bm{X}
  \bm{X}^{\rm H}\bm{\Psi}\big)\bm{I}\big)^{-1}\widehat{\bm{H}},
\end{align}
 where $\sigma_n^2$ is the additive white noise power in the data transmission.

 As a result, the generic Bayes robust matrix-variable optimization can be
 formulated as \cite{XingTSP201501}
\begin{align}\label{eq44}
\begin{array}{lcl}
 \textbf{Opt.\,2.1:} & \min\limits_{\bm{X}} & f\big(\bm{X}^{\rm H}\widehat{\bm{H}}^{\rm H}
  \bm{K}_{\rm n}^{-1}\widehat{\bm{H}}\bm{X}\big) , \\
 & {\rm{s.t.}} & \bm{K}_{\rm n}= \sigma_{\rm{n}}^2\bm{I} + {\rm Tr}\big(\bm{X}\bm{X}^{\rm H}
  \bm{\Psi}\big)\bm{I} , \\
 & & \psi_i(\bm{X})\le 0 , ~ 1\le i \le I .
\end{array}
\end{align}
 As discussed in \cite{XingTSP201501}, after introducing the transformation $\bm{X}=\bm{F}
 \bm{Q}_{\bm{X}}$ and recalling that the constraints $\psi_i(\cdot )$ are right unitarily-invariant, \textbf{Opt.\,2.1} is transferred equivalently to the following matrix-monotonic
 optimization problem:
\begin{align}\label{Matrix_Monotonic_Per} 
\begin{array}{lcl}
 \textbf{Opt.\,2.2:} & \max\limits_{\bm{F}} & \bm{F}^{\rm H}\widehat{\bm{H}}^{\rm H}
  \bm{K}_{\rm n}^{-1}\widehat{\bm{H}}\bm{F} , \\
 & {\rm{s.t.}} & \bm{K}_{\rm n}=\sigma_{\rm{n}}^2\bm{I}+{\rm Tr}\big(\bm{F}\bm{F}^{\rm H}
  \bm{\Psi}\big)\bm{I} ,  \\
 & & \psi_i(\bm{F})\le 0 , 1\le i\le I .
\end{array}
\end{align}
 Here the matrix $\bm{F}^{\rm H}\widehat{\bm{H}}^{\rm H}\bm{K}_{\rm n}^{-1}
 \widehat{\bm{H}}\bm{F}$ can be regarded as an extended SNR matrix in the presence of channel estimation errors, and this kind of matrix-monotonic optimization is named as robust matrix-monotonic optimization in \cite{XingTSP201501}. In the following, we discuss the
 optimal solutions of this robust matrix-monotonic optimization problem under specific
 power constraints.

{\emph{1)}~{\rm\bf Shaping Constraint}}:
 Consider the shaping constraint of
\begin{align}\label{eq46}
 \psi_1 (\bm{F}) =& \bm{F}\bm{F}^{\rm H} - \bm{R}_{\rm s}.
\end{align} As proved in Appendix~\ref{appendix_3}, for the general case of $\bm{\Psi} \not\propto \bm{I}$, a suboptimal solution for \textbf{Opt.\,2.2} which maximizes a lower bound of the objective of \textbf{Opt.\,2.2} is given by Lemma~\ref{L1}.
When $\bm{\Psi}=\bm{0}$, the lower bound is tight and the solution given in Lemma~\ref{L1} is exactly the Pareto optimal solution of \textbf{Opt.\,2.2}.


{\emph{2)}~{\rm\bf Joint Power Constraint}}:
 Next consider the joint power constraint specified by
\begin{align}\label{eq47}
 \psi_1(\bm{F}) = {\rm Tr}\big(\bm{F}\bm{F}^{\rm H}\big) -P , \
 \psi_2(\bm{F}) = \bm{F}\bm{F}^{\rm H} - \tau \bm{I} .
\end{align}
 For the perfect CSI case associated with $\bm{\Psi}=\bm{0}$, the Pareto optimal solutions of
 \textbf{Opt.\,2.2} are specified by Lemma~\ref{L2}. When $\bm{\Psi}\propto \bm{I}$
 and $\psi_1(\bm{F})\le 0$ is active at the optimal solutions $\bm{F}_{\rm opt}$,
 the Pareto optimal solutions of \textbf{Opt.\,2.2} also satisfy the structure given
 in Lemma~\ref{L2}, since in this case $\bm{K}_{\rm n}$ is constant. As proved in Appendix~\ref{appendix_3}, for the general case $\bm{\Psi}\not\propto \bm{I}$, the suboptimal solution that maximizes a lower bound of the objective of \textbf{Opt.\,2.2} satisfies the following structure
\begin{align}
\hspace*{-2mm}\bm{F}\hspace*{-1mm}=\hspace*{-1mm}\frac{\sigma_{\rm{n}}
\widetilde{\bm{\Psi}}^{-\frac{1}{2}}
{\bm{V}}_{\widetilde{\bm{H}}}\bm{\Lambda}_{\widetilde{\bm{F}}}{\bm{U}}_{\rm{Arb}}^{\rm{H}}}
{\left(1\hspace*{-1mm}-\hspace*{-1mm}
{\rm{Tr}}\left(\widetilde{\bm{\Psi}}^{-\frac{1}{2}}
\bm{\Psi}\widetilde{\bm{\Psi}}^{-\frac{1}{2}}{\bm{V}}_{\widetilde{\bm{H}}}
\bm{\Lambda}_{\widetilde{\bm{F}}}\bm{\Lambda}_{\widetilde{\bm{F}}}^{\rm{H}}
{\bm{V}}_{\widetilde{\bm{H}}}^{\rm{H}}\right)\right)^{\frac{1}{2}}},
\end{align} where $\widetilde{\bm{\Psi}}=\sigma_{\rm{n}}^2\bm{I}+P\bm{\Psi}$ and the unitary matrix ${\bm{V}}_{\widehat{\bm{H}}}$ is defined based on the following SVD \begin{align}
\widehat{\bm{H}}\big(\sigma_{\rm{n}}^2\bm{I}+P\bm{\Psi}\big)^{-\frac{1}{2}}=
   {\bm{U}}_{\widetilde{\bm{H}}}\bm{\Lambda}_{\widetilde{\bm{H}}}{\bm{V}}_{\widetilde{\bm{H}}}^H, \text{with}, \bm{\Lambda}_{\widetilde{\bm{H}}} \searrow.
\end{align} The diagonal elements of the rectangular diagonal matrix $\bm{\Lambda}_{\widetilde{\bm{F}}_k}$ are smaller than
$\sqrt{\tau{(\sigma_{\rm{n}}^2+P\lambda_{\min}(\bm{\Psi}))}/{(\sigma_{\rm{n}}^2+P\lambda_{\max}(\bm{\Psi}))}}$.

{\emph{3)}~{\rm\bf Multiple Weighted Power Constraints}}:
 When the multiple weighted power constraints are used, we have
\begin{align}\label{eq48}
\psi_i (\bm{F})= {\rm Tr}\big(\bm{\Omega}_i\bm{F}\bm{F}^{\rm H}\big) - P_i , ~ 1\le i\le I .
\end{align}
 From ${\rm Tr}\big(\bm{\Omega}_i\bm{F}\bm{F}^{\rm H}\big)\le P_i$, it is readily seen that the following inequality holds
\begin{align}\label{eq49}
 {\rm Tr}\big(\big(\sigma_{\rm{n}}^2\bm{\Omega}_i+P_i\bm{\Psi}\big)\bm{F}\bm{F}^{\rm H}\big) =&
  {\rm Tr}\big(\sigma_n^2\bm{\Omega}_i\bm{F}\bm{F}^{\rm H}\big) + P_i{\rm Tr}\big(\bm{\Psi}
  \bm{F}\bm{F}^{\rm H}\big) \nonumber \\
 \le & \sigma_n^2 P_i + P_i {\rm Tr}\big(\bm{\Psi}\bm{F}\bm{F}^{\rm H}\big) .
\end{align}
 Hence ${\rm Tr}\big(\bm{\Omega}_i\bm{F}\bm{F}^{\rm H}\big)\le P_i$ is equivalent to
\begin{align}\label{eq50}
 \frac{{\rm Tr}\big[\big(\sigma_{\rm{n}}^2\bm{\Omega}_i+P_i\bm{\Psi}\big)\bm{F}\bm{F}^{\rm H}\big]}
  {\sigma_{\rm{n}}^2+{\rm Tr}\big(\bm{F}\bm{F}^{\rm H}\bm{\Psi}\big)}\le P_i.
\end{align}
 As a result, the Bayes robust matrix-monotonic optimization problem
 (\ref{Matrix_Monotonic_Per}) is equivalent to the following problem
\begin{align}
\label{Matrix_Monotonic_Per_A} 
\hspace*{-2mm}\begin{array}{lcl}
 \textbf{Opt.\,2.3:} \!\! &\!\! \max\limits_{\bm{F}}\!\! &\!\! \bm{F}^{\rm H}\widehat{\bm{H}}^{\rm H}
  \bm{K}_{\rm n}^{-1}\widehat{\bm{H}}\bm{F} , \\
 \!\! &\!\! {\rm{s.t.}}\!\! &\!\! \bm{K}_{\rm n}=\sigma_n^2\bm{I}+{\rm Tr}\big(\bm{F}\bm{F}^{\rm H}
  \bm{\Psi}\big)\bm{I} ,  \\
 \!\! &\!\! \!\! &\!\!  \frac{{\rm Tr}\big(\big(\sigma_{\rm{n}}^2\bm{\Omega}_i+P_i\bm{\Psi}\big)\bm{F}\bm{F}^{\rm H}\big)}
  {\sigma_{\rm{n}}^2+{\rm Tr}\big(\bm{F}\bm{F}^{\rm H}\bm{\Psi}\big)}\le P_i, 1\le i\le I .
\end{array}\!\!
\end{align}
 By defining the auxiliary matrix variable
\begin{align}\label{F_hat} 
 \bar{\bm{F}} =& {\big[\sigma_{\rm{n}}^2+{\rm Tr}\big(\bm{F}\bm{F}^{\rm H}\bm{\Psi}\big)\big]^{-\frac{1}{2}}}
  \bm{F} ,
\end{align}
 the optimization problem (\ref{Matrix_Monotonic_Per_A}) can be simplified to:
\begin{align}\label{eq53}
\hspace*{-3mm}\begin{array}{lcl}
 \textbf{Opt.\,2.4:}\!\! &\!\! \max\limits_{\bar{\bm{F}}}\!\! &\!\!\! \bar{\bm{F}}^{\rm H}\widehat{\bm{H}}^{\rm H}
  \widehat{\bm{H}}\bar{\bm{F}} , \\
 \!\! & \!\!{\rm{s.t.}}\!\! &\!\!\! {\rm Tr}\big(\! \big(\sigma_{\rm{n}}^2\bm{\Omega}_i\! +\! P_i\bm{\Psi}\big)\bar{\bm{F}}
  \bar{\bm{F}}^{\rm H}\big)\! \le\! P_i, \, 1\! \le\! i\! \le\! I .
\end{array}\!\!
\end{align}
 Similar to the proof of Lemma~\ref{L3}, specifically to (\ref{label_opt}) in Appendix~\ref{Apa},
 the above optimization problem is equivalent to
\begin{align}\label{eq54}
 \hspace*{-3mm}\textbf{Opt.\,2.5:} & {\max}_{\bar{\bm{F}}} \bar{\bm{F}}^{\rm H}\widehat{\bm{H}}^{\rm H}
  \widehat{\bm{H}}\bar{\bm{F}}, {\rm{s.t.}}  {\rm Tr}\big(\bar{\bm{\Omega}}\bar{\bm{F}}\bar{\bm{F}}^{\rm H}\big)\hspace*{-1mm}\le {\sum}_{i=1}^I \hspace*{-1mm}P_i,
\end{align}
 where
\begin{align}\label{eq55}
 \bar{\bm{\Omega}} =& \sum\nolimits_{i=1}^I \alpha_i\big(\sigma_{\rm{n}}^2\bm{\Omega}_i+P_i\bm{\Psi}\big) .
\end{align}

 According to Lemma~\ref{L3}, the Pareto optimal solutions $\bar{\bm{F}}_{\rm opt}$ of
 \textbf{Opt.\,2.5} satisfy the following structure
\begin{align}\label{eq56}
 \bar{\bm{F}}_{\rm opt} =& \bar{\bm{\Omega}}^{-\frac{1}{2}} \bm{V}_{\bm{\mathcal{H}}}
  \bm{\Lambda}_{\widetilde{\bar{\bm{F}}}} \bm{U}_{\text{Arb}}^{\rm H} ,
\end{align}
 where the unitary matrix $\bm{V}_{\bm{\mathcal{H}}}$ is specified by the SVD of:
\begin{align}\label{eq57}
 \widehat{\bm{H}}\bar{\bm{\Omega}}^{-\frac{1}{2}} =& \bm{U}_{\bm{\mathcal{H}}}
  \bm{\Lambda}_{\bm{\mathcal{H}}} \bm{V}_{\bm{\mathcal{H}}}^{\rm H}.
\end{align} From
 (\ref{F_hat}), we have $\big[\sigma_{\rm{n}}^2+{\rm Tr}\big(\bm{F}\bm{F}^{\rm H}\bm{\Psi}\big)
 \big]^{\frac{1}{2}}\bar{\bm{F}}=\bm{F}$ and based on this conclusion we have the following equation
\begin{align}\label{eq58}
 \big[\sigma_{\rm{n}}^2\! +\! {\rm Tr}\big(\bm{F}\bm{F}^{\rm H}\bm{\Psi}\big)\big]{\rm Tr}\big(\bm{\Psi}
  \bar{\bm{F}}\bar{\bm{F}}^{\rm H}\big)\! +\! \sigma_{\rm{n}}^2 = {\rm Tr}\big(\bm{\Psi}\bm{F}
  \bm{F}^{\rm H}\big)\! +\! \sigma_{\rm{n}}^2 .
\end{align}
 This yields
\begin{align}\label{eq59}
 \sigma_{\rm{n}}^2+{\rm Tr}\big(\bm{F}\bm{F}^{\rm H}\bm{\Psi}\big) =& {\sigma_{\rm{n}}^2}/\left({1 -
  {\rm Tr}\big(\bm{\Psi}\bar{\bm{F}}\bar{\bm{F}}^{\rm H}\big)}\right) .
\end{align}
 Thus, given the Pareto optimal $\bar{\bm{F}}_{\rm opt}$, the Pareto optimal
 $\bm{F}_{\rm opt}$ is expressed as
\begin{align}\label{eq60}
 \bm{F}_{\rm opt} =&\sqrt{{\sigma_{\rm{n}}^2}/[{1-{\rm Tr}\big(\bm{\Psi}\bar{\bm{F}}_{\rm opt}
  \bar{\bm{F}}_{\rm opt}^{\rm H}\big)}]} \, \bar{\bm{F}}_{\rm opt} .
\end{align}
 Given (\ref{eq60}) and (\ref{eq56}), we arrive at the following lemma.

\begin{lemma}\label{L4}
 The Pareto optimal solutions $\bm{F}_{\rm opt}$ of $\textbf{Opt.\,2.2}$ under the multiple
 weighted power constraints satisfy the following structure
\begin{align}\label{eq61}
 \bm{F}_{\rm opt} &= \frac{\sigma_{\rm{n}} \bar{\bm{\Omega}}^{-\frac{1}{2}} \bm{V}_{\bm{\mathcal{H}}}
  \bm{\Lambda}_{\widetilde{\bar{\bm{F}}}} \bm{U}_{\rm{Arb}}^{\rm H} }
  {\big[1- {\rm Tr}\big(\bar{\bm{\Omega}}^{-\frac{1}{2}} \bm{\Psi} \bar{\bm{\Omega}}^{-\frac{1}{2}}
  \bm{V}_{\bm{\mathcal{H}}} \bm{\Lambda}_{\widetilde{\bar{\bm{F}}}} \bm{\Lambda}_{\widetilde{\bar{\bm{F}}}}^{\rm H}
  \bm{V}_{\bm{\mathcal{H}}}^{\rm H}\big)\big]^{\frac{1}{2}} } .
\end{align}
\end{lemma}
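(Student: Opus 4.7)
The plan is to exploit the change of variables introduced in (\ref{F_hat}) to reduce the robust problem to the deterministic multiple weighted power constraint setting already solved in Lemma~\ref{L3}. First, I would verify that after substituting $\bm{F}=\big[\sigma_{\rm n}^2+{\rm Tr}(\bm{F}\bm{F}^{\rm H}\bm{\Psi})\big]^{1/2}\bar{\bm{F}}$ into \textbf{Opt.\,2.2}, the noise covariance $\bm{K}_{\rm n}=\big[\sigma_{\rm n}^2+{\rm Tr}(\bm{F}\bm{F}^{\rm H}\bm{\Psi})\big]\bm{I}$ cancels against the scaling of $\bm{F}$, so the objective collapses from $\bm{F}^{\rm H}\widehat{\bm{H}}^{\rm H}\bm{K}_{\rm n}^{-1}\widehat{\bm{H}}\bm{F}$ to the cleaner form $\bar{\bm{F}}^{\rm H}\widehat{\bm{H}}^{\rm H}\widehat{\bm{H}}\bar{\bm{F}}$. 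Simultaneously, each constraint ${\rm Tr}(\bm{\Omega}_i\bm{F}\bm{F}^{\rm H})\le P_i$ is rewritten, via the algebra in (\ref{eq49})--(\ref{eq50}), as ${\rm Tr}\big((\sigma_{\rm n}^2\bm{\Omega}_i+P_i\bm{\Psi})\bar{\bm{F}}\bar{\bm{F}}^{\rm H}\big)\le P_i$, yielding exactly \textbf{Opt.\,2.4}.

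Next, I would argue that the Pareto optimal set of \textbf{Opt.\,2.4} coincides with that of the single weighted-sum problem \textbf{Opt.\,2.5}: scalarize the $I$ constraints with nonnegative multipliers $\alpha_i$, combine them into the composite matrix $\bar{\bm{\Omega}}=\sum_{i=1}^I\alpha_i(\sigma_{\rm n}^2\bm{\Omega}_i+P_i\bm{\Psi})$, and recover each $\alpha_i$ iteratively via the sub-gradient procedure of Algorithm~\ref{algorithm_1}, paralleling the argument used in the proof of Lemma~\ref{L3}. Once this reduction is in place, \textbf{Opt.\,2.5} is a matrix-monotonic problem of precisely the form solved by Lemma~\ref{L3}, with $\bm{\Pi}$ replaced by $\widehat{\bm{H}}^{\rm H}\widehat{\bm{H}}$ and $\bm{\Omega}$ replaced by $\bar{\bm{\Omega}}$. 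Lemma~\ref{L3} then immediately delivers $\bar{\bm{F}}_{\rm opt}=\bar{\bm{\Omega}}^{-1/2}\bm{V}_{\bm{\mathcal{H}}}\bm{\Lambda}_{\widetilde{\bar{\bm{F}}}}\bm{U}_{\rm Arb}^{\rm H}$, where $\bm{V}_{\bm{\mathcal{H}}}$ is defined by the SVD in (\ref{eq57}).

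Finally, to recover $\bm{F}_{\rm opt}$ from $\bar{\bm{F}}_{\rm opt}$, I would left-multiply (\ref{F_hat}) by $\bm{F}^{\rm H}\bm{\Psi}$, take the trace, and solve the resulting scalar equation for the normalization factor $\sigma_{\rm n}^2+{\rm Tr}(\bm{F}\bm{F}^{\rm H}\bm{\Psi})$, which is exactly the manipulation carried out in (\ref{eq58})--(\ref{eq59}). Substituting the closed form of this scalar into $\bm{F}=\big[\sigma_{\rm n}^2+{\rm Tr}(\bm{F}\bm{F}^{\rm H}\bm{\Psi})\big]^{1/2}\bar{\bm{F}}_{\rm opt}$, as in (\ref{eq60}), produces the stated expression (\ref{eq61}) after using the optimal $\bar{\bm{F}}_{\rm opt}$.

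I expect the main obstacle to be rigorously justifying that the change of variables $\bar{\bm{F}}\mapsto\bm{F}$ is well-defined and preserves the Pareto optimal set. Specifically, since the scaling factor in (\ref{F_hat}) itself depends on $\bm{F}$, one must check that the denominator $1-{\rm Tr}(\bar{\bm{\Omega}}^{-1/2}\bm{\Psi}\bar{\bm{\Omega}}^{-1/2}\bm{V}_{\bm{\mathcal{H}}}\bm{\Lambda}_{\widetilde{\bar{\bm{F}}}}\bm{\Lambda}_{\widetilde{\bar{\bm{F}}}}^{\rm H}\bm{V}_{\bm{\mathcal{H}}}^{\rm H})$ is strictly positive at the optimum, so that the inverse map is single-valued and the original problem and the reduced one have the same feasible region and the same monotone matrix ordering on their objectives. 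This positivity follows because the original power constraints bound ${\rm Tr}(\bm{\Psi}\bm{F}\bm{F}^{\rm H})$ by a finite constant, which via (\ref{eq59}) translates into ${\rm Tr}(\bm{\Psi}\bar{\bm{F}}_{\rm opt}\bar{\bm{F}}_{\rm opt}^{\rm H})<1$, thereby closing the argument.
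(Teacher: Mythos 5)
Your proposal is correct and follows essentially the same route as the paper: the constraint reformulation via (\ref{eq49})--(\ref{eq50}), the change of variables (\ref{F_hat}) reducing \textbf{Opt.\,2.2} to \textbf{Opt.\,2.4}, the scalarization to \textbf{Opt.\,2.5} by the argument of Lemma~\ref{L3}, and the inversion of the scaling via (\ref{eq58})--(\ref{eq60}). Your added remark on the strict positivity of the denominator (equivalently ${\rm Tr}(\bm{\Psi}\bar{\bm{F}}_{\rm opt}\bar{\bm{F}}_{\rm opt}^{\rm H})<1$) is a point the paper leaves implicit, and your justification of it is sound.
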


\begin{figure*}[tp!]\setcounter{equation}{62}
\vspace*{-3mm}
\begin{align}\label{eq66}
\hspace*{-4mm} \textbf{Opt.\,3.3:} & \min\limits_{\bm{F},\bm{Q}_{\bm{X}}}  \int f\Big(\bm{Q}_{\bm{X}}^{\rm{H}} \bm{F}^{\rm{H}}( \widehat{\bm{H}} \hspace*{-1mm}+\hspace*{-1mm} \bm{\Sigma}^{\frac{1}{2}} \bm{H}_{\rm W}
  \bm{\Psi}^{\frac{1}{2}} )^{\rm H} \bm{R}_{\rm n}^{-1}  (\widehat{\bm{H}} \hspace*{-1mm}+\hspace*{-1mm}\bm{\Sigma}^{\frac{1}{2}}
  \bm{H}_{\rm W} \bm{\Psi}^{\frac{1}{2}}) \bm{F}\bm{Q}_{\bm{X}}  \Big) p\big( \bm{H}_{\rm W} \big)
  {\rm{d}} \bm{H}_{\rm W} ,  {\rm{s.t.}} \psi_i(\bm{F}\big)\le 0 , 1\le i\le I ,
\end{align}
\hrulefill
\begin{align}\label{eq67}
\hspace*{-4mm} \textbf{Opt.\,3.4:} & \min\limits_{\bm{F},\bm{Q}_{\bm{X}}}  {f}\Big( \bm{Q}_{\bm{X}}^{\rm{H}}  \int \bm{F}^{\rm{H}}\big(
  \widehat{\bm{H}}\hspace*{-1mm} +\hspace*{-1mm}\bm{\Sigma}^{\frac{1}{2}} \bm{H}_{\rm W}
  \bm{\Psi}^{\frac{1}{2}}\big)^{\rm H}
  \bm{R}_{\rm n}^{-1}  \big(\widehat{\bm{H}}\hspace*{-1mm} +\hspace*{-1mm}\bm{\Sigma}^{\frac{1}{2}} \bm{H}_{\rm W}
  \bm{\Psi}^{\frac{1}{2}} \big)\bm{F} p\big(\bm{H}_{\rm W}\big ){\rm{d}} \bm{H}_{\rm W}
\bm{Q}_{\bm{X}} \Big),
   {\rm{s.t.}}  \psi_i(\bm{F}\big)\le 0, 1\le i\le I.
\end{align}
\hrulefill
\begin{align}\label{eq68}
 \textbf{Opt.\,3.5:} & \min\limits_{\bm{F}}  \bm{\lambda} \Big(\mathbb{E}\{\bm{F}^{\rm H} (\bm{\Psi}^{\frac{1}{2}} \bm{H}_{\rm W}^{\rm H} \bm{\Sigma}^{\frac{1}{2}}+\widehat{\bm{H}}^{\rm H} )
 \bm{R}_{\rm n}^{-{1}} (\bm{\Sigma}^{\frac{1}{2}} \bm{H}_{\rm W} \bm{\Psi}^{\frac{1}{2}}+\widehat{\bm{H}} )
  \bm{F}\}  \Big),
  {\rm{s.t.}}  \psi_i(\bm{F}\big)\le 0 , 1\le i\le I .
\end{align}
\hrulefill
\vspace*{-4mm}
\end{figure*}

 The robust optimal structure under the multiple weighted power constraints given in
 Lemma~\ref{L4} is significantly different from the existing conclusions previously designed for the robust
 solutions under the sum power constraints \cite{XingTSP201501} and for the transceiver designs
  relying on perfect CSI under the per-antenna power constraints \cite{XingTSP201601}. In the
  traditional robust transceiver designs under the sum power
  constraint, there is no restriction on the unitary matrix in the
  eigenvalue decomposition of the covariance matrix
  $\bm{F}\bm{F}^{\rm{H}}$. For the multiple weighted power
  constraints, there is a restriction on the unitary matrix in the
  eigenvalue decomposition. Then a new rotation matrix $
  \bar{\bm{\Omega}}^{-\frac{1}{2}}$ is needed, based on which the
  precoder matrix can align the direction with the space constructed
  by the multiple weighting matrices.

\section{Stochastically Robust Matrix-Monotonic Optimization}\label{S4}

 When  CSI at the receiver (CSIR) is  perfect, but  only   statistical CSI at the  transmitter (CSIT) is available, the
 corresponding stochastically robust matrix-monotonic optimization can be formulated as
 \cite{Jafar2005} \setcounter{equation}{58}
\begin{align}\label{eq62}
\begin{array}{lcl}
 \textbf{Opt.\,3.1:} & \min\limits_{\bm{X}} & \mathbb{E}_{\bm{H}}\big\{ f\big(\bm{X}^{\rm H}
  \bm{H}^{\rm H}\bm{R}_{\rm n}^{-1}\bm{H}\bm{X}\big)\big\} , \\
 & {\rm{s.t.}} & \psi_i(\bm{X})\le 0 , ~ 1\le i\le I ,
\end{array}
\end{align}
 where $\bm{R}_{\rm n}$ is the noise covariance matrix.  For simplicity, we mainly consider $\bm{R}_{\rm n}=\sigma_n^2\bm{I}$ as that in Section~\ref{S3}.  For this kind of
 optimization problems, the objective function is an average value over the
 distribution of the channel matrix $\bm{H}$ modeled by
 \begin{align}\label{H_Stochastic} 
 \bm{H} =&\widehat{\bm{H}} + \bm{\Sigma}^{\frac{1}{2}}\bm{H}_{\rm W}\bm{\Psi}^{\frac{1}{2}} ,
 \end{align}
 where  $\bm{\Sigma}$ and $\bm{\Psi}$ are the row
 and column correlation matrices, respectively. For MIMO systems, $\bm{\Sigma}$ is the
 spatial correlation matrix of the receiver antenna array, while $\bm{\Psi}$ is the
 spatial correlation matrix of the transmitter antenna array. Since the constraints are
 right unitarily-invariant, \textbf{Opt.\,3.1} can be expressed as
\begin{align}\label{eq64}
\begin{array}{lcl}
 \textbf{Opt.\,3.2:}\!\! &\!\! \min\limits_{\bm{F}}\!\! &\!\! \mathbb{E}_{\bm{H}}\big\{\! f\big(\bm{Q}_{\bm{X}}^{\rm H}
  \bm{F}^{\rm H}\bm{H}^{\rm H}\bm{R}_{\rm n}^{-1}\bm{H}\bm{F}\bm{Q}_{\bm{X}}\big)\! \big\} , \\
 \!\! &\!\! {\rm{s.t.}}\!\! &\!\! \psi_i(\bm{F})\le 0 , ~ 1\le i\le I .
\end{array}
\end{align}

 The stochastically robust matrix-monotonic optimization naturally aims at optimizing the
 distribution of the random matrix $\bm{H}\bm{F}$, based on the channel model (\ref{H_Stochastic}).
 Therefore, \textbf{Opt.\,3.2} can be rewritten as \textbf{Opt.\,3.3} of (\ref{eq66})
 at the top of the next page, where $p\big(\bm{H}_{\rm W}\big)$ is the probability
 density function (PDF) of $\bm{H}_{\rm W}$. As pointed out in \cite{Zhang2008}, the analytical  expression  of the average value of  an  arbitrary objective function ${f}(\cdot)$ in  \textbf{Opt.\,3.3}  is  impossible to obtain, which  thus makes \textbf{Opt.\,3.3} difficult  to address.  An alternative scheme is to consider the average matrix  in the objective function ${f}(\cdot)$ and the corresponding optimization problem is \textbf{Opt.\,3.4} given in (\ref{eq67}).
 Taking
 \begin{align}
 \bm{\Pi}= \mathbb{E} \{\big(
  \widehat{\bm{H}} +\bm{\Sigma}^{\frac{1}{2}} \bm{H}_{\rm W}
  \bm{\Psi}^{\frac{1}{2}}\big)^{\rm H}
  \bm{R}_{\rm n}^{-1}  \big(\widehat{\bm{H}} +\bm{\Sigma}^{\frac{1}{2}} \bm{H}_{\rm W}
  \bm{\Psi}^{\frac{1}{2}} \big) \}
 \end{align} the optimal solutions of $\bm{Q}_{\bm{X}}$ are given in Table~\ref{tab1}. Based on the optimal solutions of $\bm{Q}_{\bm{X}}$. Similar to \textbf{Opt.\,1.2}, the optimal solutions $\bm{F}$ of \textbf{Opt.\,3.4} fall in the Pareto optimal solution set of \textbf{Opt.\,3.5} in (\ref{eq68}).
 It is obvious that
 the key to the optimization of  \textbf{Opt.\,3.5} is to maximize
 eigenvalues of $\mathbb{E}[ \bm{F}^{\rm H} (\bm{\Psi}^{\frac{1}{2}} \bm{H}_{\rm W}^{\rm H} \bm{\Sigma}^{\frac{1}{2}}+\widehat{\bm{H}}^{\rm H} )
 \bm{R}_{\rm n}^{-{1}} (\bm{\Sigma}^{\frac{1}{2}} \bm{H}_{\rm W} \bm{\Psi}^{\frac{1}{2}}+\widehat{\bm{H}} )
 \bm{F}]= \bm{F}^{\rm H}(\widehat{\bm{H}}^{\rm H} \bm{R}_{\rm n}^{-1} \widehat{\bm{H}} +\text{Tr}(\bm{\Sigma}\bm{R}_{\rm n}^{-{1}} )\bm{\Psi})\bm{F}$. Hence, this stochastically robust optimization problem \textbf{Opt.\,3.5} is equivalent to\setcounter{equation}{65}\begin{align}\label{eq69}
\begin{array}{lcl}
 \hspace*{-4mm} \textbf{Opt.\,3.6:}\hspace{-3mm}   & \hspace{-3mm} \max\limits_{\bm{F}} &\! \! \! \hspace{-1mm} \bm{\lambda}\big( \bm{F}^{\rm H}(\widehat{\bm{H}}^{\rm H} \bm{R}_{\rm n}^{-1} \widehat{\bm{H}} \! \! +\!\!  \text{Tr}(\bm{\Sigma}\bm{R}_{\rm n}^{-{1}} )\bm{\Psi})\bm{F}\big), \\
 &{\rm{s.t.}} & \hspace{-2mm} \psi_i(\bm{F})\le 0 ,  1\le i\le I.
\end{array}
\end{align}
 As discussed previously in Section~\ref{S2.3}, the above multi-objective optimization problem
 is equivalent to the following matrix-monotonic optimization problem
\begin{align}\label{eq70}
\begin{array}{lcl}
 \textbf{Opt.\,3.7:}\!\! &\! \max\limits_{\bm{F}} &  \!\bm{F}^{\rm H}(\widehat{\bm{H}}^{\rm H} \bm{R}_{\rm n}^{-1} \widehat{\bm{H}} \!+\!\text{Tr}(\bm{\Sigma}\bm{R}_{\rm n}^{-{1}} )\bm{\Psi})\bm{F}, \\
 & {\rm{s.t.}} & \!\psi_i(\bm{F})\le 0 ,  1\le i\le I.
\end{array}
\end{align}
 Again, we discuss the Pareto optimal solutions of this stochastically robust matrix-monotonic
 optimization problem under three specific power constraints, respectively.

{\emph{1)}~{\rm\bf Shaping Constraint}}:
 Under the shaping constraint of $\psi (\bm{F})=\bm{F}\bm{F}^{\rm H}-\bm{R}_{\rm s}$,
 the optimal solution $\bm{F}_{\rm opt}$ to \textbf{Opt.\,3.7} is specified
 by Lemma~\ref{L1}. Specifically, when the rank of $\bm{R}_{\rm s}$ is not higher than
 the number of columns and the number of rows in $\bm{F}$, $\bm{F}_{\rm opt}$ is a
 square root of $\bm{R}_{\rm s}$.

{\emph{2)}~{\rm\bf Joint Power Constraint}}:
 Clearly, under the joint power constraint (\ref{eq47}), \textbf{Opt.\,3.7} is identical
 to \textbf{Opt.\,1.5} with
 \begin{align}
 \label{Pi}
 {\bm{\Pi}}=\widehat{\bm{H}}^{\rm H} \bm{R}_{\rm n}^{-1} \widehat{\bm{H}} \!+\!\text{Tr}(\bm{\Sigma}\bm{R}_{\rm n}^{-{1}} )\bm{\Psi}.
 \end{align}
 Therefore, the Pareto optimal solutions
 $\bm{F}_{\rm opt}$ of \textbf{Opt.\,3.7} under the joint power constraint are defined
 exactly in Lemma~\ref{L2} by simply replacing $\bm{\Pi}$ in (\ref{eq32}) and (\ref{eq33})
 with (\ref{Pi}).

{\emph{3)}~{\rm\bf Multiple Weighted Power Constraints}}:
 Obviously, under the multiple weighted power constraints (\ref{eq48}), the Pareto optimal
 solutions $\bm{F}_{\rm opt}$ of \textbf{Opt.\,3.7} are specified by Lemma~\ref{L3}, where
 $\bm{\Pi}$ should be replaced by (\ref{Pi}).

\section{Worst Case Robust Matrix-Monotonic Optimization}\label{S5}
In this section, we  consider  the   norm bounded
CSI error, i.e., $\|\Delta\bm{H}\|_2\le \gamma$
 with $\| \cdot \|_2$ denoting  matrix Spectral norm, and
adopt   the worst case (min-max)
 criterion as a figure-of-merit for robust designs \cite{JHWang2013}.  Hereafter, spectral norm is adopted since it can act as  both  lower and upper bounds of   the widely adopted Frobenius and Nuclear norms and is generally tractable \cite{JHWang2013,ShiqiUAV}.
 For example, we have  $\| \cdot \|_2\! \le\! \| \cdot \|_F\! \le\!
 \sqrt{\text{rank}(\cdot)}\| \cdot \|_2$,  implying  that  spectral norm
 constrained  error can also provide valuable insights for Frobenius norm constrained case. Moreover, for the same error size, spectral norm  generally  covers the largest error region.

 Let us denote the estimated channel matrix and channel
 error matrix by ${\widehat{\bm{H}}}$ and $\Delta\bm{H}$.  With
 norm-bounded CSI error $\|\Delta\bm{H}\|_2\le \gamma$, substituting
 ${\bm{H}}={\widehat{\bm{H}}}-\Delta\bm{H}$ and
 $\bm{R}_n=\sigma_n^2{\bm{I}}$ into \textbf{Opt.\,1.3}, the robust matrix-monotonic optimization problem under Spectral
norm bounded CSI error can be
 formulated as
\begin{align}\label{eq71}
\begin{array}{lcl}
 \hspace{-5mm}\textbf{Opt.\,4.1:}\!\!\! &\!\! \max\limits_{\bm{F}}\min\limits_{\Delta\bm{H}}\!\! &\!\! \sigma_n^{-2}\bm{F}^{\rm H}
  \big(\widehat{\bm{H}}\! -\! \Delta\bm{H}\big)^{\rm H}
  \big(\widehat{\bm{H}}\! -\! \Delta\bm{H}\big) \bm{F} \! , \\
 \!\!\! &\!\! {\rm{s.t.}}\!\! &\!\! \psi_i(\bm{F})\le 0 , 1 \hspace{-1mm}\le \hspace{-1mm} i\le I,  \|\Delta\bm{H}\|_2\le  \gamma .
\end{array} \!
\end{align}

The Spectral norm is unitarily-invariant, which
  means that for the arbitrary $\Vert \bm{E}\Vert_2 \le \epsilon_s$,
  it yields $\Vert\bm{U} \bm{E}\bm{V}\Vert_2 \le \epsilon_s$ given any
  unitary matrices $\bm{U}$ and $\bm{V}$.  Based on the following
  matrix inequality \cite[P471]{Horn1990}
  \begin{align}
  \big(\sigma_i(\bm{B})-\sigma_1(\bm{C})\big)^{+} \le  \sigma_i(\bm{B}+\bm{C})
  \end{align} we readily conclude that
  \begin{align}
  \sigma_i(\widehat{\bm{H}}\! -\! \Delta\bm{H})\ge  \big(\sigma_i(\widehat{\bm{H}})-\sigma_1(\Delta\bm{H})\big)^{+}.
  \end{align} Therefore, we have the following eigenvalue inequality
   \begin{align}
   &{\bm{\lambda}}\left(\big(\widehat{\bm{H}}\! -\! \Delta\bm{H}\big)^{\rm H}
   \big(\widehat{\bm{H}}\! -\! \Delta\bm{H}\big)\right) \nonumber \\
   & \hspace{-2mm}\succeq {\bm{\lambda}}\left(\big(\widehat{\bm{H}}\! -\! {\bm{U}}_{\widehat{\bm{H}}}\bm{\Lambda}_{\Delta{\bm{H}}}{\bm{V}}_{\widehat{\bm{H}}}^H \big)^{\rm H}
   \big(\widehat{\bm{H}}\! -\!  {\bm{U}}_{\widehat{\bm{H}}}  \bm{\Lambda}_{\Delta{\bm{H}}} {\bm{V}}_{\widehat{\bm{H}}}^H\big) \right)\end{align} where ${\bm{U}}_{\widehat{\bm{H}}}$ and ${\bm{V}}_{\widehat{\bm{H}}}$ are derived from the following SVD
   \begin{align}
   \label{worst_H}
   \widehat{\bm{H}}={\bm{U}}_{\widehat{\bm{H}}}\bm{\Lambda}_{\widehat{\bm{H}}}
   {\bm{V}}_{\widehat{\bm{H}}}^H, \text{with},\bm{\Lambda}_{\widehat{\bm{H}}} \searrow.
    \end{align} The diagonal matrix $\bm{\Lambda}_{\Delta{\bm{H}}}$ equals
   \begin{align}
    [\bm{\Lambda}_{\Delta{\bm{H}}}]_{i,i}=\min\left([\bm{\Lambda}_{\widehat{\bm{H}}}]_{i,i}, {\gamma}\right), \forall i.
   \end{align} Then there exists a unitary matrix $\bm{Q}$ makes the following matrix inequality hold
   \begin{align}
   \label{matrix_inequality_singular}
   &\bm{Q}^{\rm{H}}\big(\widehat{\bm{H}}\! -\! \Delta\bm{H}\big)^{\rm H}
   \big(\widehat{\bm{H}}\! -\! \Delta\bm{H}\big)\bm{Q} \nonumber \\
   & \hspace{-2mm}\succeq \big(\widehat{\bm{H}}\! -\! {\bm{U}}_{\widehat{\bm{H}}}\bm{\Lambda}_{\Delta{\bm{H}}}{\bm{V}}_{\widehat{\bm{H}}}^H \big)^{\rm H}
   \big(\widehat{\bm{H}}\! -\!  {\bm{U}}_{\widehat{\bm{H}}}  \bm{\Lambda}_{\Delta{\bm{H}}} {\bm{V}}_{\widehat{\bm{H}}}^H\big). \end{align}

Based on (\ref{matrix_inequality_singular}), when the constraint functions $\psi_i(\bm{F})$'s in \textbf{Opt.\,4.1} are left unitarily invariant, the worst-case $ \Delta\bm{H}$ for \textbf{Opt.\,4.1} is
\begin{align}\label{worst_H}
   \Delta\bm{H}_{\rm worst}= {\bm{U}}_{\widehat{\bm{H}}}  \bm{\Lambda}_{\Delta{\bm{H}}} {\bm{V}}_{\widehat{\bm{H}}}^H.
\end{align} That is because when $\psi_i(\bm{F})$'s in \textbf{Opt.\,4.1} are left unitarily invariant, such as \textbf{Constraint 6:} the joint power constraint or \textbf{Constraint 5:} the constraints on the eigenvalues of $\bm{F}\bm{F}^{\rm{H}}$, for any feasible $\bm{F}$ and an arbitrary unitary matrix $\bm{Q}$, $\bm{Q}\bm{F}$ is also feasible. Thus it is always possible to find $\bm{Q}$ that makes the matrix inequality in (\ref{matrix_inequality_singular}) hold. Furthermore, based on the worst-case  $\Delta\bm{H}_{\rm worst}$ in (\ref{worst_H}), \textbf{Opt.\,4.1} is rewritten  as
 \begin{align}\label{eq78}
    \begin{array}{lcl}
 \hspace*{-5mm}\textbf{Opt.\,4.2:}  \hspace*{-3mm} & \max\limits_{\bm{F}} &  \hspace*{-3mm} \sigma_{\rm{n}}^{-2}\bm{F}^{\rm H}
 \big(\widehat{\bm{H}}\! -\! \Delta\bm{H}_{\rm worst}\big)^{\rm H}
 \big(\widehat{\bm{H}}\! -\! \Delta\bm{H}_{\rm worst}\big) \bm{F} \! , \\
 \!\!\! &\!\! {\rm{s.t.}}\!\! &  \hspace*{-2mm} \psi_i(\bm{F})\le 0 , 1\le i\le I,
 \end{array}
 \end{align}

 We would like to highlight that all the constraints $\psi_i(\bm{F})$'s in \textbf{Opt.\,4.1} are right unitarily invariant. The objective function in \textbf{Opt.\,4.2} is an upper bound of the worst case of the objective function of \textbf{Opt.\,4.1} and this bound is tight when $\psi_i(\bm{F})$'s are also left unitarily invariant. Specifically, when $\psi_i(\bm{F})$'s in \textbf{Opt.\,4.1} are right unitarily invariant, the term $\big(\widehat{\bm{H}}\! -\! \Delta\bm{H}_{\rm worst}\big)^{\rm H}
 \big(\widehat{\bm{H}}\! -\! \Delta\bm{H}_{\rm worst}\big)$ is the worst case of $\big(\widehat{\bm{H}}\! -\! \Delta\bm{H}\big)^{\rm H}
 \big(\widehat{\bm{H}}\! -\! \Delta\bm{H}\big)$ only when $\Delta\bm{H}$ is restricted to have the same SVD unitary matrices as
 $\widehat{\bm{H}}$.

{\emph{1)}~{\rm\bf Shaping Constraint}}:
 For the shaping constraint (\ref{eq46}), the optimal
 solution $\bm{F}_{\rm opt}$ of \textbf{Opt.\,4.2} is also specified by Lemma~\ref{L1}.
 That is, when the rank of $\bm{R}_{\rm s}$ is not larger than the number of columns
 and the number of rows in $\bm{F}$, $\bm{F}_{\rm opt}$ of \textbf{Opt.\,4.2} is a
 square root of $\bm{R}_{\rm s}$.

{\emph{2)}~{\rm\bf Joint Power Constraint}: Under the joint power constraint (\ref{eq47}), \textbf{Opt.\,4.2} is identical
 to \textbf{Opt.\,1.5} with
 \begin{align}
 \label{Pi_A}
 \bm{\Pi}=\sigma_{\rm{n}}^{-2}\big(\widehat{\bm{H}}\! -\! \Delta\bm{H}_{\rm worst}\big)^{\rm H}
 \big(\widehat{\bm{H}}\! -\! \Delta\bm{H}_{\rm worst}\big).
 \end{align} As a result, the Pareto optimal solutions
 $\bm{F}_{\rm opt}$ of \textbf{Opt.\,4.2} under the joint power constraint are defined
 exactly in Lemma~\ref{L2} by simply replacing $\bm{\Pi}$ in (\ref{eq32}) and (\ref{eq33})
 with (\ref{Pi_A}).


{\emph{3)}~{\rm\bf Multiple Weighted Power Constraints}:
 With the multiple weighted power constraints of (\ref{eq48}), the Pareto optimal solutions of
 \textbf{Opt.\,4.2}  are specified by Lemma~\ref{L3}, where
 $\bm{\Pi}$ should be replaced by (\ref{Pi_A}).

\section{Simulation Results and Discussions}\label{S9}

In this section, we take the MSE criterion (\textbf{Obj.\,2} in Table~\ref{tab0}) of MIMO systems as a central figure-of-merit to demonstrate the proposed  robust designs in Section~\ref{S3} with statistically  imperfect CSIT (ICSIT) and  CSIR (ICSIR), Section~\ref{S4} with statistically imperfect CSIT (ICSIT) and perfect CSIR (PCSIR), and Section~\ref{S5} with deterministically imperfect CSIT and CSIR.  Specifically,  in  Section~\ref{S3} and Section~\ref{S4},   the  sum average MSE is studied  to illustrate the influence of imperfect   CSIT  and/or CSIR on  average symbol detection performance.     While  in  Section~\ref{S5}, the worst-case MSE  is adopted  to  guarantee the symbol detection   performance  for all channel realizations
Notice that the  proposed  robust  designs  in above Sections all
have  analytical solutions, and can be reduced  to simple  power allocation  problems with water-filling solutions.

 Unless otherwise stated,  numerical  results are presented for  the point-to-point MIMO scenario with the transmitter and receiver equipped with $N_t=4$ and $N_r=4$ antennas, respectively. Moreover,  the number of data streams is $L=2$. According to  \eqref{Channel_Error}  adopted in Section~\ref{S3} and Section~\ref{S4},   we assume that the imperfect  CSI   consists  of the estimated  term   $\widehat{\bm{H}}$ distributed as $\mathcal{CN}(\bm{0}, (1-\sigma_e^2)\bm{I}_{N_r} \otimes \bm{I}_{N_t})$  and the error  term $\bm{H}_W\bm{\Psi}^{\frac{1}{2}}$,  where $\bm{\Psi}$  is defined  by the exponential model, i.e.,  $[\bm{\Psi}]_{m,n}=\sigma_e^2p_t^{\vert m-n\vert}$ with $p_t=0.5$ and $\sigma_e^2=0.1$, to  realize the normalized channel $\mathbb{E}\big\{\big[\bm{H}\big]_{m,n}
 \big[\bm{H}\big]_{m,n}^*\big\}=1$, $\forall m,n$. While for the worst-case optimization in Section~\ref{S5},  the  relative  error threshold subject to Spectral norm   is set as $\gamma={s}{\Vert \widehat{\bm{H}}\Vert_2}$ with $s\in [0,1]$. In addition,   for both  Section~\ref{S3} and Section~\ref{S4}, since the unknown  weighting factors need to be determined  via the sub-gradient method,  the per-antenna power  constraints, ${\rm{Tr}}({\bm \Omega}_i\bm{F}\bm{F}^H) \le P_{i}, \forall i=1, \cdots N_t$,  where ${\bm \Omega}_i=\text{diag}[\bm{0}_{1\times i-1}, 1, \bm{0}_{1\times N_t-i}], \forall i$ and  $P_i=P_t,\forall i$,    are  mainly studied  in the simulations as  a special case of  multiple weighted power
constraints.  While for Section~\ref{S5},   the
 joint power constraints $\text{Tr}(\bm{F}\bm{F}^H)\le (L-1){P}_t$ and $\bm{F}\bm{F}^H \preceq P_t\bm{I}_{N_t}$  are  investigated  due to the tractability.  Particularly,  the  per-antenna power limitation  $ \text{Tr}(\bm{\Omega}_i\bm{F}\bm{F}^H)\le P_t, \forall i$ can   be readily inferred  from  the  shaping  power constraint $\bm{F}\bm{F}^H\preceq P_t\bm{I}_{N_t}$.
We also define the SNR as  $\frac{P_t}{\sigma_n^2}$,
where $P_t=1$W  and the noise power $\sigma_n^2$ is   varied.

For a comprehensive comparison, we also consider three baselines for the MIMO scenario  as follows: For Section~\ref{S3} and Section~\ref{S4},   the naive design  that  simply regards  $\widehat{\bm{H}}$  as a perfect channel estimate of the  instantaneous channel ${\bm{H}}$  is  studied,  to which  the  optimal solution is derived  by solving the problem \eqref{eq54}/\eqref{eq70} with $\bm{\Psi}\!=\!\bm{0}$, and  the sum average MSE  is obtained through Monte Carlo experiments.
 While for
 Section~\ref{S5}, the  nonrobust    design  is studied by firstly considering  $\Delta\bm{H}=\bm{0}$ in the problem \eqref{eq71}, and   then  the obtained  optimal  solution is substituted into the inner minimization of the problem \eqref{eq71}   to find the worst-case MSE. Moreover, the ideal case  assuming both   PCSIR and perfect CSIT (PCSIT) is also considered for all above Sections.



Fig.~\ref{Fig4} shows the  sum average MSE of  all studied  designs  in Section~\ref{S3} and Section~\ref{S4} as the function of the  channel error $\sigma_e^2$. Clearly, it is observed  that when  $\sigma_e^2$ decreases, sum average MSE performances of all studied designs  improve.
Also,
the performance gap between  the  native design  and  robust design in Section~\ref{S3} with ICSIT and ICSIR    becomes narrowed. In particular, as the rise of $\sigma_e^2$,  the less performance loss  of the robust design in Section~\ref{S4} with  PCSIR compared to  that in Section~\ref{S3} with  ICSIR can be observed, which further indicates that PCSIR is crucial  to realize the  acceptable average  MSE performance.
\begin{figure}[!t]	\vspace*{-6mm}
	\begin{center}
		\includegraphics[width=0.4\textwidth]{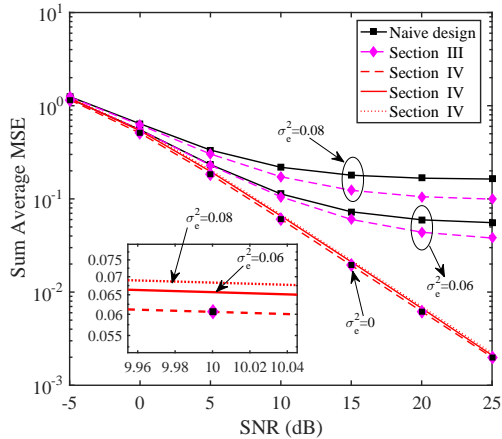}
	\end{center}
	\vspace*{-3mm}
	\caption{Sum average MSE versus the estimated channel error $\sigma_e^2$ for all studied designs in Section~\ref{S3} and Section~\ref{S4},.}
	\label{Fig4}
	\vspace*{-4mm}
\end{figure}

Fig.~\ref{Fig5} shows the  worst-case  MSE of  the proposed robust    design  in Section~\ref{S5} and the other baselines  as the function of SNR.  Naturally, the ideal design achieves the best worst-case MSE performance,  and   the proposed  robust design  is the next.  The nonrobust design  has the worst performance since the  robustness against channel error is not considered.  Similarly to the  robust design in Section~\ref{S4} with  ICSIT and PCSIR,  we also find that the  slopes of  all
 studied  worst-case  designs  are nearly identical,  and the corresponding  worst-case MSE performance  is  similar especially   at high SNR region, because that   in this context the high transmit power weakens
 the influence of deterministic channel error on  the achievable MSE.   The above  conclusions can also be drawn  when the increasing number of  date streams  $L=3$ is considered. In this context,    the performance gap among all studied  designs  is further enlarged.
\begin{figure}[!t]	\vspace*{-6mm}
	\begin{center}
		\includegraphics[width=0.4\textwidth]{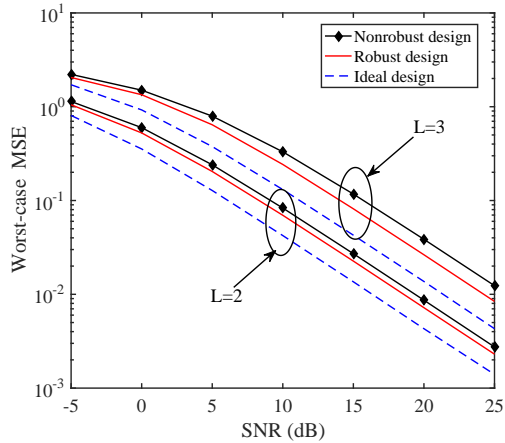}
	\end{center}
	\vspace*{-3mm}
	\caption{ Worst-case MSE versus SNR for the proposed     robust design in Section~\ref{S5} and all other baselines.}
	\label{Fig5}
	\vspace*{-4mm}
\end{figure}
\begin{figure}[!t]	\vspace*{-4mm}
	\begin{center}
		\includegraphics[width=0.4\textwidth]{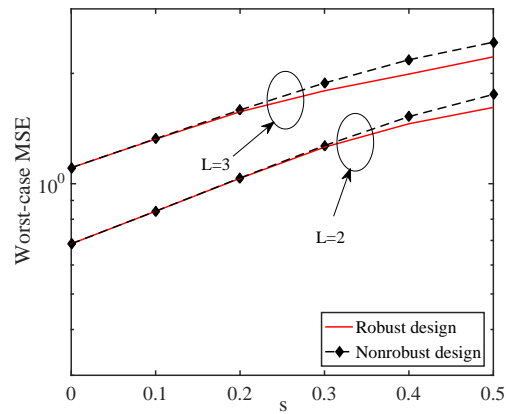}
	\end{center}
	\vspace*{-3mm}
	\caption{Worst-case MSE versus the relative  error threshold $s$ for the proposed     robust and nonrobust designs in Section~\ref{S5}.}
	\label{Fig7}
	\vspace*{-4mm}
\end{figure}

Fig.~\ref{Fig7} shows the  worst-case  MSE of  the proposed robust    design  in Section~\ref{S5} and the other baselines  as the function of the relative  error threshold $s$.  As expected, the robust design
outperforms the non-robust one, and the performance  gain becomes  more evident  with the increase of the  number of  date streams $L=3$ and error threshold $s$.

In order to assess the performance of the proposed solutions under
general multiple weighted power constraints, without loss of
generality we first build an exponential correlation matrix
$\bm{\Omega}$ with $[\bm{\Omega}]_{i,j}=0.3^{|i-j|}$. Based on
$\bm{\Omega}$, a pair of weighting matrices i.e., $\bm{\Omega}_1$
and $\bm{\Omega}_2$ are constructed. Specifically, $\bm{\Omega}_1$
corresponds to the first two eigenchannels and $\bm{\Omega}_2$
corresponds to the last two eigenchannels.  In addition, the power
ratio between the two constraints is 0.6 and 0.4. Moreover, in the
simulations the numbers of antennas and data streams are equal to each
other. Then both the MSE minimization and sum rate maximization are
convex, hence the problem can be solved by using CVX \cite{tool}.  It
can be concluded from Fig.~\ref{Fig801} and Fig.~\ref{Fig802} that the proposed solutions
have the same performance as that computed by CVX for all the settings
investigated. We would like to point out that the numerical algorithms
based on CVX have no tangible physical meanings and suffer from high
computational complexity as well as from limited scalability.

\section{Conclusions}\label{S10}

 In this paper, a comprehensive framework for matrix-monotonic optimization has been given
 under various power constraints, including shaping constraint, joint power constraint and
 multiple weighted power constraints. Matrix-monotonic optimization of three
 different CSI scenarios have been investigated in depth, which are: 1)~both
 transmitter and receiver have imperfect CSI; 2)~perfect CSI is available at the receiver
 but the transmitter has only channel statistics; and 3)~perfect CSI is available at the receiver, but the channel
 estimation error at the transmitter is norm-bounded. In all three cases, the matrix-monotonic optimization framework has been used to derive closed-form optimal
 structures of the optimal matrix variables, which significantly simplifies the associated
 optimization problems and reveals a range of underlying physical insights.
\begin{figure}[!t]	\vspace*{-8mm}
	\begin{center}
		\includegraphics[width=0.4\textwidth]{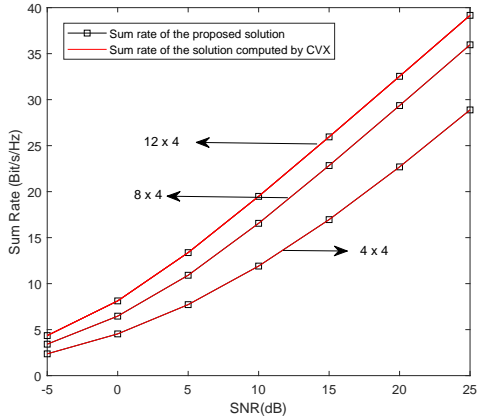}
	\end{center}
	\vspace*{-3mm}
	\caption{The performance comparisons between the proposed solution and the solution computed by CVX in terms of sum rate under multiple weighted power constraints.}
	\label{Fig801}
	\vspace*{-4mm}
\end{figure}
\begin{figure}[!t]	\vspace*{-8mm}
	\begin{center}
		\includegraphics[width=0.4\textwidth]{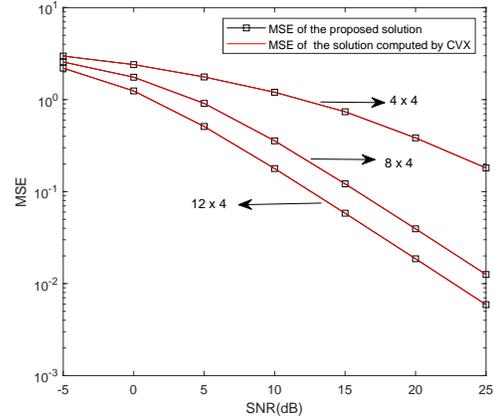}
	\end{center}
	\vspace*{-3mm}
	\caption{The performance comparisons between the proposed solution and the solution computed by CVX in terms of sum MSE under multiple weighted power constraints.}
	\label{Fig802}
	\vspace*{-4mm}
\end{figure}

\appendices

\section{Proof of Table~\ref{tab1}}
\label{appendix_table_II}

In the following, the detailed proofs for the optimal $\bm{Q}_{\bm{X}}$ in Table~\ref{tab1} are given.

\noindent  \textbf{Obj. 1:}  For \textbf{Obj. 1}, based on the righthand side of Matrix Inequality 3 and taking $\bm{C}=\bm{Q}^{\rm{H}}_{\bm{X}}\bm{F}^{\rm{H}}{\boldsymbol{\Pi}}\bm{F}\bm{Q}_{\bm{X}}$
and $\bm{D}={\boldsymbol \Phi}$, the optimal $\bm{Q}_{\bm{X}}$ can be derived.

\noindent  \textbf{Obj. 2:}  For \textbf{Obj. 2}, based on the lefthand side of Matrix Inequality 2 and taking $\bm{C}=\bm{Q}^{\rm{H}}_{\bm{X}}\bm{F}^{\rm{H}}{\boldsymbol{\Pi}}\bm{F}\bm{Q}_{\bm{X}}$
and $\bm{D}={\boldsymbol \Phi}$, the optimal $\bm{Q}_{\bm{X}}$ can be derived.

\noindent  \textbf{Obj. 3:}  For \textbf{Obj. 3}, based on the lefthand side of Matrix Inequality 1 and taking $\bm{C}=(\bm{Q}^{\rm{H}}_{\bm{X}}\bm{F}^{\rm{H}}{\boldsymbol{\Pi}}\bm{F}
\bm{Q}_{\bm{X}}+\alpha\bm{I})^{-1}$
and $\bm{D}=\bm{A}\bm{A}^{\rm{H}}$, the optimal $\bm{Q}_{\bm{X}}$ can be derived.

\noindent  \textbf{Obj. 4:} For \textbf{Obj. 4}, the following equality holds
\begin{align}
&{\rm{log}}|\bm{A}^{\rm{H}}(\bm{Q}^{\rm{H}}_{\bm{X}}\bm{F}^{\rm{H}}{\boldsymbol{\Pi}}\bm{F}
\bm{Q}_{\bm{X}}+\alpha\bm{I})^{-1}\bm{A}+\boldsymbol{\Phi}| \nonumber \\
=& \ {\rm{log}}|\bm{A}^{\rm{H}}(\bm{Q}^{\rm{H}}_{\bm{X}}\bm{F}^{\rm{H}}{\boldsymbol{\Pi}}\bm{F}
\bm{Q}_{\bm{X}}+\alpha\bm{I})^{-1}\bm{A}\boldsymbol{\Phi}^{-1}+\bm{I}|+c \nonumber \\
=& \ {\rm{log}}|(\bm{Q}^{\rm{H}}_{\bm{X}}\bm{F}^{\rm{H}}{\boldsymbol{\Pi}}\bm{F}
\bm{Q}_{\bm{X}}+\alpha\bm{I})^{-1}\bm{A}\boldsymbol{\Phi}^{-1}\bm{A}^{\rm{H}}+\bm{I}|+c,
\end{align}where $c={\rm{log}}|\boldsymbol{\Phi}|$ is a constant. Then
based on the lefthand side of Matrix Inequality 4 and taking $\bm{C}=(\bm{Q}^{\rm{H}}_{\bm{X}}\bm{F}^{\rm{H}}{\boldsymbol{\Pi}}\bm{F}
\bm{Q}_{\bm{X}}+\alpha\bm{I})^{-1}$
and $\bm{D}=\bm{A}\boldsymbol{\Phi}^{-1}\bm{A}^{\rm{H}}$, the optimal $\bm{Q}_{\bm{X}}$ can be derived.

\noindent \textbf{Obj. 5.1:} For \textbf{Obj. 5.1} based on the majorization theory the minimum value of an additively Schur-convex function is achieved when all the diagonal elements of $(\bm{Q}^{\rm{H}}_{\bm{X}}\bm{F}^{\rm{H}}{\boldsymbol{\Pi}}\bm{F}
\bm{Q}_{\bm{X}}+\alpha\bm{I})^{-1}$ equal with each other. Therefore the optimal $\bm{Q}_{\bm{X}}$ can be derived.

\noindent \textbf{Obj. 5.2:} For \textbf{Obj. 5.2} based on the majorization theory the minimum value of an additively Schur-convex function is achieved when all the diagonal elements of $(\bm{Q}^{\rm{H}}_{\bm{X}}\bm{F}^{\rm{H}}{\boldsymbol{\Pi}}\bm{F}
\bm{Q}_{\bm{X}}+\alpha\bm{I})^{-1}$ equals the eigenvalues of $(\bm{Q}^{\rm{H}}_{\bm{X}}\bm{F}^{\rm{H}}{\boldsymbol{\Pi}}\bm{F}
\bm{Q}_{\bm{X}}+\alpha\bm{I})^{-1}$. Therefore the optimal $\bm{Q}_{\bm{X}}$ can be derived.

\noindent \textbf{Obj. 6.1:} For \textbf{Obj. 6.1}  based on the majorization theory the minimum value of a multiplicatively Schur-convex function is achieved when all the diagonal elements of the lower triangular matrix of the Cholesky-decomposition $(\bm{Q}^{\rm{H}}_{\bm{X}}\bm{F}^{\rm{H}}{\boldsymbol{\Pi}}\bm{F}
\bm{Q}_{\bm{X}}+\alpha\bm{I})^{-1}$ have the same squared modulus. Therefore, the optimal $\bm{Q}_{\bm{X}}$ can be derived.

\noindent \textbf{Obj. 6.2:} For \textbf{Obj. 6.2} based on the majorization theory the minimum value of a multiplicatively Schur-convex function is achieved when all the squared diagonal elements of the lower triangular matrix of the Cholesky-decomposition $(\bm{Q}^{\rm{H}}_{\bm{X}}\bm{F}^{\rm{H}}{\boldsymbol{\Pi}}\bm{F}
\bm{Q}_{\bm{X}}+\alpha\bm{I})^{-1}$ equal the eigenvalues of $(\bm{Q}^{\rm{H}}_{\bm{X}}\bm{F}^{\rm{H}}{\boldsymbol{\Pi}}\bm{F}
\bm{Q}_{\bm{X}}+\alpha\bm{I})^{-1}$. Therefore the optimal $\bm{Q}_{\bm{X}}$ can be derived.

\noindent \textbf{Obj. 7:} For \textbf{Obj. 7} the objective function can be rewritten as
\begin{align}
& {\rm{log}}|\bm{A}^{\rm{H}}\bm{Q}^{\rm{H}}_{\bm{X}}\bm{F}^{\rm{H}}{\boldsymbol{\Pi}}\bm{F}
\bm{Q}_{\bm{X}}\bm{A}+{\boldsymbol \Phi}| \nonumber \\
= \ & {\rm{log}}|\bm{A}^{\rm{H}}\bm{Q}^{\rm{H}}_{\bm{X}}\bm{F}^{\rm{H}}{\boldsymbol{\Pi}}\bm{F}
\bm{Q}_{\bm{X}}\bm{A}{\boldsymbol \Phi}^{-1}+\bm{I}|+c \nonumber \\
= \ & {\rm{log}}|\bm{Q}^{\rm{H}}_{\bm{X}}\bm{F}^{\rm{H}}{\boldsymbol{\Pi}}\bm{F}
\bm{Q}_{\bm{X}}\bm{A}{\boldsymbol \Phi}^{-1}\bm{A}^{\rm{H}}+\bm{I}|+c
\end{align}where $c={\rm{log}}|\boldsymbol{\Phi}|$.  Then
based on the righthand side of Matrix Inequality 4 and taking $\bm{C}=\bm{Q}^{\rm{H}}_{\bm{X}}\bm{F}^{\rm{H}}{\boldsymbol{\Pi}}\bm{F}
\bm{Q}_{\bm{X}}$
and $\bm{D}=\bm{A}\boldsymbol{\Phi}^{-1}\bm{A}^{\rm{H}}$ the optimal $\bm{Q}_{\bm{X}}$ can be derived.

\noindent \textbf{Obj. 8:} For \textbf{Obj. 8} the objective function equals
\begin{align}
& {\rm{Tr}}\left( (\bm{A}^{\rm{H}}\bm{Q}^{\rm{H}}_{\bm{X}}\bm{F}^{\rm{H}}{\boldsymbol{\Pi}}\bm{F}
\bm{Q}_{\bm{X}}\bm{A}+\alpha\bm{I})^{-1} \right) \nonumber \\
= \ & \sum_k \frac{1}{\lambda_k(\bm{A}^{\rm{H}}\bm{Q}^{\rm{H}}_{\bm{X}}\bm{F}^{\rm{H}}{\boldsymbol{\Pi}}\bm{F}
\bm{Q}_{\bm{X}}\bm{A})+\alpha}.
\end{align} At high SNR, when $k\le N$ and $N$ is the rank of $\bm{A}^{\rm{H}}\bm{Q}^{\rm{H}}_{\bm{X}}\bm{F}^{\rm{H}}{\boldsymbol{\Pi}}\bm{F}
\bm{Q}_{\bm{X}}\bm{A}$ we have
\begin{align}
\lambda_k(\bm{A}^{\rm{H}}\bm{Q}^{\rm{H}}_{\bm{X}}\bm{F}^{\rm{H}}{\boldsymbol{\Pi}}\bm{F}
\bm{Q}_{\bm{X}}\bm{A})\gg \alpha.
\end{align} Then at high SNR, minimizing  \textbf{Obj. 8} is equivalent to minimizing the following function
\begin{align}
& \sum_{k=1}^N \frac{1}{\lambda_k(\bm{A}^{\rm{H}}\bm{Q}^{\rm{H}}_{\bm{X}}\bm{F}^{\rm{H}}{\boldsymbol{\Pi}}\bm{F}
\bm{Q}_{\bm{X}}\bm{A})+\alpha} \nonumber \\
\approx & \sum_{k=1}^N \frac{1}{\lambda_k(\bm{A}^{\rm{H}}\bm{Q}^{\rm{H}}_{\bm{X}}\bm{F}^{\rm{H}}{\boldsymbol{\Pi}}\bm{F}
\bm{Q}_{\bm{X}}\bm{A})}.
\end{align} Because $\sum_k {1}/{x}$ is multiplicatively Schur-convex, based on majorization theory the optimal $\bm{Q}_{\bm{X}}$ can be derived.

\noindent \textbf{Obj. 9:} For \textbf{Obj. 9} at high SNR the objective function can be approximated by
\begin{align}
& {\rm{Tr}}\left(\bm{A}^{\rm{H}}(\bm{Q}^{\rm{H}}_{\bm{X}}\bm{F}^{\rm{H}}{\boldsymbol{\Pi}}\bm{F}
\bm{Q}_{\bm{X}}+{\boldsymbol \Phi})^{-1}\bm{A}\right) \nonumber \\
\approx \ & {\rm{Tr}}\left(\bm{A}^{\rm{H}}(\bm{Q}^{\rm{H}}_{\bm{X}}\bm{F}^{\rm{H}}{\boldsymbol{\Pi}}\bm{F}
\bm{Q}_{\bm{X}})^{-1}\bm{A}\right),
\end{align} where the approximation comes from the fact that at high SNR $\bm{Q}^{\rm{H}}_{\bm{X}}\bm{F}^{\rm{H}}{\boldsymbol{\Pi}}\bm{F}
\bm{Q}_{\bm{X}}$ is much larger than ${\boldsymbol \Phi}$ and $\bm{Q}^{\rm{H}}_{\bm{X}}\bm{F}^{\rm{H}}{\boldsymbol{\Pi}}\bm{F}
\bm{Q}_{\bm{X}}+{\boldsymbol \Phi}\approx\bm{Q}^{\rm{H}}_{\bm{X}}\bm{F}^{\rm{H}}{\boldsymbol{\Pi}}\bm{F}
\bm{Q}_{\bm{X}}$.
Then based on the lefthand side of Matrix Inequality 1 and taking $\bm{C}=(\bm{Q}^{\rm{H}}_{\bm{X}}\bm{F}^{\rm{H}}{\boldsymbol{\Pi}}\bm{F}
\bm{Q}_{\bm{X}})^{-1}$
and $\bm{D}=\bm{A}\bm{A}^{\rm{H}}$, the optimal $\bm{Q}_{\bm{X}}$ can be derived.

\noindent \textbf{Obj. 10:}  For \textbf{Obj. 10} based on the righthand side of Matrix Inequality 3 and taking $\bm{C}=\bm{Q}^{\rm{H}}_{\bm{X}}\bm{F}^{\rm{H}}{\boldsymbol{\Pi}}\bm{F}
\bm{Q}_{\bm{X}}\otimes{\boldsymbol\Sigma}_2$
and $\bm{D}=\boldsymbol{\Phi}\otimes{\boldsymbol\Sigma}_1$ the optimal $\bm{Q}_{\bm{X}}$ can be derived.

\noindent \textbf{Obj. 11:}  For \textbf{Obj. 11} based on the righthand side of Matrix Inequality 3 and taking $\bm{C}={\boldsymbol\Sigma}_2\otimes\bm{Q}^{\rm{H}}_{\bm{X}}\bm{F}^{\rm{H}}{\boldsymbol{\Pi}}\bm{F}
\bm{Q}_{\bm{X}}$
and $\bm{D}={\boldsymbol\Sigma}_1\otimes\boldsymbol{\Phi}$ the optimal $\bm{Q}_{\bm{X}}$ can be derived.

\noindent \textbf{Obj. 12:}  For \textbf{Obj. 12} based on the lefthand side of Matrix Inequality 2 and taking $\bm{C}=\bm{Q}^{\rm{H}}_{\bm{X}}\bm{F}^{\rm{H}}{\boldsymbol{\Pi}}\bm{F}
\bm{Q}_{\bm{X}}\otimes{\boldsymbol\Sigma}_2$
and $\bm{D}=\boldsymbol{\Phi}\otimes{\boldsymbol\Sigma}_1$ the optimal $\bm{Q}_{\bm{X}}$ can be derived.

\noindent \textbf{Obj. 13:}  For \textbf{Obj. 13} based on the righthand side of Matrix Inequality 2 and taking $\bm{C}={\boldsymbol\Sigma}_2\otimes\bm{Q}^{\rm{H}}_{\bm{X}}\bm{F}^{\rm{H}}{\boldsymbol{\Pi}}\bm{F}
\bm{Q}_{\bm{X}}$
and $\bm{D}={\boldsymbol\Sigma}_1\otimes\boldsymbol{\Phi}$, the optimal $\bm{Q}_{\bm{X}}$ can be derived.

\noindent \textbf{Obj. 14:}  For \textbf{Obj. 14} based on the lefthand side of Matrix Inequality 1 and taking $\bm{C}=(\bm{I}+\bm{Q}^{\rm{H}}_{\bm{X}}\bm{F}^{\rm{H}}{\boldsymbol{\Pi}}\bm{F}
\bm{Q}_{\bm{X}}\otimes{\boldsymbol\Sigma}_2)^{-1}$
and $\bm{D}=(\bm{A}\bm{A}^{\rm{H}})\otimes{\boldsymbol\Sigma}_1$, the optimal $\bm{Q}_{\bm{X}}$ can be derived.

\noindent \textbf{Obj. 15:}  For \textbf{Obj. 15} based on the righthand side of Matrix Inequality 2 and taking $\bm{C}=(\bm{I}+{\boldsymbol\Sigma}_2\otimes\bm{Q}^{\rm{H}}_{\bm{X}}\bm{F}^{\rm{H}}{\boldsymbol{\Pi}}\bm{F}
\bm{Q}_{\bm{X}})^{-1}$
and $\bm{D}={\boldsymbol\Sigma}_1\otimes(\bm{A}\bm{A}^{\rm{H}})$, the optimal $\bm{Q}_{\bm{X}}$ can be derived.

\section{Proof of Lemma~\ref{L2}}\label{appendix_1}

The Pareto optimal solution set of (\ref{eq31}) falls in the optimal solution set of the
 following optimization problem for all the possible ${\bm{F}}_{\rm in}$ that are
 in the region of ${\rm Tr}\big({\bm{F}}{\bm{F}}^{\rm H}\big)\le P$ and $\bm{F}\bm{F}^{\rm H}\preceq \tau \bm{I}$:
\begin{align}
\label{equ_app_1}
\max_{\bm{F},\alpha} & \ \ \ \ \ \ \ \alpha  \nonumber \\
{\rm s.t.} \ \ &\bm{F}^{\rm H}\bm{\Pi}\bm{F}=\alpha \bm{F}^{\rm H}_{\rm{in}}\bm{\Pi}\bm{F}_{\rm{in}} \nonumber \\
 & {\rm Tr}\big(\bm{F}\bm{F}^{\rm H}\big) \hspace{-1mm}\le\hspace{-1mm} P , \bm{F}\bm{F}^{\rm H}\preceq \tau \bm{I} .
\end{align} This conclusion is obvious because for any Pareto optimal solution of (\ref{eq31}) $\bm{F}_{\rm{opt}}$, for an $\alpha<1$, $\bm{F}_{\rm{in}}=\alpha\bm{F}_{\rm{opt}}$ obviously satisfies ${\rm Tr}\big({\bm{F}}_{\rm{in}}{\bm{F}}_{\rm{in}}^{\rm H}\big)\le P$ and $\bm{F}_{\rm{in}}\bm{F}_{\rm{in}}^{\rm H}\preceq \tau \bm{I}$. In the following, we will prove that the optimal solutions of (\ref{equ_app_1}) own the same structure.

Based on the matrix equality properties that when $\bm{A}$ and $\bm{B}$ have the same dimensionality $\bm{A}^{\rm{H}}{\bm A}=\bm{B}^{\rm{H}}{\bm B}$ is equivalent to $\bm{A}=\bm{U}\bm{A}$ with $\bm{U}$ being an unitary matrix \cite[P406]{Horn1990}, the first constraint of (\ref{equ_app_1}) is equivalent to
\begin{align}
\bm{\Pi}^{1/2}\bm{F}=\sqrt{\alpha} \bm{U}\bm{\Pi}^{1/2}\bm{F}_{\rm{in}}
\end{align}based on which and defining the pseudo inverse of $\bm{\Pi}^{1/2}$ as $(\bm{\Pi}^{1/2})^{\dag}$, we have
\begin{align}
(\bm{\Pi}^{1/2})^{\dag}\bm{\Pi}^{1/2}\bm{F}=\sqrt{\alpha} (\bm{\Pi}^{1/2})^{\dag}\bm{U}\bm{\Pi}^{1/2}\bm{F}_{\rm{in}}.
\end{align} It is obvious that $\alpha$ is solved to be
\begin{align}
\alpha=\frac{{\rm{Tr}}\left[\left((\bm{\Pi}^{1/2})^{\dag}\bm{\Pi}^{1/2}\bm{F}\right)^{\rm{H}}(\bm{\Pi}^{1/2})^{\dag}\bm{\Pi}^{1/2}\bm{F}\right]}
{{\rm{Tr}}\left[\left((\bm{\Pi}^{1/2})^{\dag}\bm{U}\bm{\Pi}^{1/2}\bm{F}_{\rm{in}}\right)^{\rm{H}}(\bm{\Pi}^{1/2})^{\dag}\bm{U}\bm{\Pi}^{1/2}\bm{F}_{\rm{in}}\right]}.
\end{align} Based on \textbf{Matrix Inequality 1}, the numerator of the righthand side of the above equation satisfies
\begin{align}
\label{equ_app_2}
{\rm{Tr}}\left[((\bm{\Pi}^{1/2})^{\dag}\bm{\Pi}^{1/2}\bm{F})^{\rm{H}}(\bm{\Pi}^{1/2})^{\dag}\bm{\Pi}^{1/2}\bm{F}\right]\le \sum_j \lambda_j({\bm{F}}{\bm{F}}^{\rm{H}})
\end{align}  and meanwhile the denominator satisfies
\begin{align}
\label{equ_app_3}
&{\rm{Tr}}\left[\left((\bm{\Pi}^{1/2})^{\dag}\bm{U}\bm{\Pi}^{1/2}\bm{F}_{\rm{in}}\right)^{\rm{H}}(\bm{\Pi}^{1/2})^{\dag}\bm{U}\bm{\Pi}^{1/2}\bm{F}_{\rm{in}}
\right]
\nonumber \\
\ge&
\sum_j
   \frac{\lambda_j\big(\bm{\Pi}^{1/2}{\bm{F}}_{\rm in}{\bm{F}}_{\rm in}^{\rm{H}}\bm{\Pi}^{1/2}\big)}
  {\lambda_j(\bm{\Pi}\big)}.
\end{align} Based on (\ref{equ_app_2}) and (\ref{equ_app_3}), $\alpha$ is maximized when $\bm{F}$ satisfies the following structure
\begin{align}\label{equ_app_4}
\bm{F}={\bm{U}}_{{\bm{\Pi}}}{\bm{\Lambda}}_{\bm{F}}{\bm{V}}_{\rm{in}}^{\rm{H}}
\end{align} where the unitary matrices ${\bm{U}}_{{\bm{\Pi}}}$ and ${\bm{V}}_{\rm{in}}$ are defined based on the following EVDs
\begin{align}
&{\bm{\Pi}}={\bm{U}}_{{\bm{\Pi}}}{\bm{\Lambda}}_{\bm{\Pi}}{\bm{U}}_{{\bm{\Pi}}}^{\rm{H}} \ {\rm{with}} \ {\bm{\Lambda}}_{\bm{\Pi}} \searrow, \nonumber \\
&\bm{F}^{\rm H}_{\rm{in}}\bm{\Pi}\bm{F}_{\rm{in}} ={\bm{V}}_{\rm{in}}{\bm{\Lambda}}_{\rm{in}}{\bm{V}}_{\rm{in}}^{\rm{H}} \ {\rm{with}} \ {\bm{\Lambda}}_{\rm{in}} \searrow.
\end{align}

It is worth noting that the final two constraints in the optimization problem (\ref{equ_app_1}) only constrain the eigenvalues of $\bm{F}\bm{F}^{\rm{H}}$. In other words, the final two constraints in (\ref{equ_app_1}) only constrain the singular values of $\bm{F}$.
Moreover, it is obvious that the final two constraints in (\ref{equ_app_1}) are both right unitarily invariant and left unitarily invariant.
The derivations in (\ref{equ_app_2}) and (\ref{equ_app_3}) are independent of the singular values of of $\bm{F}$. It means that for any given $\bm{\Lambda}_{\bm{F}}$ in (\ref{equ_app_4}) the optimal $\bm{F}$ maximizing $\alpha$ satisfies the structure in (\ref{equ_app_4}) without violating the final two constraints in (\ref{equ_app_1}). Therefore, it is concluded that the optimal solutions of (\ref{equ_app_1}) satisfies the structure in (\ref{equ_app_4}) and thus the Pareto optimal solutions of (\ref{eq53}) satisfies the structure given by (\ref{equ_app_4}). Furthermore, substituting (\ref{equ_app_4}) into the objective function of (\ref{eq53}), it can be seen that for the Pareto optimal solutions, the value of the unitary matrix ${\bm{V}}_{\rm{in}}$  in (\ref{equ_app_4}) does not affect the optimality of the Paremto optimal solutions. Finally the Pareto optimal solutions of (\ref{eq53}) satisfies the following structure
\begin{align}
\bm{F}_{\rm opt} =& \bm{U}_{\bm{\Pi}}\bm{\Lambda}_{\bm{F}}\bm{U}_{\text{Arb}}^{\rm H}.
\end{align}

\section{Proof of Convexity}\label{appendix_convexity}

The proof is exactly based the definition of convex function. Taking the complex matrix $\bm{F}$ as optimization variable, based on the definition of convex function, the function ${\rm Tr}\big(\bm{\Omega}_i\bm{F}\bm{F}^{\rm H}\big)$ is convex with respect to $\bm{F}$ if and only if for two complex matrices $\bm{F}_1$ and $\bm{F}_2$ and $0\le a\le1$ the following inequality holds \cite{Boyd04}
\begin{align}
&{\rm Tr}\big(\bm{\Omega}_i\bm{F}_1\bm{F}_1^{\rm H}\big)a+{\rm Tr}\big(\bm{\Omega}_i\bm{F}\bm{F}^{\rm H}\big)(1-a) \ge \nonumber \\
& {\rm Tr}\big(\bm{\Omega}_i(a\bm{F}_1+(1-a)\bm{F}_2)(a\bm{F}_1+(1-a)\bm{F}_2)^{\rm H}\big).
\end{align} This inequality can be proved when $\bm{\Omega}_i$ is a positive semidefinite matrix because
\begin{align}
& {\rm Tr}\big(\bm{\Omega}_i\bm{F}_1\bm{F}_1^{\rm H}\big)a+{\rm Tr}\big(\bm{\Omega}_i\bm{F}\bm{F}^{\rm H}\big)(1-a)\nonumber \\
&-
{\rm Tr}\big(\bm{\Omega}_i(a\bm{F}_1+(1-a)\bm{F}_2)(a\bm{F}_1+(1-a)\bm{F}_2)^{\rm H}\big)\nonumber \\
=&a(1-a){\rm Tr}\big(\bm{\Omega}_i(\bm{F}_1-\bm{F}_2)(\bm{F}_1-\bm{F}_2)^{\rm H}\big)\ge0.
\end{align} Therefore, it can be concluded that  when $\bm{\Omega}_i$ is a positive semidefinite matrix ${\rm Tr}\big(\bm{\Omega}_i\bm{F}\bm{F}^{\rm H}\big)$ is convex with respect to $\bm{F}$.

\section{Proof of Lemma~\ref{L3}}\label{Apa}

 Any Pareto optimal solution of \textbf{Opt.\,1.6}, $\bm{F}_{\text{Pareto}}$, is also a
 Pareto optimal solution of the following multi-objective optimization problem
\begin{align}\label{App_1} 
 \min_{\bm{F}} \ & \big\{{\rm Tr}\big(\bm{\Omega}_i\bm{F}\bm{F}^{\rm H}\big)\big\}_{i=1}^I,
 {\rm{s.t.}}  \bm{F}^{\rm H}\bm{\Pi}\bm{F}=\bm{F}_{\text{Pareto}}^{\rm H}\bm{\Pi}\bm{F}_{\text{Pareto}}.
\end{align} This transformation is built on the proof by contradiction. If  $\bm{F}_{\text{Pareto}}$ is not a  Pareto optimal solution of (\ref{App_1}), it means that we can find a matrix $\bm{F}_1$ satisfying
\begin{align}
& \bm{F}_1^{\rm H}\bm{\Pi}\bm{F}_1=\bm{F}_{\text{Pareto}}^{\rm H}\bm{\Pi}\bm{F}_{\text{Pareto}}, \nonumber \\
& {\rm Tr}\big(\bm{\Omega}_i\bm{F}_1\bm{F}_1^{\rm H}\big) \le  {\rm Tr}\big(\bm{\Omega}_i\bm{F}_{\text{Pareto}}\bm{F}_{\text{Pareto}}^{\rm H}\big)\le P_i.
\end{align} Moreover, at least there  exits an index $i$ for which the first inequality in the second line will hold.
Then it is obvious that we can find a matrix $\bm{F}_2$ satisfying
\begin{align}
& {\rm Tr}\big(\bm{\Omega}_i\bm{F}_1\bm{F}_1^{\rm H}\big)\le {\rm Tr}\big(\bm{\Omega}_i\bm{F}_2\bm{F}_2^{\rm H}\big)\le P_i, \nonumber \\
 & \bm{F}_2^{\rm H}\bm{\Pi}\bm{F}_2\succeq \bm{F}_1^{\rm H}\bm{\Pi}\bm{F}_1= \bm{F}_{\text{Pareto}}^{\rm H}\bm{\Pi}\bm{F}_{\text{Pareto}}.
\end{align} As a result, a contradiction is achieved on the assumption that $\bm{F}_{\text{Pareto}}$ is Pareto optimal. 

 Since the constraint of (\ref{App_1}) is equivalent to $\bm{U}\bm{\Pi}^{\frac{1}{2}}\bm{F}=
 \bm{\Pi}^{\frac{1}{2}}\bm{F}_{\rm Pareto}$, where $\bm{U}$ is a suitable unitary matrix \cite[P406]{Horn1990}, the
 optimization problem (\ref{App_1}) is equivalent to
\begin{align}\label{App_2}
 \min\limits_{\bm{F}} \big\{{\rm Tr}\big(\bm{\Omega}_i\bm{F}\bm{F}^{\rm H}\big)\big\}_{i=1}^I , \
 {\rm{s.t.}} \bm{U}\bm{\Pi}^{\frac{1}{2}}\bm{F}=\bm{\Pi}^{\frac{1}{2}}\bm{F}_{\text{Pareto}} .
\end{align}
 In (\ref{App_2}), the objective functions are quadratic functions and the constraint
 is a linear function with respect to $\bm{F}$, which means that the multi-objective
 optimization problem (\ref{App_2}) is convex \cite[P135]{Boyd04} and the corresponding proof is given in Appendix \ref{appendix_convexity}.  Therefore, for any Pareto
 optimal solution of (\ref{App_2}), there exist the weights $\alpha_i$, $1\le i\le I$,
 for ensuring that the Pareto optimal solution can be computed via solving the following weighted
 sum optimization problem \cite[P179]{Boyd04}
\begin{align}\label{ap3}
 \min\limits_{\bm{F}} {\sum}_{i=1}^I \alpha_i {\rm Tr}\big(\bm{\Omega}_i\bm{F}\bm{F}^{\rm H}\big) , \
 {\rm{s.t.}} \bm{U}\bm{\Pi}^{\frac{1}{2}}\bm{F}=\bm{\Pi}^{\frac{1}{2}}\bm{F}_{\text{Pareto}} .
\end{align}The above conclusion for computing Pareto optimal solution of (\ref{App_2}) using weights $\alpha_i$, $1\le i\le I$ in (\ref{ap3}) are feasible to any unitary matrix $\bm{U}$. Meanwhile, it is worth noting that $\bm{F}_{\text{Pareto}}^{\rm H}\bm{\Pi}\bm{F}_{\text{Pareto}}$ is equivalent to $\bm{U}\bm{\Pi}^{\frac{1}{2}}\bm{F}=
 \bm{\Pi}^{\frac{1}{2}}\bm{F}_{\rm Pareto}$ where $\bm{U}$ is a suitable unitary matrix \cite[P406]{Horn1990}.
Thus the whole Pareto optimal solution set of (\ref{App_1}) can be achieved via solving
 the following optimization problem by changing the weights $\alpha_i$, $1\le i\le I$,
\begin{align}\label{ap4}
 \min\limits_{\bm{F}} & {\sum}_{i=1}^I\hspace*{-2mm} \alpha_i {\rm Tr}\big(\bm{\Omega}_i\bm{F}\bm{F}^{\rm H}\big) , 
 {\rm{s.t.}} \bm{F}^{\rm H}\bm{\Pi}\bm{F}\hspace{-1mm}=\hspace{-1mm}\bm{F}_{\text{Pareto}}^{\rm H}\bm{\Pi}\bm{F}_{\text{Pareto}} .
\end{align} For the optimal solution of (\ref{ap4}), ${\sum}_{i=1}^I\hspace*{-1mm} \alpha_i {\rm Tr}\big(\bm{\Omega}_i\bm{F}\bm{F}^{\rm H}\big)\hspace*{-1mm} =\hspace*{-1mm} P$.
 Then $\bm{F}_{\text{Pareto}}$ is a Pareto optimal solution of the following optimization problem
\begin{align}\label{label_opt} 
\hspace*{-3mm} \max_{\bm{F}} \bm{F}^{\rm H}\bm{\Pi}\bm{F} , \
 {\rm{s.t.}} {\rm Tr}\Big( {\sum}_{i=1}^I \alpha_i\bm{\Omega}_i\bm{F}\bm{F}^{\rm H}\Big)
  \le P.
\end{align}This is concluded based on the proof of contradiction. If $\bm{F}_{\text{Pareto}}$ is not a Pareto optimal solution of (\ref{label_opt}), for (\ref{label_opt}) there will exist $\bm{F}_1$ which satisfies $\bm{F}_1^{\rm H}\bm{\Pi}\bm{F}_1\hspace{-1mm}\succeq \hspace{-1mm}\bm{F}_{\text{Pareto}}^{\rm H}\bm{\Pi}\bm{F}_{\text{Pareto}}$ and $ {\rm Tr}\Big( {\sum}_{i=1}^I \alpha_i\bm{\Omega}_i\bm{F}_1\bm{F}_1^{\rm H}\Big)
  \le P$. For positive semidefinite matrices, $\bm{A}\succeq \bm{B}$  implies ${\bm{\lambda}}(\bm{A})\succeq {\bm{\lambda}}(\bm{B})$ \cite[P471]{Horn1990}. Meanwhile, for two complex matrices  $\bm{C}$ and $\bm{D}$, $\bm{C}\bm{D}$ and $\bm{D}\bm{C}$ have the same nonzero eigenvalues. Therefore we have
  \begin{align}
  {\bm{\lambda}}(\bm{\Pi}^{\frac{1}{2}}\bm{F}_1\bm{F}_1^{\rm H}\bm{\Pi}^{\frac{1}{2}})\hspace{-1mm}\succeq \hspace{-1mm}{\bm{\lambda}}(\bm{\Pi}^{\frac{1}{2}}\bm{F}_{\text{Pareto}}\bm{F}_{\text{Pareto}}^{\rm H}\bm{\Pi}^{\frac{1}{2}}),
  \end{align}based on which it can be concluded that we can find a matrix $\bm{F}_2$ which satisfies
\begin{align}
& \bm{F}_2\bm{F}_2^{\rm H} \preceq \bm{F}_1\bm{F}_1^{\rm H}, \nonumber \\
& {\bm{\lambda}}(\bm{\Pi}^{\frac{1}{2}}\bm{F}_2\bm{F}_2^{\rm H}\bm{\Pi}^{\frac{1}{2}})\hspace{-1mm}= \hspace{-1mm}{\bm{\lambda}}(\bm{\Pi}^{\frac{1}{2}}\bm{F}_{\text{Pareto}}\bm{F}_{\text{Pareto}}^{\rm H}\bm{\Pi}^{\frac{1}{2}}).
\end{align} As the weighted power constraint is right unitarily invariant, there will exist a unitary matrix $\bm{Q}_2$ making the following equality hold
\begin{align}
\bm{Q}_2^{\rm{H}}\bm{F}_2^{\rm H}\bm{\Pi}\bm{F}_2\bm{Q}_2\hspace{-1mm}= \hspace{-1mm}\bm{F}_{\text{Pareto}}^{\rm H}\bm{\Pi}\bm{F}_{\text{Pareto}}.
\end{align} Taking $\bm{F}_2\bm{Q}_2=\bm{F}_3$ as a new variable, it is concluded that for (\ref{ap4})
$\bm{F}_3^{\rm H}\bm{\Pi}\bm{F}_3\hspace{-1mm}= \hspace{-1mm}\bm{F}_{\text{Pareto}}^{\rm H}\bm{\Pi}\bm{F}_{\text{Pareto}}$ and  ${\sum}_{i=1}^I\hspace*{-2mm} \alpha_i {\rm Tr}\big(\bm{\Omega}_i\bm{F}_3\bm{F}_3^{\rm H}\big)<{\sum}_{i=1}^I\hspace*{-2mm} \alpha_i {\rm Tr}\big(\bm{\Omega}_i\bm{F}_2\bm{F}_2^{\rm H}\big)\le P$. This contradicts with previous conclusion. In other words, $\bm{F}_{\text{Pareto}}$ is a Pareto optimal solution of (\ref{label_opt}).

 In a nutshell, for any Pareto optimal solution of \textbf{Opt.\,1.6}, there exist the
 weights $\alpha_i$, $1\le i\le I$, for ensuring that this Pareto optimal solution of
 \textbf{Opt.\,1.6} is also the Pareto optimal solution of (\ref{label_opt}). Therefore,
 it can be concluded that any Pareto optimal solution of \textbf{Opt.\,1.6} satisfies
 the common structures of the Pareto optimal solutions of (\ref{label_opt}).

 Next, we show that the Pareto optimal solutions of (\ref{label_opt}) own the same
 diagonalizable structure and thus this structure is also the optimal structure of the
 Pareto optimal solutions of \textbf{Opt.\,1.6}. First define the
 auxiliary variables
\begin{align}\label{eq35}
 \widetilde{\bm{F}} =& \Big( \sum\nolimits_{i=1}^I \alpha_i\bm{\Omega}_i\Big)^{\frac{1}{2}}\bm{F}, \ \ \bm{\Omega}= \Big( \sum\nolimits_{i=1}^I \alpha_i\bm{\Omega}_i\Big)
\end{align} and the optimization (\ref{label_opt}) is transferred
 into:
\begin{align}\label{MMOP_A} 
 \max_{\widetilde{\bm{F}}} \ \ & \widetilde{\bm{F}}^{\rm H}\big(\bm{\Omega}^{-\frac{1}{2}}
  \big)^{\rm H} \bm{\Pi}\bm{\Omega}^{-\frac{1}{2}}\widetilde{\bm{F}},  \ \
 {\rm{s.t.}} \ {\rm Tr}\big(\widetilde{\bm{F}}\widetilde{\bm{F}}^{\rm H}\big) \le  P .
\end{align}
The Pareto optimal solution set of (\ref{MMOP_A}) consists of the optimal solutions of the
 following optimization problem for all the possible $\widetilde{\bm{F}}_{\rm in}$ that are
 in the sphere region of ${\rm Tr}\big(\widetilde{\bm{F}}\widetilde{\bm{F}}^{\rm H}\big)\le P$:
\begin{align}\label{ap7}
\hspace*{-4mm}\begin{array}{cl}\!\!
 \max\limits_{\widetilde{\bm{F}},\alpha} \!\!\! &\!\!\! \ \ \ \ \alpha , \\
 {\rm s.t.} \!\!\! &\!\!\! \widetilde{\bm{F}}^{\rm H} \big( \bm{\Omega}^{-\frac{1}{2}} \big)^{\rm H} \bm{\Pi}
  \bm{\Omega}^{-\frac{1}{2}} \widetilde{\bm{F}} \! =\! \alpha \widetilde{\bm{F}}^{\rm H}_{\rm in}
  \big( \bm{\Omega}^{-\frac{1}{2}} \big)^{\rm H} \bm{\Pi} \bm{\Omega}^{-\frac{1}{2}}
  \widetilde{\bm{F}}_{\rm in} , \\
 \!\!\! &\!\!\!  {\rm Tr}\big( \widetilde{\bm{F}} \widetilde{\bm{F}}^{\rm H} \big) \le P .
\!\! \end{array} \!\!
\end{align}
 The first constraint in (\ref{ap7}) is equivalent to
\begin{align}\label{ap8}
 \bm{\Pi}^{\frac{1}{2}}\bm{\Omega}^{-\frac{1}{2}}\widetilde{\bm{F}} = \sqrt{\alpha}\bm{U}
  \bm{\Pi}\bm{\Omega}^{-\frac{1}{2}}\widetilde{\bm{F}}_{\rm in} .
\end{align}
 Using pseudo inverse, we have
\begin{align}\label{ap9}
 \hspace*{-1mm}\big(\bm{\Pi}^{\frac{1}{2}}\bm{\Omega}^{-\frac{1}{2}}\big)^{\dag}\bm{\Pi}^{\frac{1}{2}}
  \bm{\Omega}^{-\frac{1}{2}}\widetilde{\bm{F}}\! =\!\! \sqrt{\alpha}\big(\bm{\Pi}^{\frac{1}{2}}
  \bm{\Omega}^{-\frac{1}{2}}\big)^{\dag}\bm{U}\bm{\Pi}\bm{\Omega}^{-\frac{1}{2}}\widetilde{\bm{F}}_{\rm in} ,\!
\end{align}based on which we have
\begin{align}
&\big\|\big(\bm{\Pi}^{\frac{1}{2}}\bm{\Omega}^{-\frac{1}{2}}\big)^{\dag}
  \bm{\Pi}^{\frac{1}{2}}\bm{\Omega}^{-\frac{1}{2}}\widetilde{\bm{F}}\big\|_{\rm{F}}^2 \nonumber \\
  =&{\rm{Tr}}\left( [\big(\bm{\Pi}^{\frac{1}{2}}\bm{\Omega}^{-\frac{1}{2}}\big)^{\dag}
  \bm{\Pi}^{\frac{1}{2}}\bm{\Omega}^{-\frac{1}{2}}\widetilde{\bm{F}}]^{\rm{H}}\big(\bm{\Pi}^{\frac{1}{2}}\bm{\Omega}^{-\frac{1}{2}}\big)^{\dag}
  \bm{\Pi}^{\frac{1}{2}}\bm{\Omega}^{-\frac{1}{2}}\widetilde{\bm{F}} \right) \nonumber \\
  =& \alpha{\rm{Tr}} \left([\big(\bm{\Pi}^{\frac{1}{2}}
  \bm{\Omega}^{-\frac{1}{2}}\big)^{\dag}\bm{U}\bm{\Pi}\bm{\Omega}^{-\frac{1}{2}}\widetilde{\bm{F}}_{\rm in}]^{\rm{H}}              \big(\bm{\Pi}^{\frac{1}{2}}
  \bm{\Omega}^{-\frac{1}{2}}\big)^{\dag}\bm{U}\bm{\Pi}\bm{\Omega}^{-\frac{1}{2}}\widetilde{\bm{F}}_{\rm in}   \right)\nonumber \\
  =& \alpha \big\|\big(\bm{\Pi}^{\frac{1}{2}}\bm{\Omega}^{-\frac{1}{2}}\big)^{\dag}\bm{U}\bm{\Pi}
  \bm{\Omega}^{-\frac{1}{2}}\widetilde{\bm{F}}_{\rm in}\big\|_{\rm{F}}^2.
\end{align}Therefore, $\alpha$ is solved as
\begin{align}\label{ap10}
 \alpha =& \frac{\big\|\big(\bm{\Pi}^{\frac{1}{2}}\bm{\Omega}^{-\frac{1}{2}}\big)^{\dag}
  \bm{\Pi}^{\frac{1}{2}}\bm{\Omega}^{-\frac{1}{2}}\widetilde{\bm{F}}\big\|_{\rm{F}}^2}
  {\big\|\big(\bm{\Pi}^{\frac{1}{2}}\bm{\Omega}^{-\frac{1}{2}}\big)^{\dag}\bm{U}\bm{\Pi}
  \bm{\Omega}^{-\frac{1}{2}}\widetilde{\bm{F}}_{\rm in}\big\|_{\rm{F}}^2} .
\end{align}
 Based on \textbf{Matrix Inequality 1}, the numerator of (\ref{ap10}) satisfies
\begin{align}\label{ap11}
 \big\|\big(\bm{\Pi}^{\frac{1}{2}}\bm{\Omega}^{-\frac{1}{2}}\big)^{\dag}\bm{\Pi}^{\frac{1}{2}}
  \bm{\Omega}^{-\frac{1}{2}}\widetilde{\bm{F}}\big\|_{\rm{F}}^2\le \sum\nolimits_j\lambda_j\big(
  \widetilde{\bm{F}}\widetilde{\bm{F}}^{\rm H}\big) ,
\end{align}
 while its denominator satisfies
\begin{align}\label{ap12}
 \big\|\big(\bm{\Pi}^{\frac{1}{2}}\bm{\Omega}^{-\frac{1}{2}}\big)^{\dag}\bm{U}\bm{\Pi}
  \bm{\Omega}^{-\frac{1}{2}}\widetilde{\bm{F}}_{\rm in}\big\|_{\rm{F}}^2 \ge \sum_j
   \frac{\sigma^2_j\big(\bm{\Pi}\bm{\Omega}^{-\frac{1}{2}}\widetilde{\bm{F}}_{\rm in}\big)}
  {\sigma^2_j(\bm{\Pi}^{\frac{1}{2}}\bm{\Omega}^{-\frac{1}{2}}\big)} ,
\end{align}
 where $\sigma_j^2(\bm{A})$ denotes the  $j$th singular value of $\bm{A}$. Clearly, $\alpha$
 attains the maximum value when the both equalities in (\ref{ap11}) and (\ref{ap12}) hold. For the optimal $\widetilde{\bm{F}}$ and $\bm{U}$ together with the fact that for
 \textbf{Opt.\,1.6}, the optimal $\bm{F}$ is right unitary invariant, the optimal $\widetilde{\bm{F}}$ satisfies the following structure
 \begin{align}
 \label{F_appendix}
 \widetilde{\bm{F}} =& \bm{U}_{\widetilde{\bm{\Pi}}}
  \bm{\Lambda}_{\widetilde{\bm{F}}}\bm{U}_{\rm Arb}^{\rm H},
\end{align} where the unitary matrix $\bm{U}_{\widetilde{\bm{\Pi}}}$ is defined based on the following EVD
\begin{align}
 \bm{\Omega}^{-\frac{1}{2}}\bm{\Pi}\bm{\Omega}^{-\frac{1}{2}} =& \bm{U}_{\widetilde{\bm{\Pi}}}
  \bm{\Lambda}_{\widetilde{\bm{\Pi}}}\bm{U}_{\widetilde{\bm{\Pi}}}^{\rm H} \ \text{with} \
  \bm{\Lambda}_{\widetilde{\bm{\Pi}}} \searrow .
\end{align}Based on (\ref{F_appendix}) and  the definition of  $\widetilde{\bm{F}}$ in (\ref{eq35})
\begin{align}
{\bm{F}} =& \bm{\Omega}^{-\frac{1}{2}}\bm{U}_{\widetilde{\bm{\Pi}}}
  \bm{\Lambda}_{\widetilde{\bm{F}}}\bm{U}_{\rm Arb}^{\rm H}.
\end{align}


\section{Bayes Robust Matrix-Monotonic Optimization }
\label{appendix_3}

\subsection{Shaping Constraint}

With  the shaping constraint, \textbf{Opt.\,2.2} becomes
\begin{align}\label{opt_appendix_1}
\begin{array}{lcl}
 & \max\limits_{\bm{F}} & \bm{F}^{\rm H}\widehat{\bm{H}}^{\rm H}
  \bm{K}_{\rm n}^{-1}\widehat{\bm{H}}\bm{F} , \\
 & {\rm{s.t.}} & \bm{K}_{\rm n}=\sigma_{\rm{n}}^2\bm{I}+{\rm Tr}\big(\bm{F}\bm{F}^{\rm H}
  \bm{\Psi}\big)\bm{I} , \bm{F}\bm{F}^{\rm{H}} \preceq {\bm{R}}_{\rm{s}}.
\end{array}
\end{align} Note that ${\rm Tr}\big(\bm{F}\bm{F}^{\rm H}
  \bm{\Psi}\big)\le {\rm{Tr}}(\bm{R}_{\rm{s}}\bm{\Psi})$ and then we have the following matrix inequality
\begin{align}
\label{app_bower_bound}
 \bm{F}^{\rm H}\widehat{\bm{H}}^{\rm H}
  \bm{K}_{\rm n}^{-1}\widehat{\bm{H}}\bm{F} \succeq \frac{\bm{F}^{\rm H}\widehat{\bm{H}}^{\rm H}
  \widehat{\bm{H}}\bm{F}}{\sigma_{\rm{n}}^2+{\rm{Tr}}(\bm{R}_{\rm{s}}\bm{\Psi})}.
\end{align} Replacing the objective in (\ref{opt_appendix_1}) by its lower bound in (\ref{app_bower_bound}), the following optimization problem is achieved
\begin{align}\label{opt_appendix_1_1}
\begin{array}{lcl}
 & \max\limits_{\bm{F}} & \frac{\bm{F}^{\rm H}\widehat{\bm{H}}^{\rm H}
  \widehat{\bm{H}}\bm{F}}{\sigma_{\rm{n}}^2+{\rm{Tr}}(\bm{R}_{\rm{s}}\bm{\Psi})} ,  {\rm{s.t.}}  \bm{F}\bm{F}^{\rm{H}} \preceq {\bm{R}}_{\rm{s}}.
\end{array}
\end{align} whose Pareto optimal solution is given by Lemma~\ref{L1}. It is obvious that when $\bm{\Psi}=0$ or $\bm{\Psi}\propto \bm{I}$ and ${\rm Tr}\big(\bm{F}\bm{F}^{\rm H})={\rm{Tr}}(\bm{R}_{\rm{s}})$ is achievable the lower bound is tight.

\subsection{Joint Power Constraints}

Under the joint power constraints, \textbf{Opt.\,2.2} is written in the following formula
\begin{align}\label{opt_appendix_2}
\begin{array}{lcl}
 & \max\limits_{\bm{F}} & \bm{F}^{\rm H}\widehat{\bm{H}}^{\rm H}
  \bm{K}_{\rm n}^{-1}\widehat{\bm{H}}\bm{F} , \\
 & {\rm{s.t.}} & \bm{K}_{\rm n}=\sigma_{\rm{n}}^2\bm{I}+{\rm Tr}\big(\bm{F}\bm{F}^{\rm H}
  \bm{\Psi}\big)\bm{I} ,  \\
 & & {\rm{Tr}}\big(\bm{F}\bm{F}^{\rm{H}}\big) \le P, \ \bm{F}\bm{F}^{\rm{H}} \preceq \tau {\bm{I}}.
\end{array}
\end{align}
The sum power constraint ${\rm Tr}\big(\bm{F}\bm{F}^{\rm H}\big) \le P$ is equivalent to the following equality \cite{XingTSP201501}
\begin{align}
\label{Inequality_App}
 {\left(\sigma_{\rm{n}}^2+{\rm Tr}\big(\bm{F}\bm{F}^{\rm H}\bm{\Psi}\big)\right)^{-1}} {{\rm Tr}\big[\big(\sigma_{\rm{n}}^2\bm{I}+P\bm{\Psi}\big)\bm{F}\bm{F}^{\rm H}\big]}
 \le P.
\end{align} Based on (\ref{Inequality_App}), the optimization problem (\ref{opt_appendix_2}) is equivalent to the following one
\begin{align}\label{opt_appendix_3}
\hspace*{-2mm}\begin{array}{lcl}
 \!\! &\!\! \max\limits_{\bm{F}}\!\! &\!\! \bm{F}^{\rm H}\widehat{\bm{H}}^{\rm H}
  \bm{K}_{\rm n}^{-1}\widehat{\bm{H}}\bm{F} , \\
 \!\! &\!\! {\rm{s.t.}}\!\! &\!\! \bm{K}_{\rm n}=\sigma_{\rm{n}}^2\bm{I}+{\rm Tr}\big(\bm{F}\bm{F}^{\rm H}
  \bm{\Psi}\big)\bm{I} ,  \\
 \!\! &\!\! \!\! &\!\! \frac{{\rm Tr}\big[\big(\sigma_{\rm{n}}^2\bm{I}+P\bm{\Psi}\big)\bm{F}\bm{F}^{\rm H}\big]}
  {\sigma_{\rm{n}}^2+{\rm Tr}\big(\bm{F}\bm{F}^{\rm H}\bm{\Psi}\big)}\le P, \ \bm{F}\bm{F}^{\rm H} \preceq \tau\bm{I}.
\end{array}\!\!
\end{align}

By defining the following matrix variable
\begin{align}
\label{F_tilde_app}
 \widetilde{\bm{F}} =& {\big[\sigma_{\rm{n}}^2+{\rm Tr}\big(\bm{F}\bm{F}^{\rm H}\bm{\Psi}\big)\big]^{-\frac{1}{2}}}\big(\sigma_{\rm{n}}^2\bm{I}+P\bm{\Psi}\big)^{\frac{1}{2}}
  \bm{F} ,
\end{align}the optimization problem (\ref{opt_appendix_3}) can be transferred into the following equivalent one
\begin{align}\label{opt_appendix_4}
\hspace*{-3mm}\begin{array}{lcl}
\!\! &\!\! \max\limits_{ \widetilde{\bm{F}}}\!\! &\!\!\!  {\widetilde{\bm{F}}}^{\rm H}\widehat{\bm{H}}^{\rm H}
  \widehat{\bm{H}} \widetilde{\bm{F}} , \\
 \!\! & \!\!{\rm{s.t.}}\!\! &\!\!\! {\rm Tr}\big({\widetilde{\bm{F}}}
 {\widetilde{\bm{F}}}^{\rm H}\big)\! \le\! P, \  {\widetilde{\bm{F}}} {\widetilde{\bm{F}}}^{\rm H} \preceq \tau \frac{\sigma_{\rm{n}}^2\bm{I}+P\bm{\Psi}}{\sigma_{\rm{n}}^2+{\rm Tr}\big(\bm{F}\bm{F}^{\rm H}\bm{\Psi}\big)}.
\end{array}
\end{align}
For the final matrix inequality, we have the following lower bound of the righthand side term, i.e.,
\begin{align}
\label{lower_bound}
 \tau\frac{\sigma_{\rm{n}}^2+P\lambda_{\min}(\bm{\Psi})}{\sigma_{\rm{n}}^2+P\lambda_{\max}(\bm{\Psi})}\bm{I} \preceq \tau\frac{\sigma_{\rm{n}}^2\bm{I}+P\bm{\Psi}}{\sigma_{\rm{n}}^2+{\rm Tr}\big(\bm{F}\bm{F}^{\rm H}\bm{\Psi}\big)},
\end{align}where the equality holds when $\bm{\Psi} \propto \bm{I}$. Based on the lower bound in (\ref{lower_bound}), for the Pareto optimal solutions of the following optimization problem, the corresponding objective is a lower bound of that in (\ref{opt_appendix_3})
\begin{align}\label{opt_appendix_5}
\hspace*{-3mm}\begin{array}{lcl}
\!\! &\!\! \max\limits_{ \widetilde{\bm{F}}}\!\! &\!\!\!  {\widetilde{\bm{F}}}^{\rm H}\big(\sigma_{\rm{n}}^2\bm{I}+P\bm{\Psi}\big)^{-\frac{1}{2}}\widehat{\bm{H}}^{\rm H}
  \widehat{\bm{H}}\big(\sigma_{\rm{n}}^2\bm{I}+P\bm{\Psi}\big)^{-\frac{1}{2}} \widetilde{\bm{F}} , \\
 \!\! & \!\!{\rm{s.t.}}\!\! &\!\!\! {\rm Tr}\big({\widetilde{\bm{F}}}
 {\widetilde{\bm{F}}}^{\rm H}\big)\! \le\! P, \ {\widetilde{\bm{F}}} {\widetilde{\bm{F}}}^{\rm H} \preceq \tau\frac{\sigma_{\rm{n}}^2+P\lambda_{\max}(\bm{\Psi})}{\sigma_{\rm{n}}^2+P\lambda_{\min}(\bm{\Psi})}\bm{I}.
\end{array}
\end{align} It is obvious that based on Lemma~\ref{L2} the Pareto optimal solutions of (\ref{opt_appendix_5}) satisfy the following structure
\begin{align}
\widetilde{\bm{F}}={\bm{V}}_{\widetilde{\bm{H}}}\bm{\Lambda}_{\widetilde{\bm{F}}}{\bm{U}}_{\rm{Arb}}^{\rm{H}}
\end{align}where the unitary matrix ${\bm{V}}_{\widehat{\bm{H}}}$ is defined based on the SVD
\begin{align}
\widehat{\bm{H}}\big(\sigma_{\rm{n}}^2\bm{I}+P\bm{\Psi}\big)^{-\frac{1}{2}}=
   {\bm{U}}_{\widetilde{\bm{H}}}\bm{\Lambda}_{\widetilde{\bm{H}}}{\bm{V}}_{\widetilde{\bm{H}}}^H, \text{with}, \bm{\Lambda}_{\widetilde{\bm{H}}} \searrow .
 \end{align}  The diagonal elements of the rectangular diagonal matrix $\bm{\Lambda}_{\widetilde{\bm{F}}_k}$ are smaller than
$\sqrt{\tau{(\sigma_{\rm{n}}^2+P\lambda_{\min}(\bm{\Psi}))}/{(\sigma_{\rm{n}}^2+P\lambda_{\max}(\bm{\Psi}))}}$. Based on the definition in (\ref{F_tilde_app}), $\bm{F}$ equals
\begin{align}
\hspace*{-2mm}\bm{F}\hspace*{-1mm}=\hspace*{-1mm}\frac{\sigma_{\rm{n}}
\widetilde{\bm{\Psi}}^{-\frac{1}{2}}
{\bm{V}}_{\widetilde{\bm{H}}}\bm{\Lambda}_{\widetilde{\bm{F}}}{\bm{U}}_{\rm{Arb}}^{\rm{H}}}
{\left(1\hspace*{-1mm}-\hspace*{-1mm}
{\rm{Tr}}\left(\widetilde{\bm{\Psi}}^{-\frac{1}{2}}
\bm{\Psi}\widetilde{\bm{\Psi}}^{-\frac{1}{2}}{\bm{V}}_{\widetilde{\bm{H}}}
\bm{\Lambda}_{\widetilde{\bm{F}}}\bm{\Lambda}_{\widetilde{\bm{F}}}^{\rm{H}}
{\bm{V}}_{\widetilde{\bm{H}}}^{\rm{H}}\right)\right)^{\frac{1}{2}}},
\end{align} where $\widetilde{\bm{\Psi}}=\sigma_{\rm{n}}^2\bm{I}+P\bm{\Psi}$.

\end{document}